\pdfoutput=1
\documentclass[11pt]{article}



\usepackage{graphicx}
\usepackage{amssymb,xfrac}
\usepackage{amsthm,amsmath,dsfont}
\usepackage{thmtools}
\allowdisplaybreaks

\usepackage{lineno}

\usepackage[colorlinks=true, citecolor=blue]{hyperref}
\usepackage[algoruled,titlenotnumbered]{algorithm2e}
\usepackage{algorithmic}
\usepackage{wrapfig}
\usepackage[sort&compress,numbers]{natbib}
\usepackage{comment}
\usepackage{multirow}
\usepackage{amsmath,bm}
\usepackage{graphicx}
\usepackage[lofdepth,lotdepth]{subfig}



\usepackage{float}
\usepackage{subfig}
\usepackage{wrapfig}
\usepackage{framed,xcolor}
\usepackage{newfloat}
\usepackage[xcolor,framemethod=TikZ]{mdframed}
\usepackage{mathpazo}
\usepackage{amsthm}
\usepackage{thmtools}
\usepackage{stmaryrd}
\usepackage{mathtools}

\usepackage[colorinlistoftodos]{todonotes}

\newcommand{\RR}{\mathbb{R}}

\newcommand{\EE}{\mathbb{E}}
\newcommand{\EER}{\mathbb{E}^{\boldsymbol{\rho}(t)}}
\newcommand{\PP}{\mathbb{P}}
\newcommand{\GX}{\mathcal{G}}
\newcommand{\AX}{\mathcal{A}}
\newcommand{\AXS}{\mathcal{A}^{\mathrm{src}}}
\newcommand{\NX}{\mathcal{N}}
\newcommand{\dd}{\mathrm{d}}
\newcommand{\IN}{\mathrm{in}}
\newcommand{\OUT}{\mathrm{out}}

\newcommand{\MM}{\mathcal{M}}

\newtheoremstyle{theoremdd}
{\topsep}
{\topsep}
{\itshape}
{0pt}
{\fontfamily{cmss}\selectfont\bfseries}
{.}
{ }
{\thmname{#1}\thmnumber{ #2}\thmnote{ (#3)}}

\theoremstyle{theoremdd}
\newtheorem{theorem}{Theorem}[section]
\newtheorem{lemma}{Lemma}

\newtheorem{corollary}{Corollary}
\newtheorem{definition}{Definition}
\newtheorem{remark}{Remark}

\renewenvironment{proof}{\noindent {\bfseries \fontfamily{cmss}\selectfont Proof.}}{\qed}

\newcounter{daggerfootnote}

\usepackage{array}
\makeatletter
\newcommand{\thickhline}{%
    \noalign {\ifnum 0=`}\fi \hrule height 1.5pt
    \futurelet \reserved@a \@xhline
}
\newcolumntype{"}{@{\hskip\tabcolsep\vrule width 1pt\hskip\tabcolsep}}
\makeatother


\mdfdefinestyle{myStyle}{roundcorner=5pt,backgroundcolor=brown!20,linecolor=red!40!black,linewidth=1.5pt}



\usepackage{titlesec}
\titleformat*{\section}{\fontfamily{cmss}\selectfont\large\bfseries}
\titleformat*{\subsection}{\fontfamily{cmss}\selectfont\normalsize\bfseries}
\titleformat*{\subsubsection}{\fontfamily{cmss}\selectfont\normalsize}

\usepackage[left=1.2 in, right=1.2 in, top=1.00 in, bottom=1.00 in]{geometry}
\graphicspath{{./Figures/}}

\begin{document}
	
		
		
		\title{\fontfamily{cmss}\selectfont Position weighted backpressure intersection control for urban networks
		\\ \vspace{0.1 in} \normalsize Li Li$^\mathrm{a}$, Saif Eddin Jabari$^\mathrm{a,b,}$\footnote{Corresponding author, Email: \url{sej7@nyu.edu}, Website: \url{https://engineering.nyu.edu/faculty/saif-eddin-jabari}}}
		

		

\author{\small $^\mathrm{a}$ Tandon School of Engineering, New York University, Brooklyn, NY 11201, U.S.A.
	\\ \small $^\mathrm{b}$ Division of Engineering, New York University Abu Dhabi, Saadiyat Island, 
	\\ \small P.O. Box 129188, Abu Dhabi, U.A.E.
}

\date{}

\maketitle

{ \fontfamily{cmss}\selectfont\large\bfseries		
\begin{abstract}
{ \normalfont\normalsize
	Decentralized intersection control techniques have received recent attention in the literature as means to overcome scalability issues associated with network-wide intersection control.  Chief among these techniques are backpressure (BP) control algorithms, which were originally developed of for large wireless networks. In addition to being light-weight computationally, they come with guarantees of performance at the network level, specifically in terms of network-wide stability.   The dynamics in backpressure control are represented using networks of point queues and this also applies to all of the applications to traffic control.  As such, BP in traffic fail to capture the spatial distribution of vehicles along the intersection links and, consequently, spill-back dynamics.
	
	This paper derives a position weighted backpressure (PWBP) control policy for network traffic applying continuum modeling principles of traffic dynamics and thus capture the spatial distribution of vehicles along network roads and spill-back dynamics.  PWBP inherits the computational advantages of traditional BP.  To prove stability of PWBP, (i) a Lyapunov functional that captures the spatial distribution of vehicles is developed; (ii) the capacity region of the network is formally defined in the context of macroscopic network traffic; and (iii) it is proved, when exogenous arrival rates are within the capacity region, that PWBP control is network stabilizing.  We conduct comparisons against a real-world adaptive control implementation for an isolated intersection.  Comparisons are also performed against other BP approaches in addition to optimized fixed timing control at the network level.  These experiments demonstrate the superiority of PWBP over the other control policies in terms of capacity region, network-wide delay, congestion propagation speed, recoverability from heavy congestion (outside of the capacity region), and response to incidents. 
	
	\vspace{0.1in}
	
	\noindent \textbf{\fontfamily{cmss}\selectfont Keywords}: Decentralized control, backpressure, max weight, stochastic traffic flow, urban networks, intersection control
	
}
\end{abstract}
}
		
		
	
	
	
\section{Introduction}
\label{S:intro}
Various approaches have been proposed to optimize signal timing for isolated intersections, including mixed-integer linear models, rolling horizon approaches, and store-and-forward models based on model predictive control; see \citep{dujardin2011multiobjective,gartner1983opac,tettamanti2010distributed,mirchandani2001real,you2013coordinated,ma2013coordinated} for examples.  On the one hand, isolated intersection approaches fail to account for spillback from adjacent road segments, which can eventually lead to gridlock throughout a road network \citep{cervero1986unlocking}.
On the other hand, centralized techniques that include coordination between intersection controllers \citep{heung2005coordinated, gettman2007data} are not scalable and difficult to implement in real-world/real-time settings \citep{papageorgiou2003review}.  For example, ACS-Lite \citep{gettman2007data} can handle no more than 12 intersections in real-time.  

Recent articles in traffic control have focused on connected-automated vehicles (CAVs). This is well justified considering the vast opportunities and challenges that CAVs have to offer.  The ability to control both trajectories and signals is one such opportunity that CAVs have to offer; we refer to \citep{li2017recasting,yu2018integrated,feng2018spatiotemporal} and references therein for recent examples.  For more information we refer to \citep{guo2019urban}, a recent review article that covers many aspects of intersection control.

In parallel, decentralized control techniques have been proposed to overcome the scalability issues associated with network control optimization.  Decentralized control techniques can be characterized as adaptive control techniques, which have been shown to have some favorable properties in grid-type networks, e.g., \cite{gayah2014impacts} demonstrated that adaptive control can alleviate gridlock and promote an even distribution of traffic in moderately congested conditions.  Decentralized control techniques expect controllers to be able to measure/estimate local traffic conditions in real-time.  This information includes expected traffic demand at the intersection in the next cycle for heuristic approaches, e.g., \citep{smith1980local, lammer2008self, lammer2010self, smith2011dynamics} or the queue sizes along the intersection arcs in backpressure (BP) based approaches \citep{wongpiromsarn2012distributed, varaiya2013max, xiao2014pressure, le2015decentralized}.  According to \citep{de2011traffic}, control strategies that use traffic conditions along both upstream and downstream arcs are more efficient and reliable than those that utilize upstream traffic conditions only.  BP based approaches are prime examples of techniques that utilize both upstream and downstream information.

{\color{black} The most important feature of BP-based approaches is that, while they are decentralized (applied at the isolated intersection level), they come with provable guarantees of stability at the network level.  We refer to \citep{varaiya2013max} for a discussion of the distinguishing features of BP-based approaches for signal control.} 
They were first independently proposed in \citep{wongpiromsarn2012distributed} and \citep{varaiya2013max} and subsequently refined to incorporate signal timing principles in \citep{le2015decentralized} (specifically related to the sequencing of signal phases). The stability analyses in these original applications of BP to intersection control are based on seminal work in wireless networks \citep{tassiulas1992stability} (see \citep{neely2005dynamic, georgiadis2006resource, neely2010stochastic} for more details).

 {\color{black} Backpressure based techniques have two key characteristics that make them attractive for intersection control.  The first is that they do not require knowledge of demands as discussed in \citep{wongpiromsarn2012distributed,varaiya2013max,le2015decentralized}.  The second is that they naturally decompose by intersection: A distinguishing feature of traffic network control problems is that each movement in the network can only be associated with one intersection.  Therefore, the decision variables assigned to intersection movements only appear in equations pertaining to the node representing that intersection. Consequently, the optimization problems decompose naturally by node.  It is this property that renders BP based approaches scalable to large networks in the case of intersection control.} However, as BP was originally developed for wireless communications networks, the assumptions are not tailored to traffic control problems and in some cases the assumptions are not suitable for traffic networks. Specifically, (i) point queues and (more critically) queues with infinite buffer capacities, (ii) separate queues for different commodities (corresponding to vehicles with different turning desires in traffic) and no interference between commodities.  A consequence of the first assumption is that the models do not account for the \textit{spatial distribution} of the queues, which has great impact on traffic control. For instance, \autoref{F:queues} illustrates three different spatial distributions of vehicles with the \emph{same queue size}.  
\begin{figure}[h!]
	\centering
	
	\resizebox{0.46\textwidth}{!}{%
		\includegraphics{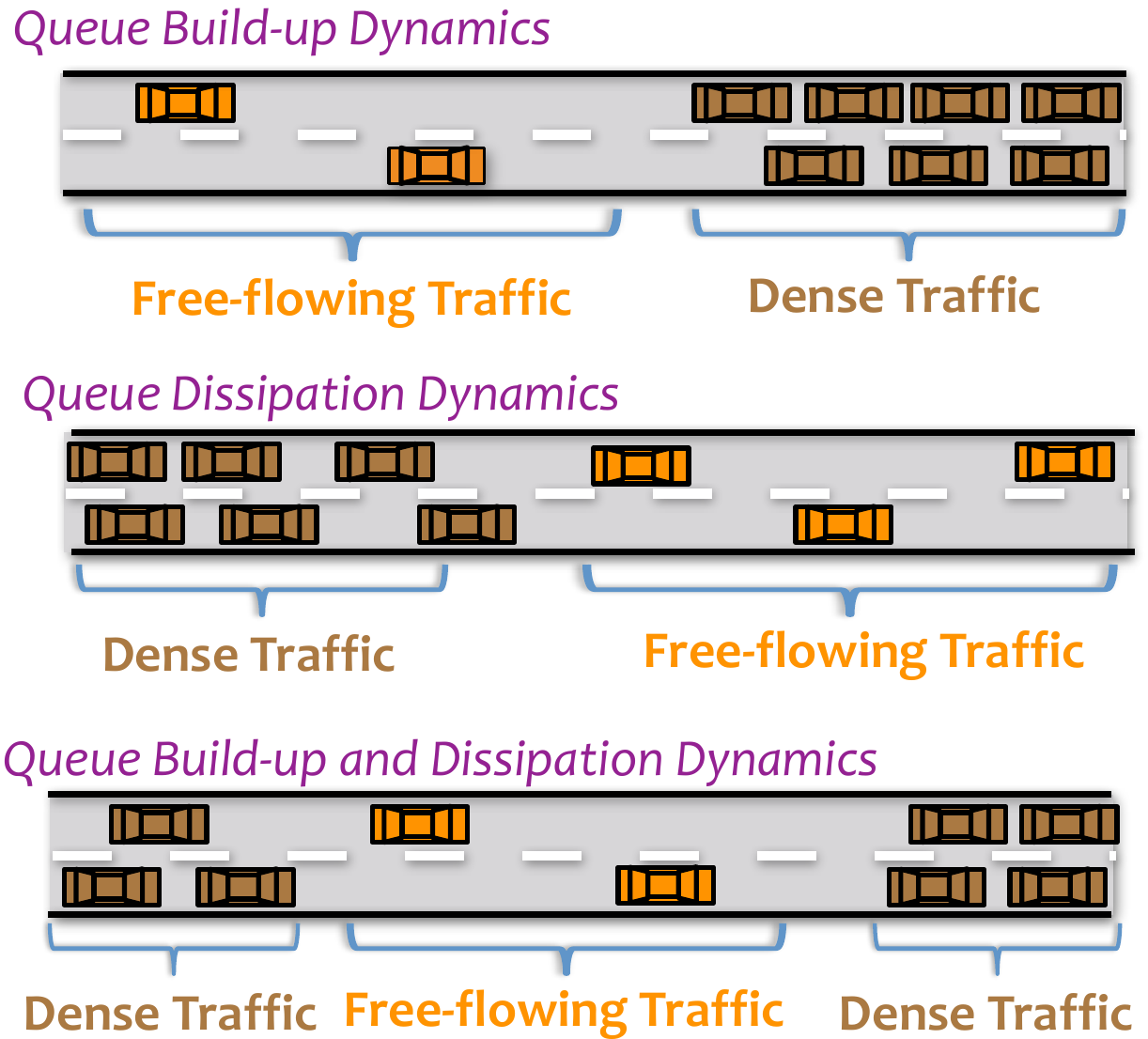} \hspace{0.2in}}
			
	\caption{Three different spatial distributions of queues with same queue size.}
	\label{F:queues}
\end{figure}
Clearly, signal control decisions at the downstream end should be very different for these three cases. A key point here is how vehicle flux out of road segments are affected by the vehicle distribution along the length of the road. While communications networks assume that such ``transmission rates'' will not be influenced by the distribution of packets along the channels, in vehicular traffic the situation is quite different.  Point queueing techniques suffer this same drawback. For example, the flow rate over the course of a short time interval (e.g. 10 sec) at the downstream end of the road segment depicted in \autoref{F:queues} should be very different in the three cases. 
\begin{figure}[h!]
	\centering	
	\resizebox{0.65\textwidth}{!}{%
		\includegraphics{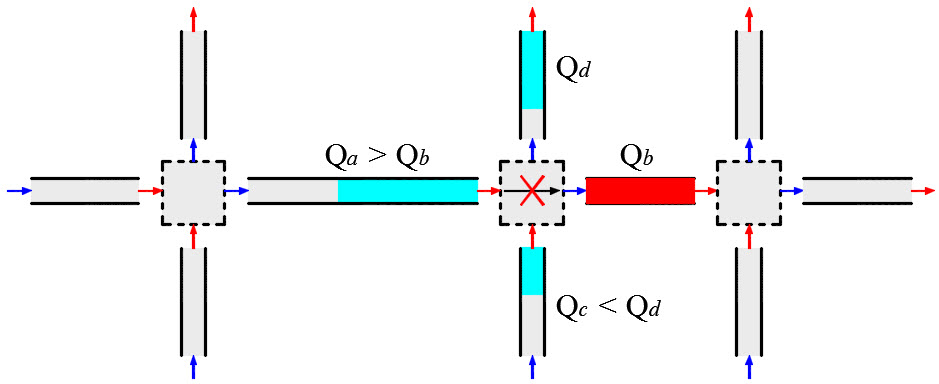} \hspace{0.2in}}
	
	{ \small	(a) }
	\vspace{0.1in}
	
	\resizebox{0.65\textwidth}{!}{%
		\includegraphics{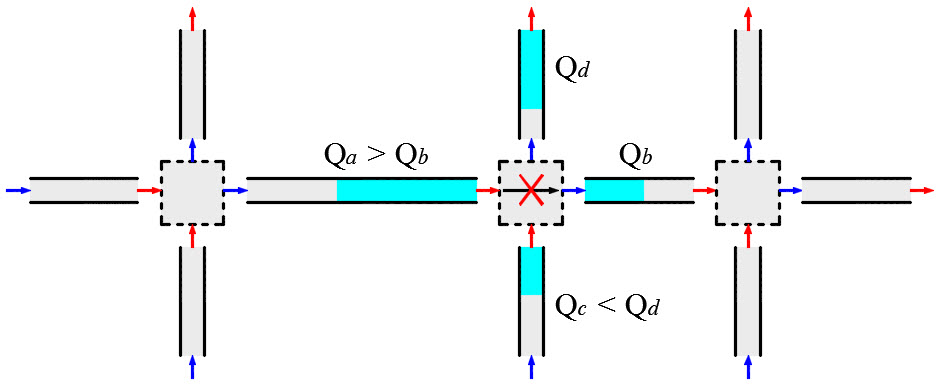} \hspace{0.2in}}
	
	{ \small	(b) }
	\vspace{0.1in}
	
	\resizebox{0.65\textwidth}{!}{%
		\includegraphics{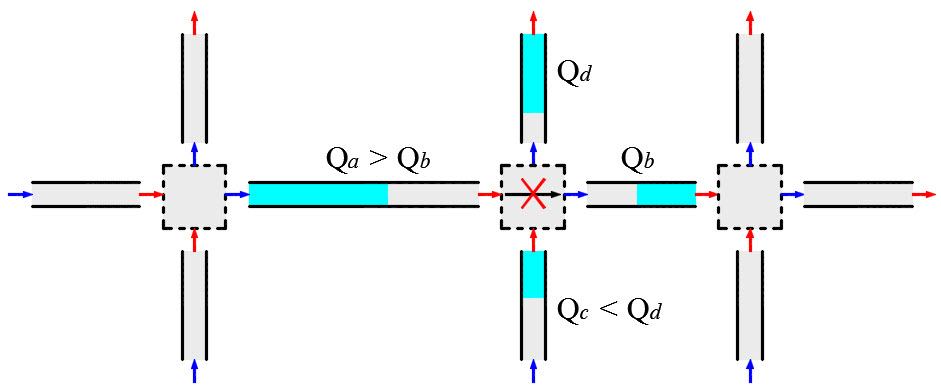}}
	
	{ \small	(c) }
	
	\caption{Three non-work conserving cases (adopted from \citep{gregoire2015capacity} and reproduced)}
	\label{F:NWC}
\end{figure}
A serious consequence of assumption (i) is loss of work conservation, in which flow is prohibited across the intersection despite the availability of (spatial) capacity in the outbound roads. \autoref{F:NWC} shows three cases in which BP control favors the eastbound approach ($Q_a$ to $Q_b$), despite the fact that flow rates will be zero along this approach if given priority.  Recognizing the finite (spatial) capacity issue, \citep{gregoire2015capacity} proposed an improvement, referred to as \textit{capacity aware back pressure} (CABP) control.  However, due to failure to account for the queue's spatial distribution, their approach can only avoid the case illustrated in \autoref{F:NWC}a, but not the two depicted in \autoref{F:NWC}b and \autoref{F:NWC}c (in the former, the downstream queue is concentrated at the ingress of the road segment). Assumption (ii) could be easily violated in traffic networks, e.g., shared lanes. Even when there are no shared lanes, road widening near the exits of intersection inbound roads, a very common geometrical features in urban networks, can create bottlenecks at the lane-branching point.  Different turning movements (commodities) interact at the bottleneck, and one queue may block another if it gets too long as illustrated in \autoref{F:bottleneck}. Work conservation may also be violated here, as traditional BP (and CABP) control would favor the through movement, despite the fact that no through vehicles can actually cross the intersection. This, in fact, serves as one physical mechanism that can lead to the scenario depicted in \autoref{F:NWC}c. Loss of work conservation is a result of zero outflow if the through movement is given priority and the prime culprit is the fact that the spatial distribution of vehicles is not taken into consideration.
\begin{figure}[h!]
	\centering
	\resizebox{.9\textwidth}{!}{%
		\includegraphics{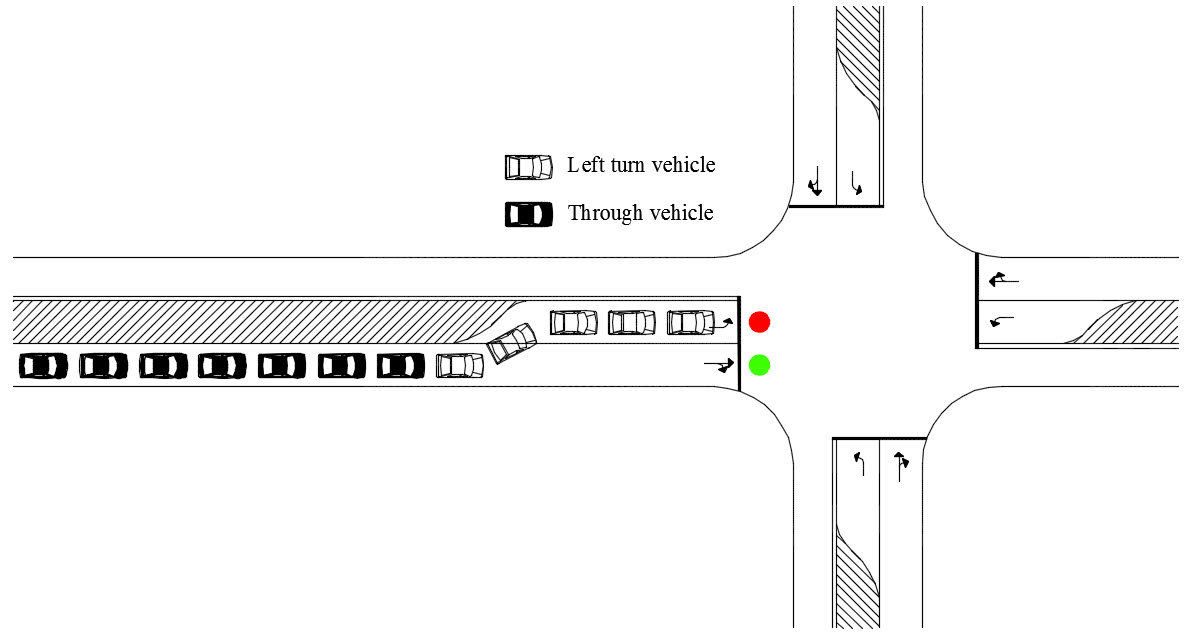}}
	
	\caption{Bottleneck at the lane-branching point.} 
	\label{F:bottleneck}
\end{figure}

This paper proposes a decentralized intersection control technique that is based on  macroscopic traffic theory to overcome the issues described above.  We refer to this approach as \textit{position-weighted backpressure} (PWBP).  PWBP considers the spatial distribution of vehicles along the road, applying higher weights to queues that extend to the ingress of the road, thereby accounting for the possibility of spillback. 
Rates of flow across the intersection depend on both the control (signal status) and vehicle densities profiles (spatial distribution) along the inbound and outbound roads, capturing diminished flow rates at signal phase startups (startup lost times).  That is, we employ a node model for intersection dynamics. We perform comparisons in isolated intersection settings against a real-world implementation of adaptive control, namely SCOOT (Split, Cycle and Offset Optimisation Technique),  as well as in a network setting against fixed intersection control, standard BP, and CABP. We demonstrate superiority of PWBP in terms of capacity region, delay, congestion propagation speed, recoverability from heavy congestion and response to an incident.  The type of control proposed is applied to modern day traffic lights, but it can also be thought of as a prioritization scheme for connected vehicles at network intersections that can guarantee network stability.  In both cases, when accurate measurement of the distribution of vehicles along the roads is not possible, one may employ a light-weight traffic state estimation technique.  We refer to \citep{jabari2013stochastic,seo2017traffic,zheng2018traffic} for recent examples.

The remainder of this paper is organized as follows: \autoref{S:dynamics} describes the traffic dynamics model, macroscopic intersection control, and the proposed PWBP control policy. \autoref{S:stability} rigorously demonstrates the network-wide stability properties of the PWBP approach using Lyapunov drift techniques. A comparison with adaptive control at the isolated intersection level and simulation experiments at the network level are provided in \autoref{S:simulation} and \autoref{S:Conc} concludes the paper.

\section{Problem formulation}
\label{S:dynamics}

\subsection{Notation}
\label{Ss:boundaries}
Consider an urban traffic network represented by the directed graph $\GX = (\NX,\AX)$, where $\NX$ is a set of network nodes, representing intersections and $\AX \subset \NX \times \NX$ is a set of network arcs, representing road segments.  Each element of $\AX$ is in one-to-one correspondence with an ordered pair of elements in $\NX$.  For each node, $n\in \NX$, $\Pi_n$ and $\Sigma_n$ denote, respectively, the set of (predecessor) arcs terminating in $n$ and the set of (successor) arcs emanating from $n$.  We also use $\Pi(a) \subseteq \AX$ to denote the set of predecessor arcs to arc $a \in \AX$.  That is, if $n$ is the ingress node of arc $a$, then $\Pi(a) = \Pi_n$.  Similarly, $\Sigma(a)$ is the set of successor arc to arc $a$.

Fictitious source arcs are appended to the network to represent exogenous network arrivals.  A new junction with indegree zero and outdegree one is created for each exogenous inflow and the new source arc connects this new node to the network boundary node; see \autoref{f_src1}.
\begin{figure}[h!]
	\centering
	\resizebox{0.25\textwidth}{!}{%
		\includegraphics{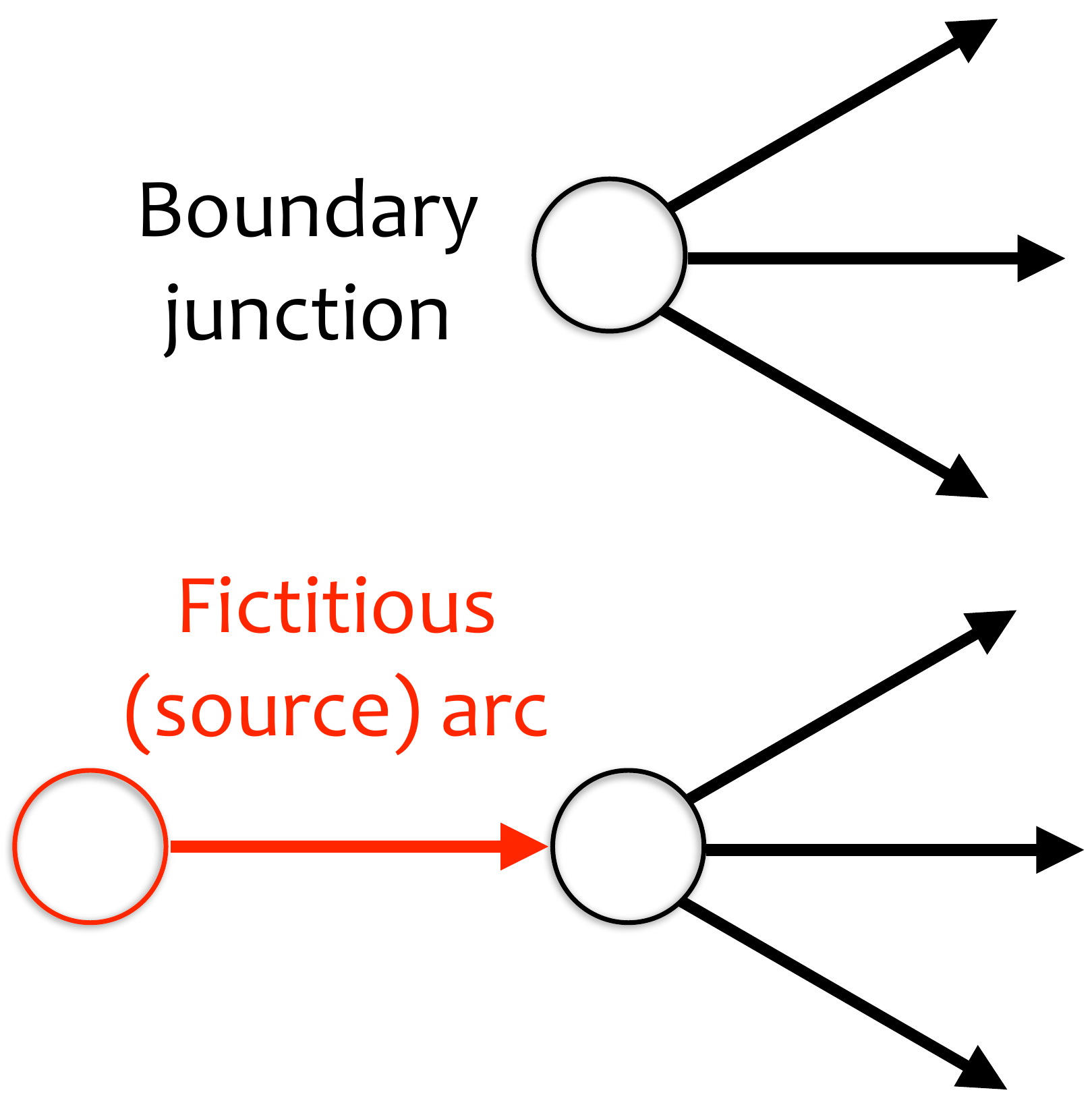}}
	
	\caption{Fictitious boundary source arcs} \label{f_src1}
\end{figure}
When exogenous inflows occur at the interior of the network (i.e., at a junction with non-zero in-degree) representing, for example, a parking ramp/lot, the associated arc can be broken into two arcs with a new node placed at the position of the merge; see \autoref{f_src2}.
\begin{figure}[h!]
	\centering
	\resizebox{0.45\textwidth}{!}{%
		\includegraphics{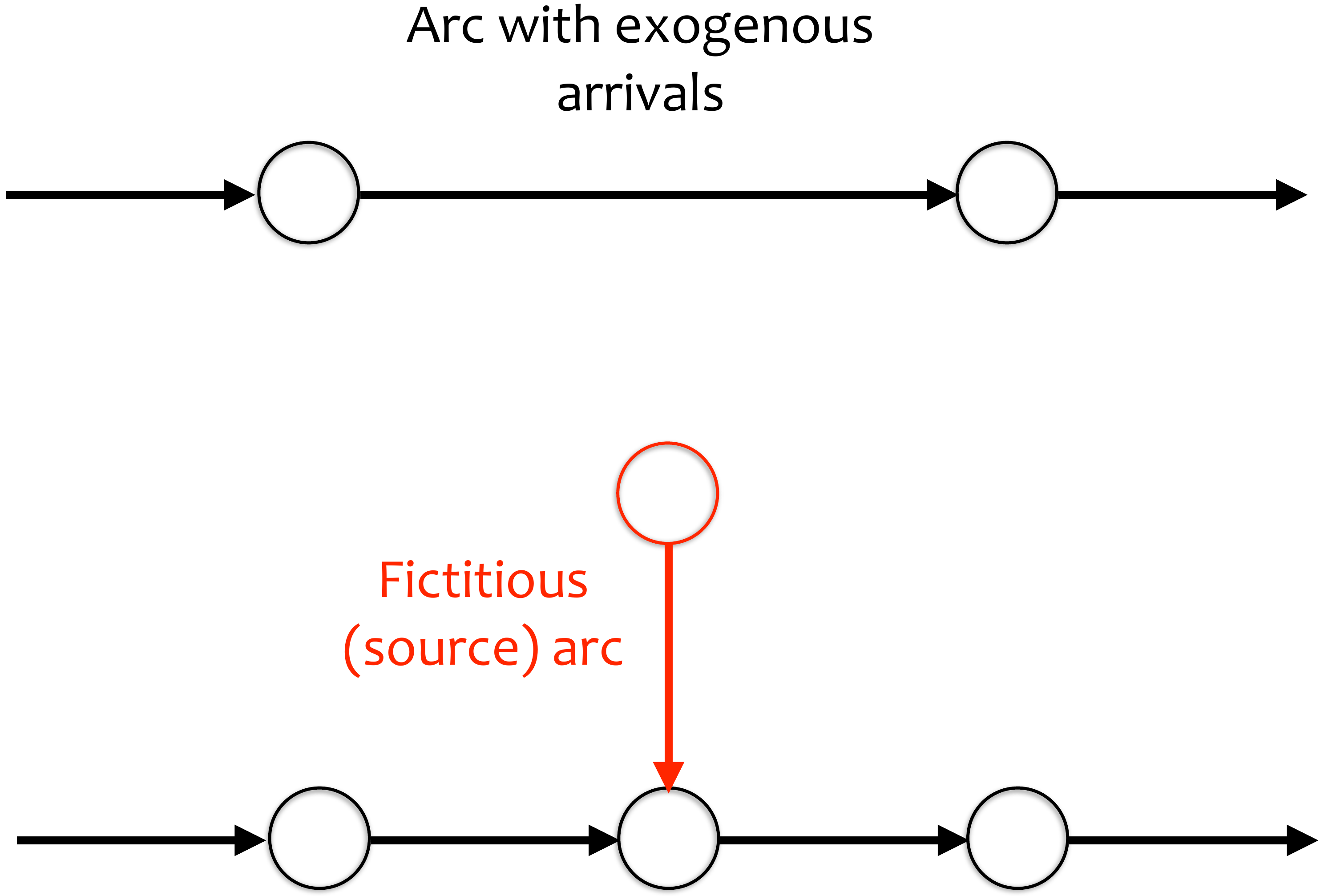}}
	
	\caption{Fictitious interior source arcs} \label{f_src2}
\end{figure}
Source arcs will be assumed to have infinite jam densities (i.e., they serve as fictitious reservoirs), but the flow rates in and out of these arcs are assumed to be finite (i.e., finite capacities).  They shall also be assumed to have zero physical length.  Therefore, the traffic states associated with fictitious source arcs are point queues concentrated at the source node.  We shall denote the set of (fictitious) source arcs by $\AXS \subset \AX$.  Arcs in $\AXS$ serve two purposes: the first (mentioned above) is to model exogenous network inflows.  The second is to capture instabilities in the network: Roads have finite spatial capacities and traffic densities are always finite.  Source arcs with infinite storage capacities are capable of capturing network instabilities.  For example, a signal control policy that results in instabilities is one where congestion propagates to the source arcs and builds up there and exogenous arrivals can no longer be accommodated.

\subsection{Stochastic arc dynamics}
We denote the length of each arc $a \in \AX$ by $l_a$.  With slight notation abuse, the upstream-most position (the entrance position) for each arc $a$ in the network is $x = 0$, while the downstream-most position (the arc exit position) is $x=l_a$ (that these coordinates pertain to arc $a$ only should be understood implicitly).  We consider a multi-commodity framework, where $\rho_a^b(x,t)$ denotes the traffic density at position $x$ along arc $a$ that is destined to outbound arc $b\in \Sigma(a)$ at time instant $t$.  Similarly, $q_a^b(x,t)$ denotes the flow rate at $x$ along $a$ that is destined to $b$ at time $t$.  We define the state of the system at time $t$ as the vector of commodity-specific network traffic densities.  This is denoted as\footnote{We use the `$\cdot$' notation as a function argument to represent the entire curve in the dimension in which it is used.  In other words, $\rho_a^b(\cdot,t)$ denotes the traffic density \textit{curve} along arc $a$ destined to adjacent arc $b$ at time instant $t$.} $\boldsymbol{\rho}(t) \equiv \{\rho_a^b(\cdot,t)\}_{(a,b)\in \MM}$.

On the interiors of network arcs, we have the following conservation equation: for each $a \in \AX$ and $b \in \Sigma(a)$
\begin{equation}
	\frac{\partial \rho_a^b(x,t)}{\partial t} = - \frac{\partial q_a^b(x,t)}{\partial x} ~~ x \in (0,l_a), t \ge 0. \label{E:consLaw}
\end{equation}
In a first order context, one sets $q_a^b(x,t) \equiv \phi_a^b(x,t) \mathcal{Q}_a\big( \rho_a(x,t) \big)$, where $\phi_a^b(x,t)$ is the fraction of vehicles at position $x$ along arc $a$ that is destined to arc $b$ at time $t$ and $\mathcal{Q}_a$ is a (stochastic) stationary flow-density relation pertaining to arc $a$.  In a higher order context, $q_a^b(x,t) = \rho_a^b(x,t) v_a(x,t)$, where $v$ is the macroscopic speed and 
\begin{align}
\frac{\dd v_a(x,t)}{\dd t} = \Big( \frac{\partial}{\partial t} + v_a(x,t) \frac{\partial}{\partial x} \Big) v_a(x,t) 
= A_a^{\mathrm{loc}} \Big( \rho_a(x,t), \mathcal{V}_a \big(\rho_a(x,t)\big), \frac{\partial \rho_a(x,t)}{\partial x}, \frac{\partial \mathcal{V}_a \big(\rho_a(x,t)\big)}{\partial x} \Big),
\end{align}
where $A_a^{\mathrm{loc}}$ are `local' macroscopic acceleration models \citep[Chapter 9]{treiber2013traffic} and $\mathcal{V}_a$ is a stationary stochastic speed-density relation. The stochasticity in $\mathcal{Q}_a$ and $\mathcal{V}_a$ is parametric, that is, they can be described as generalizations of equilibrium fundamental relations that capture heterogeneity in the driving population as described in \citep{jabari2014probabilistic}.  For example, a generalization of Newell's simplified relation \citep{newell2002simplified}:
\begin{equation}
	\mathcal{Q}_a(\rho) = \min \big\{\overline{v}_a \rho, w_a(\rho - \overline{\rho}_a) \big\},
\end{equation}
is one where the parameters $\overline{v}_a$, $w_a$, and $\overline{\rho}_a$, denoting free-flow speed, backward wave speed, and jammed traffic density, respectively, are random variables.  We refer to \citep{jabari2018stochastic,zheng2018traffic} for the properties of the stochastic dynamics that arise as a result of a parametric treatment.

\begin{remark}
	We make no assumptions about the relationship between flux and density. The proposed approach is equally valid in first and second order contexts.  The only assumptions we make are (i) flow conservation, (ii) probabilistic upper bounds on flux and density, and (iii) that arc parameters do not change along the length of the arc.  The last assumption is easy to honor in a general network by splitting arcs with varying parameters into more than one arc.
\end{remark}

\subsection{Boundary dynamics and intersection control}
\label{Ss:junctions}
At the arc boundaries, i.e., for $x \in \{0,l_a\}$, we employ a node model.  Node models represent the coupling between adjacent arcs and are responsible for capturing \textit{queue spillback dynamics}.

\medskip

\textbf{\fontfamily{cmss}\selectfont Notation}.  For each node $n \in \NX$, let $\MM_n$ denote the set of allowed movements between inbound and outbound road segments.  The set $\MM_n$ consists of ordered pairs $(a,b)$ such that $a \in \Pi_n$ and $b \in \Sigma_n$, i.e., $\MM_n \subseteq \Pi_n \times \Sigma_n$.  The set of all network movements is denoted by $\MM \equiv \MM_1 \sqcup \cdots \sqcup \MM_{|\NX|}$.  A signal phase consists of junction movements that do not conflict with one another.  We denote by $\mathcal{P}_n \subseteq 2^{\MM_n}$ the set of \textit{allowable} phases and by $\mathcal{P} \subseteq \otimes_{n \in \NX} \mathcal{P}_n$ the set of allowable network phasing schemes.  Essentially, an allowable phase is one that does not allow crossing conflicts and only allows merging conflicts between a \textit{protected} movement and a \textit{permitted} movement.  Example allowable phases are depicted in \autoref{F:phases}. 
\begin{figure}[h!]
	\centering
	\resizebox{0.53\textwidth}{!}{%
		\includegraphics{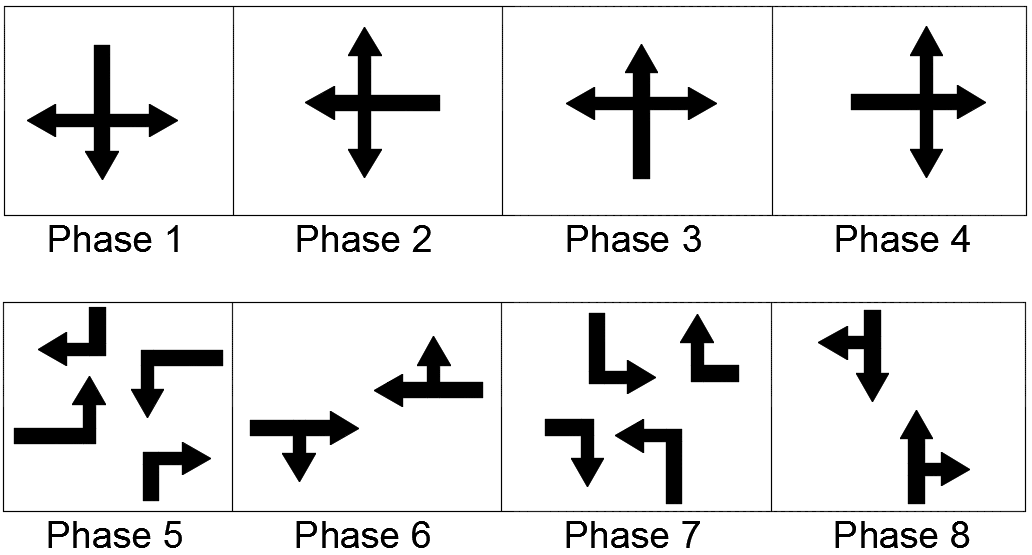}}
	
	\caption{Example phases for a four-leg isolated intersection.} 
	\label{F:phases}
\end{figure}
Boundary flows arise as solutions to node models, typically stated as optimization problems, and depend on traffic densities at the boundaries of all network arcs that interface at the boundary.  They also depend on the control policy at the boundary.  The dependence on intersection control is described below. 

\medskip

\textbf{\fontfamily{cmss}\selectfont Exogenous arrivals}.  For (fictitious) source arcs, we assume random arrivals; for commodity $a \in \AXS$ and $b \in \Sigma(a)$, let $A_a^b(t)$ be a random (cumulative) arrival process with (instantaneous) rate $\lambda_a^b(t) = \EE \frac{\dd A_a^b(t)}{\dd t}$.  Thus, exogenous inflow rates into source arcs are given by\footnote{The processes $A_a^b$ may have jumps.  To be more accurate in such situations, one defines $$\Lambda_a^b(t_1,t_2) \equiv \EE \int_{t_1}^{t_2} \dd A_a^b(t) = \EE \big( A_a^b(t_2) - A_a^b(t_1) \big) = \int_{t_1}^{t_2} \lambda_a^b(t) \dd t.$$  The boundary flux is then given by $$ \underset{\Delta t \downarrow 0}{\lim} \int_{t}^{t + \Delta t} \dd A_a^b(t^{\prime}) = \underset{\Delta t \downarrow 0}{\lim} A_a^b(t + \Delta t) - A_a^b(t);$$ that is, the inflow is the jump size at time $t$.
}
\begin{equation}
	\frac{\dd A_a^b(t)}{\dd t} ~ \mbox{ for } a \in \AXS, b \in \Sigma(a).
\end{equation}

\medskip

\textbf{\fontfamily{cmss}\selectfont Intersection control}.  Let $p_{a,\IN}(t)$ and $p_{a,\OUT}(t)$ denote the upstream and downstream control state at boundaries of arc $a$. The control state at time $t$ is defined as the set of movements that are active at time $t$ as implied by the active phases at the network junctions (e.g., one of the eight phases shown in \autoref{F:phases}).  The boundary flows are given by $q_a^b(0,t) = q_{a,\IN}^b\big(p_{a,\IN}(t)\big)$ and $q_a^b(l_a,t) = q_{a,\OUT}^b\big(p_{a,\OUT}(t)\big)$, where $q_{a,\IN}^b$ and $q_{a,\OUT}^b$ are boundary flux functions, which depend on the (boundary) control variables and, implicitly, on the node dynamics (for instance, $q_{a,\IN}^b$ and $q_{a,\OUT}^b$ cannot exceed local supplies and demands at the arc boundaries). A more accurate way to denote these functions would include the traffic densities adjacent to the boundary, for example $q_{a,\IN}^b\big(p_{a,\IN}(t), \rho_a(0,t), \{\rho_c^b(l_c,t)\}_{c \in \Pi(a)}\big)$ in the case of inbound flow function. We drop the dependence on traffic densities to minimize clutter in our notation.

We denote by $q_{a,b}\big(p_{b,\IN}(t)\big)$ or equivalently $q_{a,b}\big(p_{a,\OUT}(t)\big)$ the rate of flow that departs arc $a \in \Pi(b)$ into arc $b$ at time $t$.  These are related to the commodity flows at the arc boundaries as follows:
\begin{equation}
	q_{a,\IN}^b\big(p_{a,\IN}(t)\big) = \pi_{a,b}(t)\sum_{c \in \Pi(a): (c,a) \in \MM} q_{c,a}\big(p_{a,\IN}(t)\big) \label{E:boudary1}
\end{equation}
and
\begin{equation}
	q_{a,\OUT}^b\big(p_{a,\OUT}(t)\big) = q_{a,b}\big(p_{a,\OUT}(t)\big), \label{E:boudary2}
\end{equation}
where $\pi_{a,b}(t)$ is the percentage of flow into $a$ at time $t$ that is destined to adjacent arc $b \in \Sigma(a)$.  Since $b$ immediately succeeds $a$, movement $(a,b)$ carries all flow out of $a$ that is destined to arc $b$ as stated in \eqref{E:boudary2}.

In the context of signalized urban networks, it was demonstrated in \citep{jabari2016node} that the node coupling, represented by movement flows, is given uniquely by
\begin{align}
	q_{a,b}\big(p_{a,\OUT}(t)\big) = \mathbb{1}_{\{ (a,b) \in p_{a,\OUT}(t) \}}\min \big\{ \delta_{a,b}\big( \rho_a^b(l_a,t) \big), \sigma_b\big( \rho_b(0,t) \big) \big\}, \label{E:nodeModel}
\end{align}
where $\mathbb{1}_{\{ (a,b) \in p_{a,\OUT}(t) \}}$ is an indicator function that returns 1 if the movement $(a,b)$ belongs to the phase $p_{a,\OUT}(t)$ and returns 0 otherwise, $\delta_{a,b}$ is a commodity-specific (local) demand function that depends on the traffic density at the egress of arc $a$, $\sigma_b$ is a (local) supply function that depends on the total traffic density at the ingress of arc $b$: 
\begin{align}
	\rho_b(0,t) = \sum_{c: (b,c) \in \MM} \rho_b^c(0,t).
\end{align}  
\textit{Note that we adopt modified demand functions in order to account for startup lost times}; see \citep{jabari2016node} and references therein for more details. The local demand and supply functions are derived from the stationary flow-density relations $\mathcal{Q}_a$.  Thus, the source of randomness in $q_{a,\IN}^b\big(p_{a,\IN}(t)\big)$ and $q_{a,\OUT}^b\big(p_{a,\OUT}(t)\big)$ is also parametric (i.e., the stochasticity is driven by the random parameters). 
Finally, at the arc boundaries the conservation law \eqref{E:consLaw} is given, for $a \in \AX / \AXS$, by
\begin{align}
\frac{\partial \rho_a^b(x,t)}{\partial t} =- \frac{\partial q_a^b(x,t)}{\partial x} = \left\{
\begin{array}{ll} 
q_{a,\IN}^b\big(p_{a,\IN}(t)\big) - q_a^b(0,t) &  x = 0 \\
& \\
q_a^b(l_a,t) - q_{a,\OUT}^b\big(p_{a,\OUT}(t)\big) & x = l_a
\end{array} \right. \label{E:Boundary}
\end{align}
and for $a \in \AXS$ by
\begin{align}
\frac{\dd \rho_a^b(t)}{\dd t} = - \frac{\dd q_a^b(t)}{\dd x} = \frac{\dd A_a^b(t)}{\dd t} - q_{a,\OUT}^b\big(p_{a,\OUT}(t)\big). \label{E:srcBoundary}
\end{align}
Since $q_a^b(0,t) = q_{a,\IN}^b\big(p_{a,\IN}(t)\big)$ and $q_a^b(l_a,t) = q_{a,\OUT}^b\big(p_{a,\OUT}(t)\big)$, \eqref{E:Boundary} implies that discontinuities in flow rate can occur at the boundaries.  These can be due to the presence of a shock-front, but in the case of arc boundaries, they can also be due to changes in control status at time $t$ and or mismatches between local demands and supplies at the boundaries of the interfacing arcs. 

{\color{black}  We dropped the dependence on $x$ for traffic densities at source arcs.  For source arcs, traffic densities are concentrated at a point and $\rho_a^b(t)$ can be interpreted as the size of the queue at arc $a$ that is destined to downstream arc $b$ at time $t$.  This is equivalent to saying that dynamics at source arcs are governed by point queues (a.k.a. vertical queues).  Equation \eqref{E:srcBoundary} is a simple mass balance equation for multi-commodity queues in continuous time: change in queues size ($\dd \rho_a^b(t) / \dd t$) is equal to rate of inflow ($\dd A_a^b(t) / \dd t$) less the rate of outflow ($q_{a,\OUT}^b\big(p_{a,\OUT}(t)\big)$)\footnote{To see this in discrete time, multiply both sides of \eqref{E:srcBoundary} by $\dd \tau$ and integrate both sides over a short time interval from $t$ to $t + \Delta t$.  We get $$ \rho_a^b(t + \Delta t) = \rho_a^b(t) + \big(A_a^b(t + \Delta t) - A_a^b(t) \big) + \int_t^{t + \Delta t} q_{a,\OUT}^b\big(p_{a,\OUT}(\tau)\big) \dd \tau.$$ The queue size at $a$ destined to $b$ at time $t+\Delta t$ is equal to the queue size at $a$ destined to $b$ at time $t$ plus the cumulative number of vehicles that arrive to $a$ that are destined to $b$ during the time interval from $t$ to $t +\Delta t$, $\big(A_a^b(t + \Delta t) - A_a^b(t) \big)$, less the cumulative number of vehicles that depart $a$ destined to $b$  during the time interval from $t$ to $t+\Delta t$, $\int_t^{t + \Delta t} q_{a,\OUT}^b\big(p_{a,\OUT}(\tau)\big) \dd \tau$.}.}

\subsection{Network capacity region}
Under any network-wide phasing scheme, $p \in \mathcal{P}$, the traffic network can ``support'' arrival processes with certain rates.  Beyond these arrival rates, queues along the source arcs will grow indefinitely. For each $p \in \mathcal{P}$, we say that the network can support an arrival rate vector $\boldsymbol{\lambda}(p)  \in \RR_+^{\AXS}$ if
\begin{align}
	\underset{T \rightarrow \infty}{\lim} \sum_{a \in \AX} \frac{1}{T} \int_0^T \Big( \mathbb{1}_{\{ a \in \AXS \}} \lambda_a(p) + \mathbb{1}_{\{ a \in \AX / \AXS \}}q_{a,\IN}(p) - q_{a,\OUT}(p) \Big) \dd t = 0, \label{E:capReg1}
\end{align}
where with slight abuse of notation, $q_{a,\IN}(p)$ and $q_{a,\OUT}(p)$ are the inflow and outflow rates obtained when the \textit{network} phasing scheme $p$ is active.  
This is interpreted as follows: the phasing scheme $p$ is such that the total arc outflow can accommodate the total arc inflow \textit{in the long run}.  That is, when the initial conditions cease to have influence on performance.   Condition \eqref{E:capReg1} can equivalently be written as 
\begin{align}
	\underset{T \rightarrow \infty}{\lim} \sum_{a \in \AXS} \frac{1}{T} \int_0^T \Big( \lambda_a(p) - q_{a,\OUT}(p) \Big) \dd t = 0, \label{E:capReg2}
\end{align}
which states that, under network phasing scheme $p$, all inbound flows can be accommodated in the long run (i.e., when initial conditions no longer influence the flow rates).  Note that there is an implicit dependence on non-source arcs via the phasing scheme $p$.

In accord with \eqref{E:capReg1} and \eqref{E:capReg2}, each $p \in \mathcal{P}$ defines a set of admissible arrival rates; denote these (convex) polytopes by $\Omega(p)$.  Taking the union of these sets, $\cup_{p \in \mathcal{P}} \Omega(p)$, we get the vectors of all possible arrival rates that the network can support under \textit{all} $p \in \mathcal{P}$.  This is formally defined next.
\begin{definition}[Maximal throughput region]
	The maximal throughput region (a.k.a., capacity region) of the network, denoted by $\boldsymbol{\Lambda}$, is the convex hull of all sets of admissible flows.  That is,
	\begin{equation}
	\boldsymbol{\Lambda} \equiv \mathsf{Conv}\big( \underset{p \in \mathcal{P}}{\cup} \Omega(p) \big).
	\end{equation}
\end{definition}
Arrival rates that lie in $\boldsymbol{\Lambda}$ but \textbf{not} in $\cup_{p \in \mathcal{P}} \Omega(p)$ are interpreted as arrival rates that can be supported by switching between phasing schemes that lie in the latter (i.e., \textit{time-sharing}). A control policy that can support all possible arrival rates in $\boldsymbol{\Lambda}$ is referred to as a \textit{throughput-maximal control policy}.   We denote a control policy by a vector of network control states: at time $t$ the network control state is denoted by $p(t) \equiv [\cdots ~ p_{a,\IN}(t) ~~ p_{a,\OUT}(t) ~ \cdots ]^{\top}$, a policy is an entire curve $p(\cdot)$.

We give two examples to illustrate the notion of capacity region.  The first is the simple isolated intersection of two one-way streets depicted in \autoref{F:Ex1}a.  If Phase 1 is active, the maximum arrival rate that can be accommodated is $\lambda_1^{\max}$, the saturation flow rate of the Arc1-Arc3 (eastbound) movement; if Phase 2 is active, the maximum arrival rate that can be accommodated is $\lambda_2^{\max}$, the saturation flow rate of the Arc2-Arc4 (eastbound) movement. This implies that
\begin{equation}
	\underset{T \rightarrow \infty}{\lim} \frac{1}{T} \int_0^T q_{1,\OUT}(p_1) \in [0, \lambda_1^{\max}]
\end{equation}
whereas
\begin{equation}
	\underset{T \rightarrow \infty}{\lim} \frac{1}{T} \int_0^T q_{1,\OUT}(p_2) = 0.
\end{equation}
Similarly,
\begin{equation}
	\underset{T \rightarrow \infty}{\lim} \frac{1}{T} \int_0^T q_{2,\OUT}(p_1) =0
\end{equation}
while
\begin{equation}
	\underset{T \rightarrow \infty}{\lim} \frac{1}{T} \int_0^T q_{2,\OUT}(p_2) \in [0, \lambda_2^{\max}].
\end{equation}
Thus, when $p=p_1$, $0 \le \lambda_1(p_1) \le \lambda_1^{\max}$ while $\lambda_2(p_1) = 0$.  Similarly, when $p=p_2$, $\lambda_1(p_2) = 0$ while $0 \le \lambda_2(p_2) \le \lambda_2^{\max}$. Hence, 
\begin{align}
	\Omega(p_1) = \{(\lambda_1,\lambda_2): 0 \le \lambda_1 \le \lambda_1^{\max}, \lambda_2 = 0 \}
\end{align}
and
\begin{align} 
	\Omega(p_2) = \{(\lambda_1,\lambda_2): \lambda_1 = 0, 0 \le \lambda_2 \le \lambda_2^{\max} \}.
\end{align}  
The capacity region, depicted in \autoref{F:Ex1}c as a shaded region, is the set of maximal arrival rates $(\lambda_1,\lambda_2)$ that can be accommodated by switching between phases $p_1$ and $p_2$.
\begin{figure}[h!]
	\centering
	\resizebox{0.95\textwidth}{!}{%
		\includegraphics{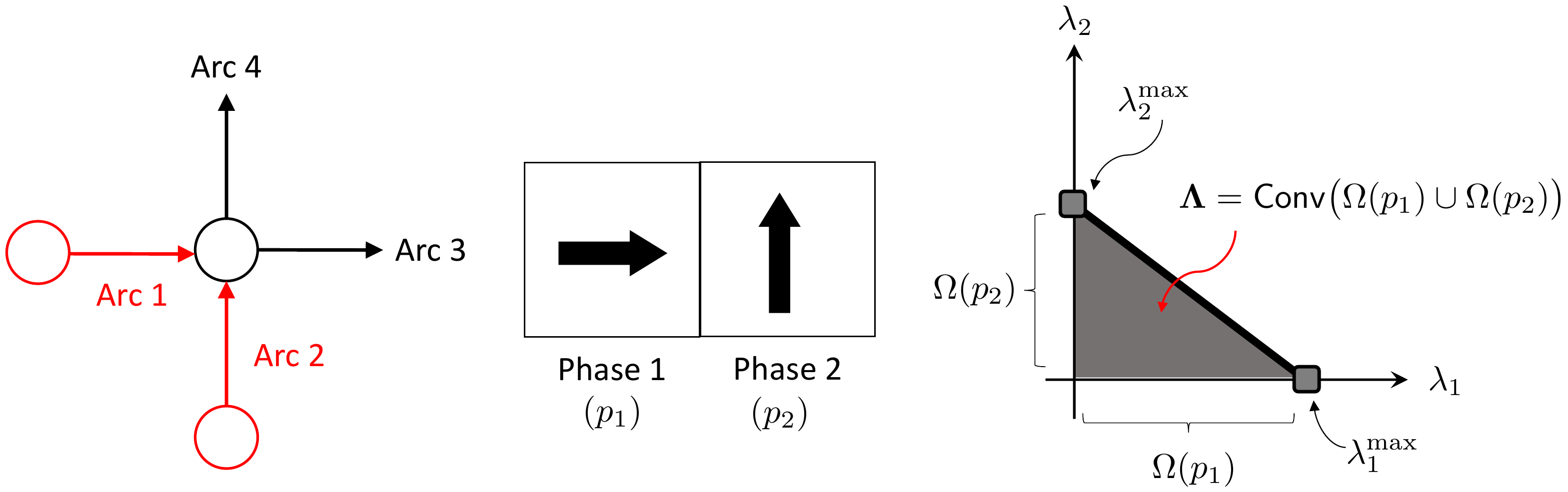}}
	
	\hspace{-.5 in} (a) \hspace{1.5 in} (b) \hspace{1.6 in} (c)
	\caption{Example isolated intersection and the associated capacity region, (a) intersection layout (arcs 1 and 2 are source arcs), (b) the two possible phases, (c) the capacity region.} 
	\label{F:Ex1}
\end{figure}

The second example is borrowed from \citep[Example 3]{varaiya2013max} and illustrates how an instability forms in the proposed model, namely given that we consider finite spatial arc capacities.  In this example, depicted in \autoref{F:Ex2}, there is one source arc. 
\begin{figure}[h!]
	\centering
	\resizebox{0.5\textwidth}{!}{%
		\includegraphics{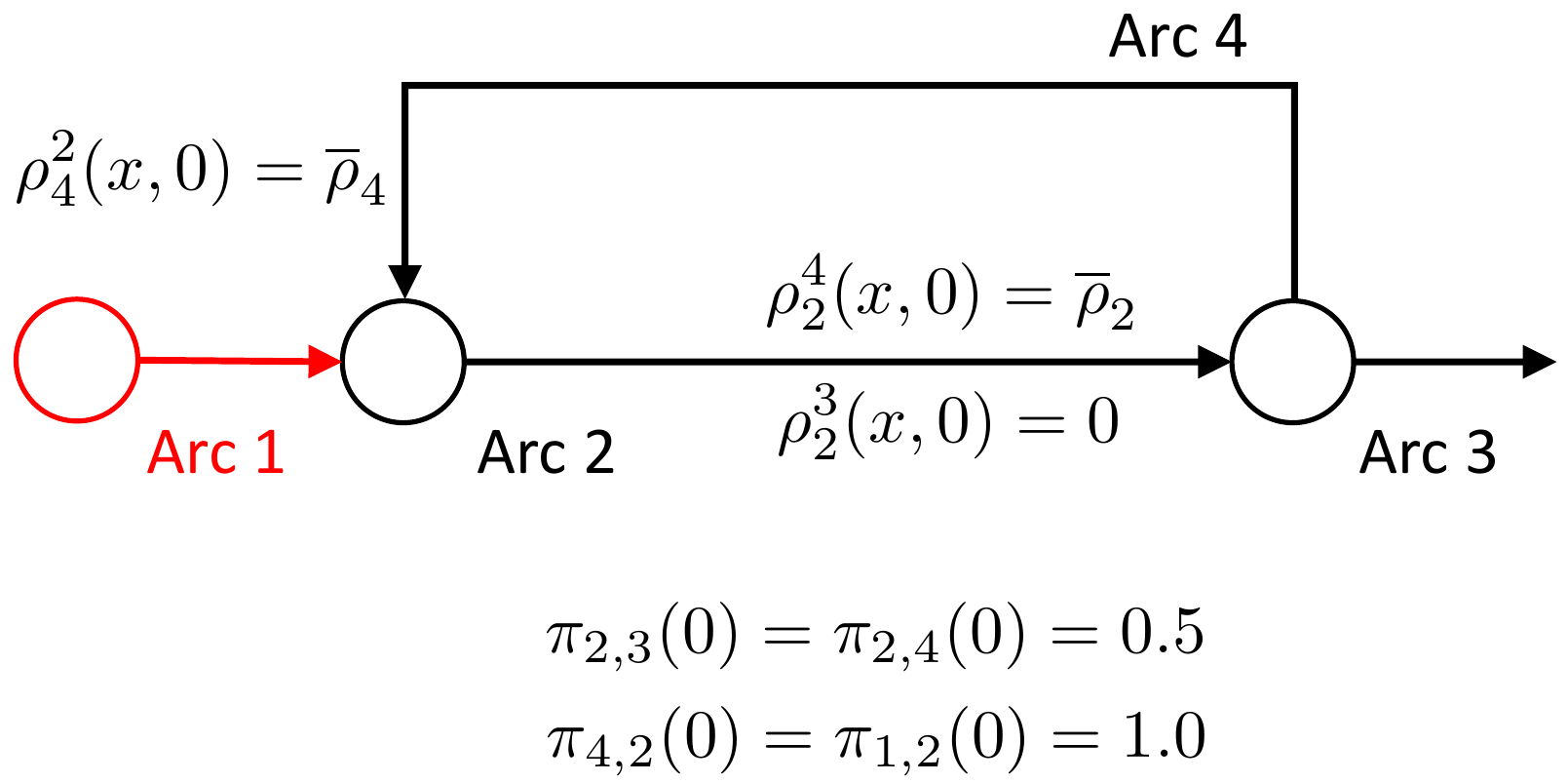}}
	\caption{A gridlock scenario (adopted from \citep{varaiya2013max} and reproduced).} 
	\label{F:Ex2}
\end{figure}
Hence, the capacity region is one dimensional.  The initial conditions depicted in the figure are such that the network is in a state of gridlock at time $t=0$.  Moreover, the turning desires shown in the figure prevent all vehicles from moving into their desired destination arcs.  In this case, the capacity region consists of the singleton set $\lambda = 0$.  That is, the maximal arrival rate the network can accommodate is zero.  \textit{Any other arrival rate is outside of the capacity region and cannot be accommodated by any control policy, not BP, not CABP, not PWBP, nor any signal timing optimization technique.}  In such cases, the only way to relieve gridlock is to re-route vehicles; the subject of control+routing is beyond the scope of the present paper, we leave it to future research.

\subsection{Position-weighted back-pressure (PWBP)}
For any intersection $n \in \NX$, we assume that controllers are capable of assessing the (average) movement fluxes associated with any possible phase $p \in \mathcal{P}_n$.  That is, for any $(a,b) \in \MM_n$, $\EE \big( q_{a,b}(p) \big| \boldsymbol{\rho}(t) \big) \equiv \EER q_{a,b}(p)$ is known or can be estimated by the controller (locally). Omitting dependence of $\delta_{a,b}$ and $\sigma_b$ in \eqref{E:nodeModel} to simplify notation, define $P_{a,b}^{\boldsymbol{\rho}(t)} \equiv \PP\big(\delta_{a,b} - \sigma_b \le 0 |\boldsymbol{\rho}(t) \big)$.  Then
\begin{align}
	\EER q_{a,b}(p) = \mathbb{1}_{\{(a,b) \in p\}} \EER \min \big\{ \delta_{a,b}, \sigma_b \big\} 
	= \mathbb{1}_{\{(a,b) \in p\}} \Big( P_{a,b}^{\boldsymbol{\rho}(t)} \EER \delta_{a,b} + \big(1 - P_{a,b}^{\boldsymbol{\rho}(t)} \big) \EER \sigma_b \Big).
\end{align}
Note that $P_{a,b}^{\boldsymbol{\rho}(t)}$, $\EER \delta_{a,b}$, and $\EER \sigma_b$ are deterministic functions of $\boldsymbol{\rho}(t)$ that depend on the distributions of the parameters of $\delta_{a,b}$ and $\sigma_b$.  These distributions can be established empirically using historical data \citep{jabari2014probabilistic}.  The splits $\pi_{a,b}(t)$ are also treated as random quantities that are to be estimated or measured. In a fully automated system, these random variables may degenerate, that is, it is easy to imagine that they can be measured with high accuracy and become deterministic quantities.  In present day settings they need to be estimated.  The setting envisaged in this paper is one with mixed automated/connected and traditional vehicles.  Connected vehicles announce their turning desires upon entering arc $a$ and may serve as probes to allow the controller to estimate traffic conditions along the arc and the split variables.  Empirical techniques may also be employed for this purpose; we refer to \citep{zheng2017estimating} for a recent approach and to \citep{rodriguez2019location} for a recent article on reconstructing turning movements.

The traffic state at time $t$, $\boldsymbol{\rho}(t)$, requires a traffic state estimation procedure that is capable of producing real-time estimates under present day instrumentation in the real world.  We refer to \citep{seo2017traffic,zheng2018traffic,van2018macroscopic} and references therein for recent research on traffic state estimation tools. 

For each $n \in \NX$ and each $(a,b) \in \MM_n$, we define the \textit{weight variable}
\begin{align}
w_{a,b}(t) \equiv
 \left\{
\begin{array}{ll} 
\bigg| c_{a,b} \rho_a^b(t) 
-  \int_0^{l_b} \Big| \frac{l_b - x}{l_b} \Big| \sum_{c \in \Sigma(b): \atop (b,c) \in \MM} c_{b,c} \pi_{b,c}(t) \rho_b^c(x,t) \dd x \bigg|, & a \in \AXS \\
& \\
\bigg| c_{a,b} \int_0^{l_a} \Big| \frac{x}{l_a} \Big| \rho_a^b(x,t) \dd x 
-  \int_0^{l_b} \Big| \frac{l_b - x}{l_b} \Big| \sum_{c \in \Sigma(b): \atop (b,c) \in \MM} c_{b,c} \pi_{b,c}(t) \rho_b^c(x,t) \dd x \bigg|, & a \not\in \AXS
\end{array}, \right. \label{E:BPwt}
\end{align}
which depends on the (commodity) density curves along arcs $a$ and $b$.  To interpret this, first note that
\begin{equation}
	\int_0^{l_a} \rho_a^b(x,t) \dd x
\end{equation}
is just the total traffic volume (queue size) along arc $a$ that is destined to arc $b$.  Then the first integral inside the square brackets in \eqref{E:BPwt} can be interpreted as a weighted queue size, where traffic densities at the downstream end of arc $a$ (at $x = l_a$) have the (maximal) weight of one, while traffic densities at the upstream end of $a$ (at $x = 0$) have a weight of zero.  In between, the weights increase linearly with $x$.  Similarly, the second integral inside the square brackets in \eqref{E:BPwt} can also be interpreted as a weighted queue size, but with the weights decreasing linearly with $x$. Hence, the weight associated with movement $(a,b)$ decreases as the traffic densities at upstream end (ingress) of arc $b$ increase and vice versa, it increases when the traffic densities are concentrated at the downstream end of arc $a$ and vice versa.  The movement constants $c_{a,b}$ in \eqref{E:BPwt} allow for assigning higher weights to certain movements.  

{\color{black}
Let $p_{\mathrm{PWBP}}(t) \in \mathcal{P}$ denote the network-wide phasing scheme chosen by PWBP at time $t$.  It is the phasing scheme that solves the following problem:
\begin{align}
	p_{\mathrm{PWBP}}(t) \in \underset{p \in \mathcal{P}}{\arg \max} \sum_{(a,b) \in \MM} w_{a,b}(t) \EER q_{a,b}(p), \label{E:PWBP_0}
\end{align}
where set inclusion ($\in$) as opposed to equality is used since the right-hand side may not be unique.  The optimization problem can alternatively be written as
\begin{align}
	p_{\mathrm{PWBP}}(t) \in \underset{p_1 \in \mathcal{P}_1, \hdots, p_{|\NX|} \in \mathcal{P}_{|\NX|}}{\arg \max} ~ \sum_{n \in \NX} \sum_{(a,b) \in \MM_n} w_{a,b}(t) \EER q_{a,b}(p_n). \label{E:PWBP_1}
\end{align}
Notice that the decision variables, the node phase schemes $p_1, \hdots, p_{|\NX|}$, each only appear in one term in the outer sum and, more importantly, for any $n,m \in \NX$ such that $n\ne m$, we have that $p_n \cap p_m = \emptyset$.  Hence, the problem \eqref{E:PWBP_1} decomposes naturally by node; that is, we have that
\begin{align}
\underset{p_1 \in \mathcal{P}_1, \hdots, p_{|\NX|} \in \mathcal{P}_{|\NX|}}{\max}  \sum_{n \in \NX} \sum_{(a,b) \in \MM_n} w_{a,b}(t) \EER q_{a,b}(p_n) 
= \sum_{n \in \NX} \underset{p \in \mathcal{P}_n}{\max}  \sum_{(a,b) \in \MM_n} w_{a,b}(t) \EER q_{a,b}(p) \label{E:PWBP_2}
\end{align}
and consequently, the problem \eqref{E:PWBP_0} can be parallelized\footnote{This is a distinguishing feature of intersection control.  In wireless communication, BP is often applied to packet \textit{routing}; there the problem cannot be parallelized as network paths can have common arcs. The decomposition in such cases is heuristic.  In fact such problems are known to be NP-Hard. Applications of BP to routing in transportation networks inherit this drawback.}.  Then, under PWBP control, the phase that is active at node $n$ at time $t$, denoted $p_{n,\mathrm{PWBP}}(t)$, is given by
\begin{equation}
	p_{n,\mathrm{PWBP}}(t) \in \underset{p \in \mathcal{P}_n}{\arg \max} \sum_{(a,b) \in \MM_n} w_{a,b}(t) \EER q_{a,b}(p). \label{E:PWBP}
\end{equation}

Since the number of possible phases at any intersection tends to be small (typically four-eight), \eqref{E:PWBP} can be easily solved by direct enumeration.  This allows for real-time distributed implementation of the proposed approach.  That \eqref{E:PWBP_0} and, consequently \eqref{E:PWBP}, is network stabilizing is the subject of \autoref{S:stability}. 
}

When there exists more than one solution to \eqref{E:PWBP}, a randomization procedure that applies equal weight to all the maximizers is employed.  This helps ensure work conservation as discussed below. Implementation of PWBP control for node $n$ at time instant $t$ is summarized in Algorithm \ref{A:PWBP}. 
\renewcommand{\algorithmicrequire}{\textbf{Input:}}
\renewcommand{\algorithmicensure}{\textbf{Iterate:}} 
\begin{algorithm}[h!]
	\caption{Position weighted backpressure phasing for node $n$ at time $t$: $\mathsf{PWBP}(n,t)$}
	\label{A:PWBP}
	\small
	\begin{algorithmic}[1]
		\REQUIRE Road geometry: $\{l_a\}_{a \in \Pi_n}$, $\{l_b\}_{b \in \Sigma_n}$; constants: $\{c_{a,b}\}_{(a,b) \in \MM_n}$; (estimated) traffic state at \\ time $t$: $\boldsymbol{\rho}(t)$; (estimated) splits at time $t$: $\{\pi_{a,b}(t)\}_{(a,b) \in \MM}$; the distributions of the parameters  \\ of $\{\delta_{a,b}\}_{(a,b) \in \MM_n}$ and $\{\sigma_b\}_{b \in \Sigma_n}$
		\ENSURE 
		\FOR { $(a,b) \in \MM_n$}
		\STATE $w_{a,b}(t) \mapsfrom \left\{
		\begin{array}{ll} 
		\bigg| c_{a,b} \rho_a^b(t) 
		-  \int_0^{l_b} \Big| \frac{l_b - x}{l_b} \Big| \sum_{c \in \Sigma(b): \atop (b,c) \in \MM} c_{b,c} \pi_{b,c}(t) \rho_b^c(x,t) \dd x \bigg|, & a \in \AXS \\
		& \\
		\bigg| c_{a,b} \int_0^{l_a} \Big| \frac{x}{l_a} \Big| \rho_a^b(x,t) \dd x 
		-  \int_0^{l_b} \Big| \frac{l_b - x}{l_b} \Big| \sum_{c \in \Sigma(b): \atop (b,c) \in \MM} c_{b,c} \pi_{b,c}(t) \rho_b^c(x,t) \dd x \bigg|, & a \not\in \AXS
		\end{array}, \right.$
		\FOR {$p \in \mathcal{P}_n$}
		\STATE $\EE^{\boldsymbol{\rho}(t)} q_{a,b}(p) \mapsfrom \mathbb{1}_{\{(a,b) \in p\}} \EER \min \big\{ \delta_{a,b}, \sigma_b \big\}$
		\ENDFOR
		\ENDFOR
		\renewcommand{\algorithmicensure}{\textbf{Output:}}
		\ENSURE
		\IF {$\big|\big\{ \arg \max_{p \in \mathcal{P}_n} \sum_{(a,b) \in \MM_n} w_{a,b}(t) \EER q_{a,b}(p) \big\}\big| = 1$}
		\STATE $p_{n,\mathrm{PWBP}}(t) \mapsfrom \underset{p \in \mathcal{P}_n}{\arg \max} \sum_{(a,b) \in \MM_n} w_{a,b}(t) \EER q_{a,b}(p)$
		\ELSE
		\STATE Select $p_{n,\mathrm{PWBP}}(t)$ at random from the set of optima (each assigned equal probability):  $p_{n,\mathrm{PWBP}}(t) \sim \Big\{ \underset{p \in \mathcal{P}_n}{\arg \max} \sum_{(a,b) \in \MM_n} w_{a,b}(t) \EER q_{a,b}(p) \Big\}$.
		\ENDIF
	\end{algorithmic}
\end{algorithm}

One of the advantages of a continuous time formulation is that Algorithm \ref{A:PWBP} can be implemented at pre-specified cadence.  Moreover, the cadence can vary from one intersection to another in order to accommodate such constraints as minimum greens (to avoid aggressive oscillations in the control dynamics), pedestrian movements, and so on.  To elaborate, let $\tau_n$ denote the minimum phase length for node $n$.  The signal phasing sequence is given by $p_{n,\mathrm{PWBP}}(k\tau_n) = \mathsf{PWBP}(n,k\tau_n)$ where $k$ is a positive integer.  Note that this can be easily generalized to the case where $\tau_n$ is not only node-specific, but also phase-specific allowing for varying minimum greens by approach or even intersection movement.  The cadence PWBP can be tuned in practice based on standard principles of traffic engineering.  The main advantage of the proposed approach is that it is decentralized; that is, the calculations can be parallelized over the network nodes.  

\medskip

\textbf{\fontfamily{cmss}\selectfont PWBP and work conservation}.  Work conservation of PWBP control follows from two features of the proposed approach.  The first feature is the node model used: It was demonstrated in \citep{jabari2016node} that the node model produces \textit{holding-free solutions}.  Hence, for any chosen phase $p \in \mathcal{P}$ (which dictates the node supplies), as long as there exist supply along the outbound arcs, demands at the inbound are guaranteed to be served.  The second feature is that the phase chosen by PWBP depends on both the movement weights $w_{a,b}$ and the expected movement fluxes, $\EER q_{a,b}(p)$.  Since the weights are non-negative, the phase chosen is guaranteed to result in (holding-free) flow across the node as long as at least one of the movements has a non-negative expected flux, $\EER q_{a,b}(p)$.  Holding only occurs when $\EER q_{a,b}(p) = 0$ for all movements $(a,b) \in \MM_n$.  However, this is a gridlock scenario and no work is lost.  In the case where the expected fluxes are positive but the weights are zero, if all other expected fluxes are zero, work may be lost.  This corresponds to an alternative scenario where
\begin{align} 
\underset{p \in \mathcal{P}_n}{\max} \sum_{(a,b) \in \MM_n} w_{a,b}(t) \EER q_{a,b}(p) = 0.
\end{align}  
In this case, the randomization procedure ensures with probability 1 that loss of work does not persist.

\section{Network stability}
\label{S:stability}

\subsection{Lyapunov functional and stability}
The traffic network is said to be \textit{strongly stable} if \citep[Definition 2.7]{neely2010stochastic}:
\begin{equation}
	\underset{T \rightarrow \infty}{\lim \sup} ~ \frac{1}{T} \int_0^T \EE \bigg( \sum_{\substack{(a,b) \in \MM: \\ a \in \AXS}} \rho_a^b(t) + \sum_{\substack{(a,b) \in \MM: \\ a \not\in \AXS}} \int_0^{l_a} \rho_a^b(x,t) \dd x \bigg) \dd t < \infty. \label{E:stable}
\end{equation}
{\color{black} Note that stability conditions are typically stated as absolute values of the dynamical variables.  We have omitted the absolute values since traffic densities and queue sizes are non-negative with probability 1.  We refer to \citep{jabari2012stochastic,jabari2012stochasticDiss,jabari2018stochastic} for an analysis of non-negativity of stochastic traffic models.}   
Since the network traffic densities depend (implicitly) on the control decisions at the network nodes, strong stability implies that the control in place ensure that the network densities do not grow without bound \emph{in the long run}.  This section demonstrates that as long as such a control policy exists\footnote{Otherwise, there does not exist a control policy capable of stabilizing the network.  Hence, this is a feasibility assumption.}, the PWBP algorithm ensures strong stability.

Since the spatial capacities of non-source arcs in the network are naturally bounded, the stability condition in \eqref{E:stable} can be restated in terms of source movements only, that is, the traffic network is said to be strongly stable if
\begin{equation}
	\underset{T \rightarrow \infty}{\lim \sup} ~ \frac{1}{T} \int_0^T \sum_{\substack{(a,b) \in \MM: \\ a \in \AXS}} \EE \rho_a^b(t) \dd t < \infty. \label{E:stable1}
\end{equation}
{\color{black}
It must be emphasized, however, that this does not imply that stability can be established by analyzing the queue sizes at the source arcs independent of the non-source arcs.  In simple terms, stability is achieved when \emph{long-run outflows from the network queues exceed the long-run inflows}.  Otherwise, queues build up indefinitely.  A control policy is stabilizing when it ensures that long-run outflows exceed long-run inflows.  Since the outflows from source arcs depend on non-source arcs, stability cannot be established without examining traffic dynamics along the non-source arcs in the network.}

Consider the network-wide energy functional $V: D \rightarrow \RR$ with domain $D$ being an appropriately defined $|\AX|$-dimensional set of curves.  $V$ is defined as
\begin{align}
	V \big( \boldsymbol{\rho}(t) \big) \equiv  \frac{1}{2} \sum_{\substack{(a,b) \in \MM: \\ a \in \AXS}} c_{a,b} \big(\rho_a^b(t) \big)^2 
	+ \frac{1}{2} \sum_{\substack{(a,b) \in \MM: \\ a \not\in \AXS}} c_{a,b} \int_0^{l_a} \int_0^{l_a} \Big| \frac{l_a - x - x^{\prime}}{l_a} \Big| \rho_a^b(x^{\prime},t)\rho_a^b(x,t) \dd x^{\prime} \dd x, \label{E:Lyapunov}
\end{align}
where $\{c_{a,b} \}_{(a,b) \in \MM}$ are non-negative finite constants. 
It can be easily shown that $V$ is a Lyapunov functional: (i) $V \big(\boldsymbol{\rho}(t) \big) \ge 0$ almost surely since traffic densities are non-negative (with probability 1) and (ii) $V(\boldsymbol{\rho}) = 0$ if and only if $\rho_a^b(t) = \rho_a^b(x,t) = 0$ almost surely for all $(a,b) \in \MM$ and all $x \in [0,l_a]$\footnote{One can construct pathological density curves with non-zero density spikes, where $V = 0$.  However, such densities occur with probability zero.  Technically, these are overcome by using equivalence classes of density curves, but we shall avoid this level of technicality to promote readability.}. \autoref{thm:LyapunovDrift} below provides a sufficient condition for strong stability using the definition of Lyapunov functionals. A notational convention used below is $\dot{V} \equiv \dd V / \dd t$.  

\begin{lemma}
	\label{thm:LyapunovDrift}
	For the Lyapunov functional \eqref{E:Lyapunov}, suppose $\EE V \big( \boldsymbol{\rho}(0) \big) < \infty$.  If there exist constants $0 < K < \infty$ and $0 < \epsilon <\infty$ such that
	\begin{align}
		\EER \dot{V} \big( \boldsymbol{\rho}(t) \big) \le K - \epsilon \Big( \sum_{\substack{(a,b) \in \MM: \\ a \in \AXS}} \EE \rho_a^b(t) + \sum_{\substack{(a,b) \in \MM: \\ a \not\in \AXS}} \EE \int_0^{l_a} \rho_a^b(x,t) \dd x \Big) \label{E:stCond}
	\end{align}
	holds for all $t \ge 0$ and all possible $\boldsymbol{\rho}(t)$, then the traffic network is strongly stable.
\end{lemma}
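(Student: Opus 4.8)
The plan is to run the classical Lyapunov drift argument, i.e., the continuous-time analogue of the drift lemma for strong stability in \citep{neely2010stochastic}, specialized to the functional $V$ in \eqref{E:Lyapunov}. The first step is to pass from the conditional drift bound \eqref{E:stCond} to a bound on the growth of $\EE V$. Taking the unconditional expectation of both sides of \eqref{E:stCond} and using the tower property $\EE\big[\EER \dot V\big(\boldsymbol{\rho}(t)\big)\big] = \EE \dot V\big(\boldsymbol{\rho}(t)\big)$ on the left-hand side, while the right-hand side is already stated unconditionally, yields
\[
\EE \dot V\big(\boldsymbol{\rho}(t)\big) \le K - \epsilon\Big( \sum_{\substack{(a,b)\in\MM:\\ a\in\AXS}} \EE\rho_a^b(t) + \sum_{\substack{(a,b)\in\MM:\\ a\not\in\AXS}} \EE\int_0^{l_a}\rho_a^b(x,t)\,\dd x \Big)
\]
for all $t\ge 0$.

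Next I would interchange differentiation and expectation, asserting $\tfrac{\dd}{\dd t}\EE V\big(\boldsymbol{\rho}(t)\big) = \EE \dot V\big(\boldsymbol{\rho}(t)\big)$, and integrate the resulting inequality over $[0,T]$ to obtain
\[
\EE V\big(\boldsymbol{\rho}(T)\big) - \EE V\big(\boldsymbol{\rho}(0)\big) \le KT - \epsilon\int_0^T \Big( \sum_{\substack{(a,b)\in\MM:\\ a\in\AXS}} \EE\rho_a^b(t) + \sum_{\substack{(a,b)\in\MM:\\ a\not\in\AXS}} \EE\int_0^{l_a}\rho_a^b(x,t)\,\dd x \Big)\dd t.
\]
Since $V\ge 0$ almost surely (property (i) noted after \eqref{E:Lyapunov}), we have $\EE V\big(\boldsymbol{\rho}(T)\big)\ge 0$, and by hypothesis $\EE V\big(\boldsymbol{\rho}(0)\big)<\infty$. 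Rearranging, dividing by $\epsilon T$, and taking $\limsup_{T\to\infty}$ makes the $\EE V\big(\boldsymbol{\rho}(0)\big)/(\epsilon T)$ term vanish, leaving
\[
\underset{T\rightarrow\infty}{\lim\sup}~\frac{1}{T}\int_0^T \Big( \sum_{\substack{(a,b)\in\MM:\\ a\in\AXS}} \EE\rho_a^b(t) + \sum_{\substack{(a,b)\in\MM:\\ a\not\in\AXS}} \EE\int_0^{l_a}\rho_a^b(x,t)\,\dd x \Big)\dd t \le \frac{K}{\epsilon} < \infty,
\]
which is exactly the strong-stability condition \eqref{E:stable} (and a fortiori \eqref{E:stable1}). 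This completes the argument.

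The main obstacle I expect is rigorously justifying the interchange $\tfrac{\dd}{\dd t}\EE V = \EE \dot V$, since $V$ is a functional of stochastic density \emph{curves} rather than of finitely many queue lengths. To license it I would invoke dominated convergence / uniform integrability: the boundary dynamics \eqref{E:Boundary}--\eqref{E:srcBoundary} together with the probabilistic upper bounds on flux and density assumed in the Remark, and the finite spatial capacities of non-source arcs, bound the time-derivative of the spatial contributions to $V$ by an integrable envelope, while the source-queue contributions are controlled through the drift bound itself given $\EE V\big(\boldsymbol{\rho}(0)\big)<\infty$. A cleaner alternative that sidesteps the interchange is to work directly with the integral (mass-balance) form of the dynamics and bound $\EE V\big(\boldsymbol{\rho}(T)\big)$ via a telescoping/martingale-type estimate, which only uses $V\ge 0$ and the accumulated drift. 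Either route reduces the lemma to the elementary rearrangement above; the probabilistic bookkeeping in establishing the dominating envelope is the only genuinely delicate part.
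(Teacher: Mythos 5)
Your proposal is correct and follows essentially the same route as the paper's proof: take expectations via the tower property, telescope $\EE V$ over $[0,T]$, use $V \ge 0$ and $\EE V\big(\boldsymbol{\rho}(0)\big) < \infty$, divide by $\epsilon T$, and let $T \to \infty$. The only difference is cosmetic — the paper integrates the conditional drift inequality in time first and then swaps expectation with the time integral, whereas you swap expectation with differentiation first and then integrate; your added remarks on justifying that interchange address a point the paper passes over silently.
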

\begin{proof}
	We first integrate both sides of \eqref{E:stCond} over the interval $[0,T]$ and take expectation of both sides of the inequality to obtain
	\begin{align}
		\EE \int_0^T \EER \dot{V} \big( \boldsymbol{\rho}(t) \big) \dd t  \le KT - \epsilon \int_0^T \EE \bigg(\sum_{\substack{(a,b) \in \MM: \\ a \in \AXS}} \rho_a^b(t) + \sum_{\substack{(a,b) \in \MM: \\ a \not\in \AXS}} \int_0^{l_a} \rho_a^b(x,t) \dd x \bigg) \dd t.	\label{temp1}
	\end{align}
	Reversing the order of the (outer) expectation and the integral, the left-hand side becomes
	\begin{align}
		\int_0^T \EE  \EER \big( \dot{V} \big( \boldsymbol{\rho}(t) \big) \dd t = \int_0^T \EE \dot{V} \big( \boldsymbol{\rho}(t) \big) \dd t = \EE \int_0^T \dot{V} \big( \boldsymbol{\rho}(t) \big) \dd t.
	\end{align}
	This is equal to $\EE V \big( \boldsymbol{\rho}(T) \big) - \EE V \big( \boldsymbol{\rho}(0) \big)$. Combining this with \eqref{temp1}, we get
	\begin{align}
	\EE V \big( \boldsymbol{\rho}(T) \big) - \EE V \big( \boldsymbol{\rho}(0) \big) \le KT - \epsilon \int_0^T \EE \bigg(\sum_{\substack{(a,b) \in \MM: \\ a \in \AXS}} \rho_a^b(t) + \sum_{\substack{(a,b) \in \MM: \\ a \not\in \AXS}} \int_0^{l_a} \rho_a^b(x,t) \dd x \bigg) \dd t.
	\end{align}
	Dividing both sides by $T\epsilon$ and noting that $\EE V \big( \boldsymbol{\rho}(T) \big) \ge 0$ (by definition of $V$), we get the inequality
	\begin{align}
		\frac{1}{T} \int_0^T \EE \bigg(\sum_{\substack{(a,b) \in \MM: \\ a \in \AXS}} \rho_a^b(t) + \sum_{\substack{(a,b) \in \MM: \\ a \not\in \AXS}} \int_0^{l_a} \rho_a^b(x,t) \dd x \bigg) \dd t \le \frac{K}{\epsilon} + \frac{1}{\epsilon T} \EE V \big( \boldsymbol{\rho}(0) \big).
	\end{align}
	Noting that $\EE V \big( \boldsymbol{\rho}(0) \big) < \infty$, the result follows immediately upon taking the limit on both sides as $T \rightarrow \infty$ in accord with \eqref{E:stable}.
\end{proof}

\subsection{Stability of PWBP}
According to \autoref{thm:LyapunovDrift}, finding (finite) constants $K$ and $\epsilon$ that satisfy the condition \eqref{E:stCond} will ensure strong stability of the dynamics at the network level.  The constant $K$ is established using the boundedness of the fluxes $q_a(\cdot,\cdot)$, which is a property of traffic flow (i.e., a physical property that must be ensured by any model).  On the other hand, $\epsilon$ depends on the intersection control polices. \autoref{Lem:flowBound} provides a necessary ingredient that will be used later to establish the constant $K$.

\begin{lemma}
	\label{Lem:flowBound}
	Let $a \in \AX / \AXS$ and suppose there exist constants $0 \le \overline{q}_a < \infty$ and $0 \le \overline{\rho}_a < \infty$ such that $\PP(q_a^b(x,t) \le \overline{q}_a) = 1$ and $\PP(\rho_a^b(x,t) \le \overline{\rho}_a) = 1$ for any $(a,b) \in \MM$, any $x \in [0,l_a]$, and any $t \ge 0$.  Then, there exist constants $0 \le K_a^{(1)} < \infty$ and $0 \le K_a^{(2)} < \infty$ such that, with probability 1,
	\begin{itemize}
		\item[(i)] for any $(x_1,x_2) \subseteq [0,l_a]$
		\begin{equation}
			\EER \Big( q_a^b(l_a,t)  \int_{x_1}^{x_2} \rho_a^b(x,t) \dd x  \Big)  \le K_a^{(1)}, \label{E:lemmaI}
		\end{equation}
		\item[(ii)] for any $(x_1,x_2) \subseteq (0,l_a)$ and any $(x_3,x_4) \subseteq [0,l_a]$
		\begin{equation}
			- \EER \int_{x_3}^{x_4} \int_{x_1}^{x_2} \rho_a^b(x,t) \frac{\partial q_a^b(x^{\prime},t)}{\partial x} \dd x^{\prime} \dd x \le K_a^{(2)}. \label{E:lemmaII}
		\end{equation}
	\end{itemize}	
\end{lemma}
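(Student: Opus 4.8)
The plan is to reduce both inequalities to the pointwise, almost-sure bounds $q_a^b(x,t) \le \overline{q}_a$ and $\rho_a^b(x,t) \le \overline{\rho}_a$, together with non-negativity of densities and fluxes (with probability $1$). The guiding principle is that $\EER(\cdot)$ is a conditional expectation, so any inequality holding almost surely is preserved under $\EER$; it therefore suffices to bound each integrand \emph{pathwise} and then take $\EER$. In particular the constants $K_a^{(1)},K_a^{(2)}$ can be taken to be deterministic and independent of the endpoints $x_1,\dots,x_4$.

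For part (i) I would bound the two factors separately. Since $0 \le q_a^b(l_a,t) \le \overline{q}_a$ and $0 \le \rho_a^b(x,t) \le \overline{\rho}_a$ hold almost surely, the integral obeys $\int_{x_1}^{x_2} \rho_a^b(x,t)\,\dd x \le \overline{\rho}_a (x_2 - x_1) \le \overline{\rho}_a l_a$ a.s. As both quantities are non-negative, their product satisfies $q_a^b(l_a,t)\int_{x_1}^{x_2} \rho_a^b(x,t)\,\dd x \le \overline{q}_a \overline{\rho}_a l_a$ almost surely, uniformly in $(x_1,x_2)$. Applying $\EER$ yields \eqref{E:lemmaI} with $K_a^{(1)} = \overline{q}_a \overline{\rho}_a l_a$.

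For part (ii) the key structural step is to collapse the inner integral via the fundamental theorem of calculus, $\int_{x_1}^{x_2} \frac{\partial q_a^b(x',t)}{\partial x}\,\dd x' = q_a^b(x_2,t) - q_a^b(x_1,t)$. Because $q_a^b$ is valued in $[0,\overline{q}_a]$, this difference lies in $[-\overline{q}_a,\overline{q}_a]$, so $\bigl| q_a^b(x_2,t) - q_a^b(x_1,t) \bigr| \le \overline{q}_a$. The double integral then reduces to $-\int_{x_3}^{x_4} \rho_a^b(x,t)\bigl(q_a^b(x_2,t) - q_a^b(x_1,t)\bigr)\,\dd x$, whose absolute value is at most $\overline{q}_a \int_{x_3}^{x_4} \rho_a^b(x,t)\,\dd x \le \overline{q}_a \overline{\rho}_a l_a$ almost surely; in particular the expression is bounded \emph{above} by $\overline{q}_a \overline{\rho}_a l_a$ pathwise, and taking $\EER$ gives \eqref{E:lemmaII} with $K_a^{(2)} = \overline{q}_a \overline{\rho}_a l_a$.

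The one genuine subtlety — and the step I would treat most carefully — is the use of the fundamental theorem of calculus in part (ii): the flux $q_a^b(\cdot,t)$ need not be classically differentiable in $x$ when shock-fronts are present, so $\partial q_a^b/\partial x$ must be read as the weak spatial derivative appearing in the conservation law \eqref{E:consLaw}. The telescoping identity $\int_{x_1}^{x_2} \partial_x q_a^b \,\dd x' = q_a^b(x_2,t)-q_a^b(x_1,t)$ nonetheless remains valid because $q_a^b(\cdot,t)$ is of bounded variation and admits one-sided limits, so that the distributional derivative integrates back to the boundary difference. Everything else is an elementary majorization. I would also stress that only the \emph{upper} bounds are needed: the possibly large negative values of the integrand in \eqref{E:lemmaII} are harmless, since this lemma feeds only the constant $K$ in the Lyapunov drift condition \eqref{E:stCond}.
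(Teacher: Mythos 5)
Your proof is correct and follows essentially the same route as the paper's: collapse the inner integral of $\partial q_a^b/\partial x$ via the fundamental theorem of calculus into a flux difference bounded by $\overline{q}_a$, bound the density integral almost surely by $l_a\overline{\rho}_a$, and pass the pathwise bounds through the conditional expectation $\EER$. The only cosmetic differences are that the paper factors the $\boldsymbol{\rho}(t)$-measurable density integral out of $\EER$ explicitly and invokes the Cauchy--Schwarz inequality for part (i), whereas your direct pathwise product bound (and your care with the weak-derivative reading of $\partial_x q_a^b$ at shocks) suffices and is, if anything, cleaner.
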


\begin{proof}
	First note that, with probability 1
	\begin{align}
		- \EER \int_{x_1}^{x_2} \frac{\partial q_a^b(x,t)}{\partial x} \dd x = \EER q_a^b(x_1,t) - \EER q_a^b(x_2,t) \le \overline{q}_a.
	\end{align}
	By the boundedness properties of $\rho_a^b(\cdot,\cdot)$ for $a \in \AX / \AXS$, it holds that
	\begin{align}
		\PP \left( \int_{x_3}^{x_4} \rho_a^b(x,t) \dd x  \le l_a \overline{\rho}_a \right)  = 1. \label{E:lemmaPfTemp}
	\end{align}
	Hence, with probability 1, we have that
	\begin{align}
		-\EER  \int_{x_3}^{x_4} \int_{x_1}^{x_2} \rho_a^b(x,t) \frac{\partial q_a(x^{\prime},t)}{\partial x} \dd x^{\prime} \dd x
		= - \Big(\int_{x_3}^{x_4} \rho_a^b(x,t) \dd x \Big) \EER \int_{x_1}^{x_2} \frac{\partial q_a^b(x,t)}{\partial x} \dd x  
		\le l_a \overline{\rho}_a \overline{q}_a.
	\end{align}
	The bound in \eqref{E:lemmaII} follows immediately and \eqref{E:lemmaI} follows from \eqref{E:lemmaPfTemp} along with the boundedness of $q_a^b(x,t)$ and then applying the Cauchy-Schwartz inequality.
\end{proof}

\begin{corollary}
	\label{Cor:flowBound}
	Let $a \in \AX / \AXS$ and assume the probabilistic bounds of \autoref{Lem:flowBound}.  Then, there exists a constant $0 \le K < \infty$ such that
	\begin{align}
		K \ge \sum_{(a,b) \in \MM: a \not\in \AXS} c_{a,b} \bigg[ & \EER \Big(  q_a^b(l_a,t)  \int_0^{l_a} \Big| \frac{x}{l_a} \Big| \rho_a^b(x,t) \dd x  \Big) \nonumber \\
		& - \EER \Big( \int_0^{l_a} \int_{0+}^{l_a-} \Big| \frac{l_a - x - x^{\prime}}{l_a} \Big| \rho_a^b(x,t) \frac{\partial q_a^b(x^{\prime},t) }{\partial x}  \dd x^{\prime} \dd x \Big) \bigg]. \label{E:corrTemp}
	\end{align}
\end{corollary}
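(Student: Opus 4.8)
The plan is to bound the right-hand side of \eqref{E:corrTemp} movement by movement and then use that the outer sum ranges over the \emph{finite} set of non-source movements. Since this set is finite and each $c_{a,b}$ is a finite constant, it suffices to produce, for every $(a,b) \in \MM$ with $a \not\in \AXS$, a bound on
\[
T_1 \equiv \EER\Big(q_a^b(l_a,t)\textstyle\int_0^{l_a}\big|\tfrac{x}{l_a}\big|\rho_a^b(x,t)\,\dd x\Big),
\qquad
T_2 \equiv \EER\Big(\textstyle\int_0^{l_a}\int_{0+}^{l_a-}\big|\tfrac{l_a-x-x'}{l_a}\big|\rho_a^b(x,t)\tfrac{\partial q_a^b(x',t)}{\partial x}\,\dd x'\dd x\Big)
\]
such that $T_1 - T_2$ is controlled by a constant uniform in $t$ and in $\boldsymbol{\rho}(t)$; summing these per-movement constants against $c_{a,b}$ then produces the required $K$.

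For $T_1$ I would discard the position weight. On $[0,l_a]$ one has $|x/l_a| \le 1$, and because fluxes are non-negative ($q_a^b(l_a,t) \ge 0$ almost surely), it follows that $q_a^b(l_a,t)\int_0^{l_a}|x/l_a|\rho_a^b(x,t)\,\dd x \le q_a^b(l_a,t)\int_0^{l_a}\rho_a^b(x,t)\,\dd x$ almost surely. Taking $\EER$ and applying part (i) of \autoref{Lem:flowBound}, i.e.\ inequality \eqref{E:lemmaI}, with $(x_1,x_2) = (0,l_a)$ gives $T_1 \le K_a^{(1)}$.

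The term $-T_2$ is the main obstacle. One cannot simply bound the weight $|(l_a-x-x')/l_a| \le 1$ and quote \eqref{E:lemmaII}: the separation underlying \eqref{E:lemmaII} used that the integrand factors into a pure density in $x$ and a pure flux-derivative in $x'$, whereas here the position weight \emph{couples} $x$ and $x'$, and the sign of $\partial q_a^b/\partial x$ is unrestricted, so dropping the weight need not preserve the inequality. I would instead exploit that, conditioned on $\boldsymbol{\rho}(t)$, both $\rho_a^b(\cdot,t)$ and the weight are deterministic (the randomness resides only in $q_a^b$ through its parameters), so $\EER$ passes through them and acts on the flux alone. Decomposing the weight by the layer-cake identity $|g(x,x')| = \int_0^1 \mathbb{1}_{\{|g(x,x')| > s\}}\,\dd s$ and applying Fubini, for each fixed $x$ and level $s$ the set $\{x' \in (0,l_a): |l_a-x-x'| > s\,l_a\}$ is a union of at most two subintervals of $(0,l_a)$. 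On each subinterval the inner integral of $\partial q_a^b/\partial x$ collapses, by the fundamental theorem of calculus, to a difference of boundary fluxes lying in $[-\overline{q}_a,\overline{q}_a]$ almost surely --- exactly the mechanism behind \eqref{E:lemmaII}. Since $\rho_a^b \ge 0$ and $\int_0^{l_a}\rho_a^b\,\dd x \le l_a\overline{\rho}_a$, this yields $-T_2 \le 2K_a^{(2)}$ uniformly in $t$ and $\boldsymbol{\rho}(t)$, the factor two accounting for the two subintervals per level. An equivalent route is integration by parts in $x'$: the boundary terms reproduce the weights $x/l_a$ and $(l_a-x)/l_a$ against bounded boundary fluxes, while the interior term carries $|\partial_{x'}g| = 1/l_a$ against $q_a^b \in [0,\overline{q}_a]$, each plainly bounded.

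Combining the two estimates gives $c_{a,b}(T_1 - T_2) \le c_{a,b}\big(K_a^{(1)} + 2K_a^{(2)}\big)$ for every non-source movement, so setting $K \equiv \sum_{(a,b) \in \MM:\, a \not\in \AXS} c_{a,b}\big(K_a^{(1)} + 2K_a^{(2)}\big) < \infty$ establishes \eqref{E:corrTemp}. The only genuine difficulty is $-T_2$, and specifically the entanglement of the two spatial coordinates by the position weight; the remainder is bookkeeping over a finite movement set.
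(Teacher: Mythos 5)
Your proof is correct, and its skeleton --- per-movement constants obtained from \autoref{Lem:flowBound}, then summed against the finite constants $c_{a,b}$ over the finite set of non-source movements --- is exactly the paper's. Where you genuinely depart from the paper is in the treatment of the coupled term $-T_2$, and your instinct there is sound: the paper's proof simply observes that $|(l_a-x-x')/l_a| \le 1$ on $[0,l_a]\times(0,l_a)$, asserts that \autoref{Lem:flowBound} then binds \emph{both} terms, and defines $K \equiv \sum_{(a,b)\in\MM:\, a\notin\AXS} \big( c_{a,b}K_a^{(1)} + c_{a,b}K_a^{(2)} \big)$. As you point out, that one-line appeal is delicate for the second term: \eqref{E:lemmaII} is stated for unweighted integrals over rectangles, its proof factors the double integral into $\big(\int\rho_a^b\big)\cdot\big(-\EER\int \partial_{x'}q_a^b\big)$, and this separability is destroyed by a weight that couples $x$ and $x'$, while the undetermined sign of $\partial q_a^b/\partial x$ forbids simply replacing the weight by its upper bound $1$ (that replacement is only monotone when the integrand has a fixed sign, which is how the $T_1$ bound via \eqref{E:lemmaI} does legitimately work). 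Your layer-cake decomposition --- for fixed $x$, the superlevel sets in $x'$ are unions of at most two intervals, on each of which the fundamental theorem of calculus collapses the flux-derivative integral to boundary differences in $[-\overline{q}_a,\overline{q}_a]$ --- restores precisely the telescoping mechanism underlying the lemma, and the equivalent integration-by-parts route does the same; either gives $-T_2 \le 2K_a^{(2)}$ (or $3K_a^{(2)}$) uniformly in $t$ and $\boldsymbol{\rho}(t)$, and the constant factor is immaterial since only finiteness of $K$ matters. In short, the paper's proof buys brevity at the cost of glossing over the $x$--$x'$ coupling; your argument identifies and fills exactly that gap, and is the version one would want in a careful write-up.
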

\begin{proof}
	Since $| (l_a - x - x^{\prime}) / l_a | \le 1$ for all $(x,x^{\prime}) \in [0,l_a] \times (0,l_a)$, it follows from \autoref{Lem:flowBound} that there exists constants $0 \le K_a^{(1)} < \infty$ and $0 \le K_a^{(2)} < \infty$ for each $a \in \AX / \AXS$ that bind each of the terms in the sums in \eqref{E:corrTemp} from above.  Defining 
	\begin{equation}
		K \equiv \sum_{(a,b) \in \MM: a \not\in \AXS} \big( c_{a,b} K_a^{(1)} + c_{a,b} K_a^{(2)} \big), 
	\end{equation}	
	the result follows immediately.
\end{proof}

\begin{theorem}[Stability of PWBP]
	\label{thm:pwbp}
	Assume that the boundedness conditions of \autoref{Lem:flowBound} hold for all $a \in \AX / \AXS$, assume that arrival rates lie in $\boldsymbol{\Lambda}$, that is, there exists a control policy $p^*(\cdot)$ that can stabilize the network in the sense defined in \autoref{thm:LyapunovDrift}, and let $w_{a,b}(\cdot)$ for each movement $(a,b) \in \MM$ be as defined in \eqref{E:BPwt}.  Then the policy
	\begin{equation}
		p_{n,\mathrm{PWBP}}(t) \equiv \underset{p \in \mathcal{P}_n}{\arg \max} \sum_{(a,b) \in \MM_n} w_{a,b}(t) \EER q_{a,b}(p) ~~ \forall n \in \NX, \label{E:BP}
	\end{equation}
	ensures strong stability of the traffic network.
\end{theorem}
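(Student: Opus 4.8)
The plan is to verify the Lyapunov drift condition \eqref{E:stCond} of \autoref{thm:LyapunovDrift} for the functional \eqref{E:Lyapunov} under PWBP; strong stability then follows at once. Concretely, I would exhibit finite constants $K$ and $\epsilon>0$ for which $\EER\dot V\big(\boldsymbol{\rho}(t)\big)$ is at most $K$ minus $\epsilon$ times the total expected traffic volume. As in every max-weight proof, the argument has two engines: an exact drift identity that surfaces the PWBP objective $\sum_{(a,b)}w_{a,b}\EER q_{a,b}(p)$, and a comparison with the hypothesized stabilizing policy $p^*$ that turns the capacity-region margin into negative drift.

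First I would differentiate \eqref{E:Lyapunov} term by term and substitute the conservation law \eqref{E:consLaw} with the boundary dynamics \eqref{E:Boundary} and \eqref{E:srcBoundary}. The source-arc quadratics yield $c_{a,b}\rho_a^b\big(\lambda_a^b-\EER q_{a,b}\big)$, isolating a pure arrival contribution $c_{a,b}\rho_a^b\lambda_a^b$. Each non-source double integral differentiates, using the symmetry of the kernel, into $-c_{a,b}\int_0^{l_a}\!\int_0^{l_a}\frac{|l_a-x-x'|}{l_a}\rho_a^b(x,t)\frac{\partial q_a^b(x',t)}{\partial x'}\,\dd x'\,\dd x$; integrating by parts in $x'$ produces position-weighted boundary terms carrying $\frac{x}{l_a}q_a^b(l_a,t)$ and $-\frac{l_a-x}{l_a}q_a^b(0,t)$, plus an interior term from the derivative of the kernel. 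The crucial bookkeeping is the network coupling: the egress flux $q_a^b(l_a,t)=q_{a,b}$ attaches the downstream-weighted queue $A_{a,b}\equiv c_{a,b}\int_0^{l_a}\frac{x}{l_a}\rho_a^b\,\dd x$ (resp.\ $c_{a,b}\rho_a^b$ for source $a$), while the ingress relation $q_b^c(0,t)=\pi_{b,c}\sum_{a}q_{a,b}$ from \eqref{E:boudary1} makes the inflow into $b$ contribute to every feeding movement exactly the downstream penalty $B_{a,b}\equiv\sum_{c}c_{b,c}\pi_{b,c}\int_0^{l_b}\frac{l_b-x}{l_b}\rho_b^c\,\dd x$. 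After collecting by movement, the controllable drift is $-\sum_{(a,b)}(A_{a,b}-B_{a,b})\EER q_{a,b}(p)$, where $A_{a,b}$ and $B_{a,b}$ are precisely the two bracketed expressions in \eqref{E:BPwt}; the interior and residual non-source terms are controlled uniformly by \autoref{Lem:flowBound} and \autoref{Cor:flowBound}, which supply $K$.

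Next I would pass from the signed pressure to the weight $w_{a,b}=|A_{a,b}-B_{a,b}|$. Writing $-\sum_{(a,b)}(A_{a,b}-B_{a,b})\EER q_{a,b}(p_{\mathrm{PWBP}})$ as $-\sum_{(a,b)}w_{a,b}\EER q_{a,b}(p_{\mathrm{PWBP}})$ plus a correction supported only on movements with $A_{a,b}<B_{a,b}$, I note that this correction involves only bounded queues: non-source densities are bounded by \autoref{Lem:flowBound}, and a source movement obeys $c_{a,b}\rho_a^b<B_{a,b}$ only when its queue lies below the bounded penalty $B_{a,b}$ (which is always bounded, since source arcs never receive flow). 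Absorbing the correction into $K$ gives $\EER\dot V\le K+\sum_{a\in\AXS}c_{a,b}\rho_a^b\lambda_a^b-\sum_{(a,b)}w_{a,b}\EER q_{a,b}(p_{\mathrm{PWBP}})$. By the defining optimality \eqref{E:BP}, the weighted throughput of PWBP dominates that of any feasible policy, so $\sum_{(a,b)}w_{a,b}\EER q_{a,b}(p_{\mathrm{PWBP}})\ge\sum_{(a,b)}w_{a,b}\EER q_{a,b}(p^*)$.

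Finally I would invoke the capacity-region hypothesis. Since $w_{a,b}\ge A_{a,b}-B_{a,b}$ and fluxes are non-negative, $-\sum w_{a,b}\EER q_{a,b}(p^*)\le-\sum(A_{a,b}-B_{a,b})\EER q_{a,b}(p^*)$; the non-source movements again contribute only bounded terms, leaving $\EER\dot V\le K+\sum_{a\in\AXS}c_{a,b}\rho_a^b\big(\lambda_a^b-\EER q_{a,b}(p^*)\big)$. Because the arrival rates lie in $\boldsymbol{\Lambda}$, condition \eqref{E:capReg2} together with the convex-hull/time-sharing structure furnishes a stationary benchmark $p^*$ whose source outflows strictly dominate the arrivals, $\EER q_{a,b}(p^*)\ge\lambda_a^b+\epsilon'$. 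Substituting gives $\EER\dot V\le K-\epsilon'\sum_{a\in\AXS}c_{a,b}\rho_a^b$; since the non-source densities are bounded, the missing non-source term required by \eqref{E:stCond} is itself bounded and is folded into $K$, so \eqref{E:stCond} holds and \autoref{thm:LyapunovDrift} delivers strong stability. I expect the main obstacle to be the drift identity of the second step---differentiating the non-smooth kernel $|l_a-x-x'|/l_a$, keeping the boundary and interior contributions straight, and checking that the ingress coupling reproduces $B_{a,b}$ exactly so that the weights \eqref{E:BPwt} emerge intact; the signed-versus-absolute reconciliation and the construction of the time-sharing benchmark $p^*$ are the secondary difficulties.
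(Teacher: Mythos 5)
Your proposal is correct and follows the same overall architecture as the paper's proof: differentiate the Lyapunov functional \eqref{E:Lyapunov}, push the conservation law \eqref{E:consLaw} and boundary dynamics \eqref{E:Boundary}--\eqref{E:srcBoundary} through the double-integral kernel to surface the movement-indexed pressure $\sum_{(a,b)}\widetilde{w}_{a,b}(t)\EER q_{a,b}(p)$, bound the uncontrollable interior and boundary residuals via \autoref{Lem:flowBound} and \autoref{Cor:flowBound}, and close with a max-weight comparison against the hypothesized stabilizing policy $p^*$ so that \autoref{thm:LyapunovDrift} applies. Where you genuinely differ is at the two points where the paper is loosest, and in both cases your treatment is the more careful one. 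First, the signed-versus-absolute issue: the paper passes from the signed objective \eqref{temp2} to the absolute-weight objective \eqref{temp3} by noting only that the former is bounded above by the latter, which by itself points the wrong way --- what the drift chain needs is a lower bound on the \emph{signed} pressure of the policy that maximizes the \emph{absolute-weight} objective. Your decomposition of the signed pressure of $p_{\mathrm{PWBP}}$ into the absolute-weight pressure plus a correction supported on movements with $\widetilde{w}_{a,b}<0$, with the correction bounded because on that event $|\widetilde{w}_{a,b}|\le B_{a,b}$ and downstream arcs are never source arcs, together with the reverse inequality $-\sum w_{a,b}\EER q_{a,b}(p^*)\le-\sum\widetilde{w}_{a,b}\EER q_{a,b}(p^*)$ applied to the benchmark, is exactly the patch needed to make this step rigorous. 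Second, the benchmark: the paper extracts its key inequality \eqref{E:themPf1} from the stabilizability hypothesis ``by definition,'' whereas you construct a stationary time-sharing policy whose source service rates dominate arrivals by a margin $\epsilon'$ (the classical Tassiulas--Ephremides/Neely argument). Do note the shared caveat: that margin requires the arrival rates to be strictly interior to $\boldsymbol{\Lambda}$ (or the drift-form hypothesis exactly as stated in the theorem), and in this continuum setting $\EER q_{a,b}(p)$ depends on the state through the supply and demand functions, so the domination must hold uniformly over states --- a point neither your sketch nor the paper fully discharges. Finally, your comparison step invokes network-wide optimality of PWBP while \eqref{E:BP} is a per-node argmax; you should record, as the paper does via \eqref{E:PWBP_1}--\eqref{E:PWBP_2}, that the objective decomposes by node because each movement belongs to exactly one intersection, so per-node maximization attains the network-wide maximum.
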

\begin{proof}
	We demonstrate stability by showing that the conditions of \autoref{thm:LyapunovDrift} hold:
	\begin{align}
		\EER \dot{V} \big( \boldsymbol{\rho}(t) \big) = & ~ \EER \sum_{(a,b) \in \MM: \atop a \in \AXS} c_{a,b} \frac{\dd \rho_a^b(t)}{\dd t} \rho_a^b(t) \nonumber \\
		& + \frac{1}{2}\EER \sum_{(a,b) \in \MM: \atop a \not\in \AXS} c_{a,b} \frac{\dd}{\dd t} \int_0^{l_a} \int_0^{l_a} \Big| \frac{l_a - x - x^{\prime}}{l_a} \Big| \rho_a^b(x^{\prime},t)\rho_a^b(x,t) \dd x^{\prime} \dd x.
	\end{align}
	Applying the Leibniz rule to the second term on the on the right-hand side, the derivatives of the integrals inside the sums become
	\begin{align}
		\int_0^{l_a} \int_0^{l_a} \Big| \frac{l_a - x - x^{\prime}}{l_a} \Big| \frac{\partial }{\partial t} \big( \rho_a^b(x^{\prime},t)\rho_a^b(x,t) \big)  \dd x^{\prime} \dd x.
	\end{align}
	From \eqref{E:consLaw}, this becomes
	\begin{align}
		-  \int_0^{l_a} \int_0^{l_a} \Big| \frac{l_a - x - x^{\prime}}{l_a} \Big| \rho_a^b(x,t) \frac{\partial q_a^b(x^{\prime},t) }{\partial x}  \dd x^{\prime} \dd x.
	\end{align}
	For each $a$, these integrals can be decomposed as
	\begin{align}
		\frac{\partial q_a^b(0,t) }{\partial x} \int_0^{l_a}  \Big| \frac{l_a - x}{l_a} \Big| & \rho_a^b(x,t)  \dd x + \frac{\partial q_a^b(l_a,t) }{\partial x} \int_0^{l_a} \Big| \frac{x}{l_a} \Big| \rho_a^b(x,t) \dd x \nonumber \\
		&+ \int_0^{l_a} \int_{0+}^{l_a-} \Big| \frac{l_a - x - x^{\prime}}{l_a} \Big| \rho_a^b(x,t) \frac{\partial q_a^b(x^{\prime},t) }{\partial x}  \dd x^{\prime} \dd x.
	\end{align}
	Then from \eqref{E:Boundary}, we have that
	\begin{align}
		\EER \dot{V} \big( \boldsymbol{\rho}(t) \big)
		 & = \sum_{(a,b) \in \MM: a \in \AXS} c_{a,b} \EER \Big( \frac{\dd A_a^b(t)}{\dd t} \rho_a^b(t) - q_{a,\OUT}^b\big(p_{a,\OUT}(t)\big)  \rho_a^b(t) \Big) \nonumber \\
		& ~~+ \sum_{(a,b) \in \MM: a \not\in \AXS} c_{a,b} \EER \Big( q_{a,\IN}^b\big(p_{a,\IN}(t)\big) \int_0^{l_a} \Big| \frac{l_a - x}{l_a} \Big| \rho_a^b(x,t) \dd x \Big) \nonumber \\ 
		& ~~- \sum_{(a,b) \in \MM: a \not\in \AXS} c_{a,b} \EER \Big( q_a^b(0,t)  \int_0^{l_a} \Big| \frac{l_a - x}{l_a} \Big| \rho_a^b(x,t) \dd x \Big)  \nonumber \\
		& ~~- \sum_{(a,b) \in \MM: a \not\in \AXS} c_{a,b} \EER \Big( q_{a,\OUT}^b \big(p_{a,\OUT}(t)\big) \int_0^{l_a} \Big| \frac{x}{l_a} \Big| \rho_a^b(x,t) \dd x \Big) \nonumber \\ 
		& ~~+ \sum_{(a,b) \in \MM: a \not\in \AXS} c_{a,b} \bigg[ \EER \Big( q_a^b(l_a,t)  \int_0^{l_a} \Big| \frac{x}{l_a} \Big| \rho_a^b(x,t) \dd x  \Big)  \nonumber \\
		& \qquad \qquad \qquad - \EER \Big( \int_0^{l_a} \int_{0+}^{l_a-} \Big| \frac{l_a - x - x^{\prime}}{l_a} \Big| \rho_a^b(x,t) \frac{\partial q_a^b(x^{\prime},t) }{\partial x}  \dd x^{\prime} \dd x \Big) \bigg].
	\end{align}
	Appeal to \autoref{Cor:flowBound} and noting that the third sum on the right-hand side is non-negative (and does not involve control variables), we have that there exists a constant $0 < \widetilde{K} < \infty$ such that
	\begin{align}
	& \EER \dot{V} \big( \boldsymbol{\rho}(t) \big)  \le \widetilde{K} -  \sum_{(a,b) \in \MM: \atop a \in \AXS} c_{a,b} \EER \Big( q_{a,\OUT}^b\big(p_{a,\OUT}(t)\big)  \rho_a^b(t) -  \frac{\dd A_a^b(t)}{\dd t} \rho_a^b(t) \Big) \nonumber \\
	& - \sum_{(a,b) \in \MM: \atop a \not\in \AXS} c_{a,b} \EER \bigg[ \int_0^{l_a} \Big| \frac{x}{l_a} \Big| \rho_a^b(x,t) q_{a,\OUT}^b\big(p_{a,\OUT}(t)\big) \dd x 
 - \int_0^{l_a} \Big| \frac{l_a - x}{l_a} \Big| \rho_a^b(x,t) q_{a,\IN}^b\big(p_{a,\IN}(t)\big) \dd x \bigg]. \label{E:PfTemp}
	\end{align}
	Rearranging terms and utilizing the properties of conditional expectation, we get
	\begin{align}
	\EER \dot{V} \big( \boldsymbol{\rho}(t) \big) &\le \widetilde{K} -  \sum_{(a,b) \in \MM: \atop a \in \AXS} \rho_a^b(t) \EER \bigg[ c_{a,b} q_{a,\OUT}^b\big(p_{a,\OUT}(t)\big)   -  c_{a,b} \frac{\dd A_a^b(t)}{\dd t} \bigg] \nonumber \\
	& - \sum_{(a,b) \in \MM: \atop a \not\in \AXS}  \int_0^{l_a} \rho_a^b(x,t) \bigg[ c_{a,b} \Big| \frac{x}{l_a} \Big| q_{a,\OUT}^b\big(p_{a,\OUT}(t)\big)  - c_{a,b} \Big| \frac{l_a - x}{l_a} \Big| q_{a,\IN}^b\big(p_{a,\IN}(t)\big) \bigg] \dd x.
	\end{align}
	By assumption, we have that there exist constants $0 < K^* < \infty$ and $\epsilon^* > 0$ associated with the policy $p^*(\cdot)$ such that
	\begin{align}
	\EER \dot{V} \big( \boldsymbol{\rho}(t) \big) \le K^* - \epsilon^* \Big( \sum_{\substack{(a,b) \in \MM: \\ a \in \AXS}} \EE \rho_a^b(t) + \sum_{\substack{(a,b) \in \MM: \\ a \not\in \AXS}} \EE \int_0^{l_a} \rho_a^b(x,t) \dd x \Big).
	\end{align}
	By definition, we have for each $t \ge 0$ that
	\begin{align}
	\epsilon^* \Big( & \sum_{\substack{(a,b) \in \MM: \\ a \in \AXS}} \EE \rho_a^b(t) + \sum_{\substack{(a,b) \in \MM: \\ a \not\in \AXS}} \EE \int_0^{l_a} \rho_a^b(x,t) \dd x \Big) \le \nonumber \\
	& \underset{p \in \mathcal{P}}{\max} ~ \sum_{(a,b) \in \MM: \atop a \in \AXS} \EE \rho_a^b(t) \EER \bigg[ c_{a,b} q_{a,\OUT}^b\big(p_{a,\OUT}(t)\big)   -  c_{a,b} \frac{\dd A_a^b(t)}{\dd t} \bigg] \nonumber \\
	& \qquad ~~ + \sum_{(a,b) \in \MM: \atop a \not\in \AXS} \EE \int_0^{l_a} \rho_a^b(x,t) \bigg[ c_{a,b} \Big| \frac{x}{l_a} \Big| q_{a,\OUT}^b\big(p_{a,\OUT}(t)\big)  - c_{a,b} \Big| \frac{l_a - x}{l_a} \Big| q_{a,\IN}^b\big(p_{a,\IN}(t)\big) \bigg] \dd x. \label{E:themPf1}
	\end{align}
	Hence, setting $K \equiv \max(K^*,\widetilde{K})$, we have by appeal to \autoref{thm:LyapunovDrift} that the control policy, $p(\cdot) \in \mathcal{P}$, which maximizes the right-hand side of \eqref{E:themPf1} for each $t \ge 0$ is also network stabilizing.  
	
	It remains to show that is equivalent to \eqref{E:BP}.  We have from \eqref{E:themPf1} that
	\begin{align}
	& \underset{p \in \mathcal{P}}{\arg \max} \sum_{(a,b) \in \MM: \atop a \in \AXS} c_{a,b} \rho_a^b(t) q_{a,\OUT}^b(p) + \sum_{(a,b) \in \MM: \atop a \not\in \AXS} c_{a,b} \int_0^{l_a} \rho_a^b(x,t) \EER \bigg[ \Big| \frac{x}{l_a} \Big| q_{a,\OUT}^b(p) 
	- \Big| \frac{l_a - x}{l_a} \Big| q_{a,\IN}^b(p) \bigg] \dd x \label{E:opt1}
	\end{align}
	for each $t \ge 0$ is network stabilizing.  The term corresponding to exogenous arrivals in \eqref{E:themPf1} was dropped from the optimization problem since it constitutes an additive constant to the problem.  We have also applied the principle of \textit{opportunistically maximizing an expectation} to drop the expectations in \eqref{E:themPf1} from the problem.
	
	The optimization problem \eqref{E:opt1} can be stated in terms of the intersection movements, by appeal to \eqref{E:boudary1} and \eqref{E:boudary2}:
	\begin{align}
		\underset{p \in \mathcal{P}}{\arg \max} & \sum_{(a,b) \in \MM: \atop a \in \AXS} c_{a,b} \rho_a^b(t) q_{a,b}(p) \nonumber \\
		& + \sum_{(a,b) \in \MM: \atop a \not\in \AXS} c_{a,b} \int_0^{l_a} \rho_a^b(x,t) \EER \bigg[ \Big| \frac{x}{l_a} \Big| q_{a,b}(p) 
		- \Big| \frac{l_a - x}{l_a} \Big| \sum_{c \in \Pi(a): \atop (c,a) \in \MM} \pi_{a,b}(t) q_{c,a}(p) \bigg] \dd x,  \label{E:opt2}
	\end{align}
	which upon re-arranging terms and the orders of summation and integration becomes
	\begin{align}
		\sum_{(a,b) \in \MM} \widetilde{w}_{a,b}(t) \EER q_{a,b}(p), \label{temp2}
	\end{align}
	where 
	\begin{align}
		\widetilde{w}_{a,b}(t) \equiv  c_{a,b} \rho_a^b(t)  -  \int_0^{l_b} \Big| \frac{l_b - x}{l_b} \Big| \sum_{\substack{c \in \Sigma(b): \\ (b,c) \in \MM }} c_{b,c} \pi_{b,c}(t) \rho_b^c(x,t) \dd x 
	\end{align}
	for $a \in \AXS$ and
	\begin{align}
		\widetilde{w}_{a,b}(t) \equiv  c_{a,b} \int_0^{l_a} \Big| \frac{x}{l_a} \Big| \rho_a^b(x,t) \dd x  -  \int_0^{l_b} \Big| \frac{l_b - x}{l_b} \Big| \sum_{\substack{c \in \Sigma(b): \\ (b,c) \in \MM }} c_{b,c} \pi_{b,c}(t) \rho_b^c(x,t) \dd x 
	\end{align}
	for $a \not\in \AXS$. Define $w_{a,b}(t) \equiv |\widetilde{w}_{a,b}(t)|$ for all $(a,b) \in \MM$, then \eqref{temp2} is bounded from above by
	\begin{align}
		\sum_{(a,b) \in \MM} w_{a,b}(t) \EER q_{a,b}(p), \label{temp3}
	\end{align}
	Since intersection movements do not interact across nodes instantaneously { \color{black}and $p_n \cap p_n = \emptyset$ for any $n,m \in \NX$ such that $n \ne m$}, the optimization problem naturally decomposes by intersection.  That is, maximizing \eqref{temp3} is equivalent to solving the $|\NX|$ problems
	\begin{align}
		\underset{p \in \mathcal{P}_n}{\arg \max} \sum_{(a,b) \in \MM_n} w_{a,b}(t) \EE^{\boldsymbol{\rho}(t)} q_{a,b}(p) \quad \forall n \in \NX.
	\end{align}
	This completes the proof.
\end{proof}

\section{Experiments}
\label{S:simulation}

\subsection{Real-world isolated intersection experiment}
In this section, we perform a comparison between PWBP and a real-world implementation of adaptive control using SCOOT. We perform the experiment on the isolated intersection depicted in \autoref{F:IP17}, which is the intersection of Hamdan Bin Mohammed Street - Fatima Bint Mubarak Street intersection in Abu Dhabi in the United Arab Emirates (UAE).  
\begin{figure}[h!]
	\centering
	
	\resizebox{0.45\textwidth}{!}{%
		\includegraphics{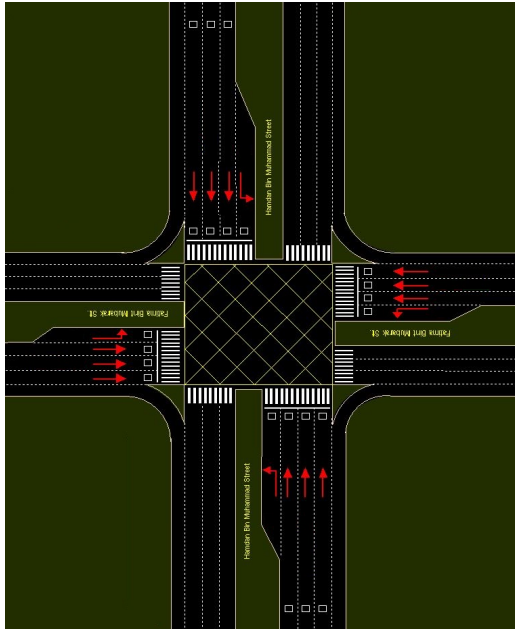} \hspace{0.2in}}
	
	\caption{Layout of the calibrated single intersection in Abu Dhabi}
	\label{F:IP17}
\end{figure}
We utilize \textit{high-resolution traffic data} obtained for all the loop detectors in each lane along with second-by-second signal status data. This allows us to build a microscopic traffic simulation model that serves as an accurate reconstruction of traffic at the intersection (both in terms of demands and signal status).  The data covers a 24-hour period on December 6, 2017, a typical working day.  The performance of PWBP compared to the real world adaptive control implementation at the intersection is depicted in \autoref{F:SCOOT}.
\begin{figure}[h!]
	\centering
	
	\resizebox{0.6\textwidth}{!}{%
		\includegraphics{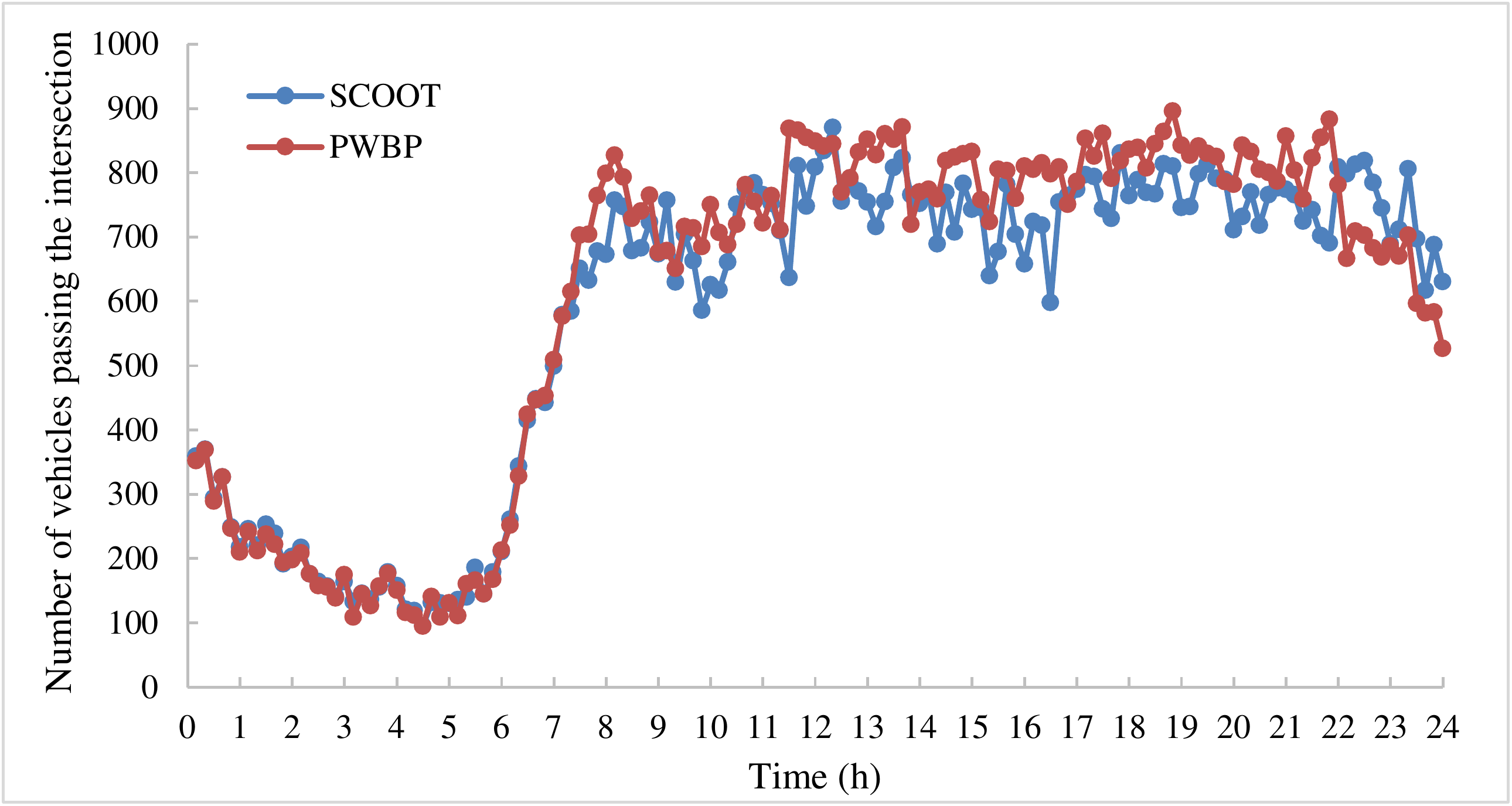} \hspace{0.2in}}
	
	{ \small	(a) }
	
	\resizebox{0.6\textwidth}{!}{%
		\includegraphics{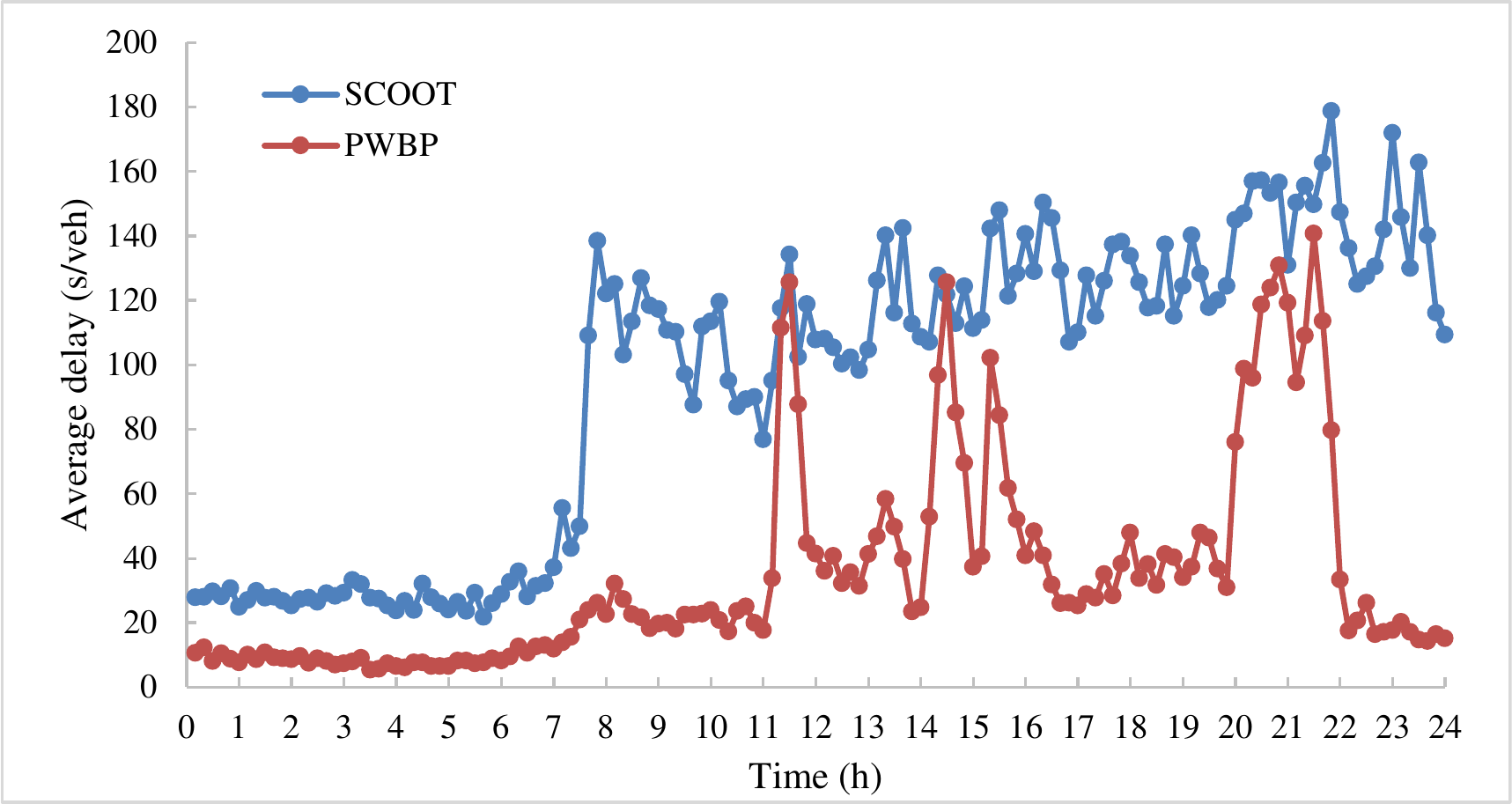} \hspace{0.2in}}
	
	{ \small	(b) }
	
	\caption{Comparison between SCOOT and PWBP in (a)flow, (b)delay.}
	\label{F:SCOOT}
\end{figure}
\autoref{F:SCOOT}a compares the intersection throughput for the two control policies and \autoref{F:SCOOT}b depicts the average delay profiles associated with the two control strategies.  We see modest improvements in throughput, but substantial improvements in delay associated with PWBP.  This hints at significant improvements at the individual intersection level. 
The figures demonstrate that PWBP outperforms SCOOT substantially in terms of delay under both low and high demands. The average delay over the entire day for SCOOT is 95 seconds, while the average delay of PWBP is only 35 seconds.  In addition, when demand is high, PWBP control has higher throughput.

\subsection{Network experiments}
\textbf{\fontfamily{cmss}\selectfont Network description}. We utilize a microscopic traffic simulation network of a part of the city of Abu Dhabi consisting of eleven signalized intersections but also containing unsignalized intermediate junctions. The network layout is shown in \autoref{F:network}.
\begin{figure}[h!]
	\centering
	\resizebox{0.45\textwidth}{!}{%
		\includegraphics{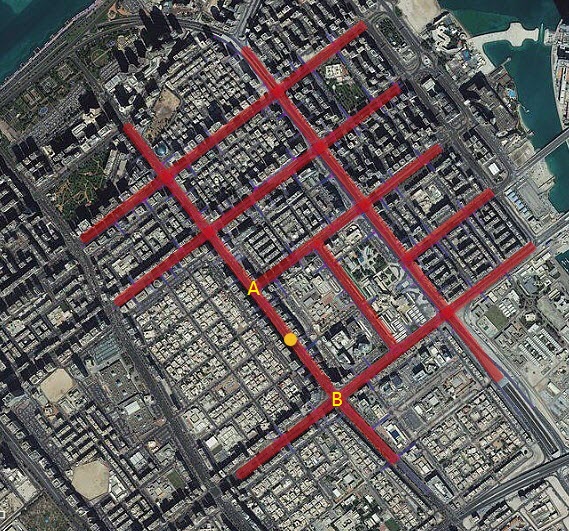}}
	
	\caption{Simulation network in Abu Dhabi.} 
	\label{F:network}
\end{figure}
We compare PWBP control with three other control policies: fixed time, standard BP control, and CABP control.  The fixed timing plans are optimized and include optimal offsets (i.e., signal coordination).  BP, CABP, and PWBP are all implemented using a software interface.  To simplify the experiments, we utilize a uniform demand at the boundaries, which we vary to gauge the capacity region of the network.  Using a uniform (average) demand level allows us to use a single number (namely the demand) as a way to gauge the capacity region.

\bigskip

\textbf{\fontfamily{cmss}\selectfont Average network delay and network capacity region}. \autoref{F:CR} shows the total network delay under different demand scenarios (ranging from 500 to 1800 veh/h on average) for BP, CABP, and PWBP using two types of phasing schemes: one with four phases (``4-phase'' scheme) and a scheme with eight phases (``8-phase scheme'').
\begin{figure}[h!]
	\centering
	
	\resizebox{.9\textwidth}{!}{%
		\includegraphics{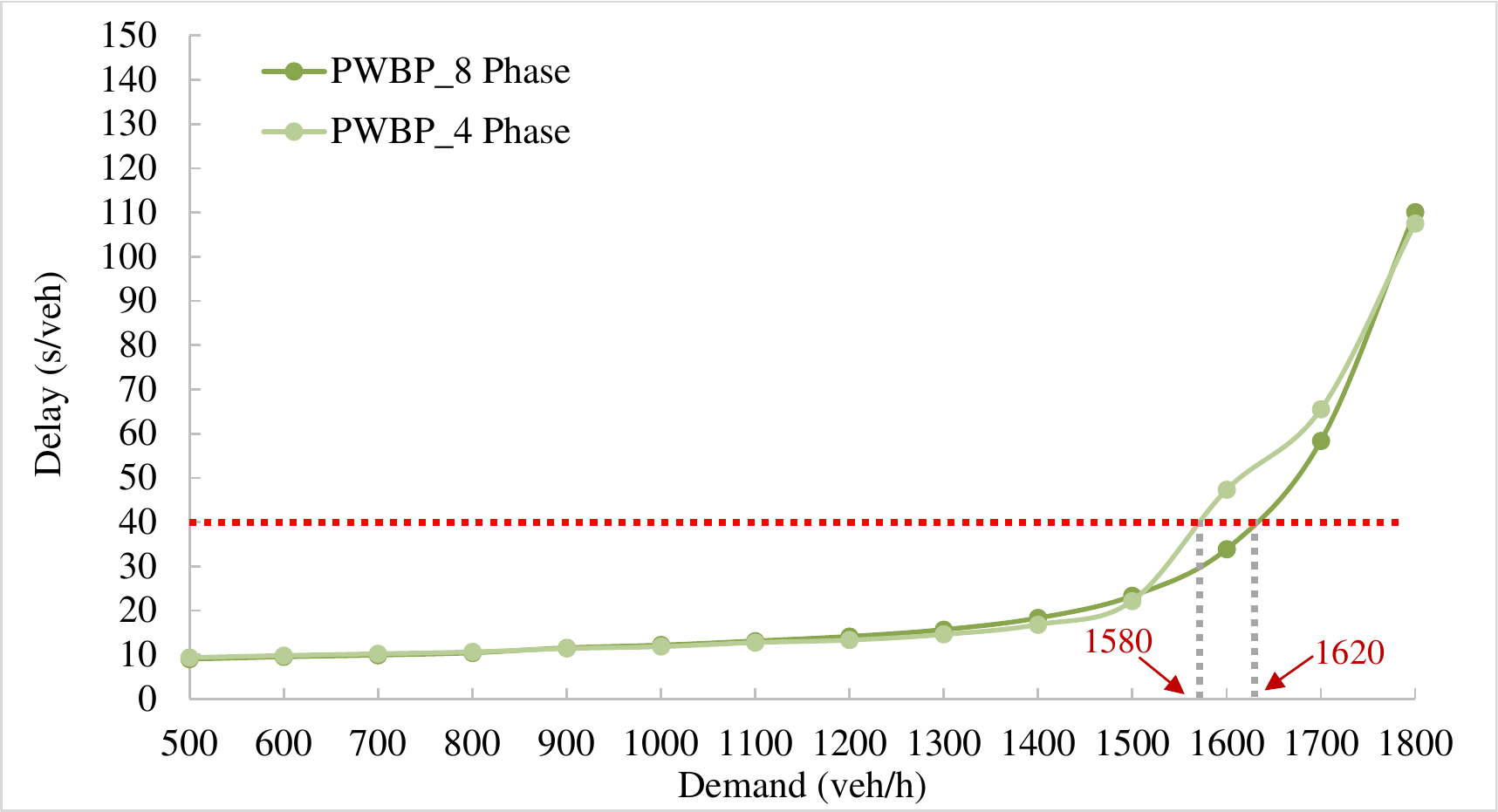} \hspace{0.2in}
		\includegraphics{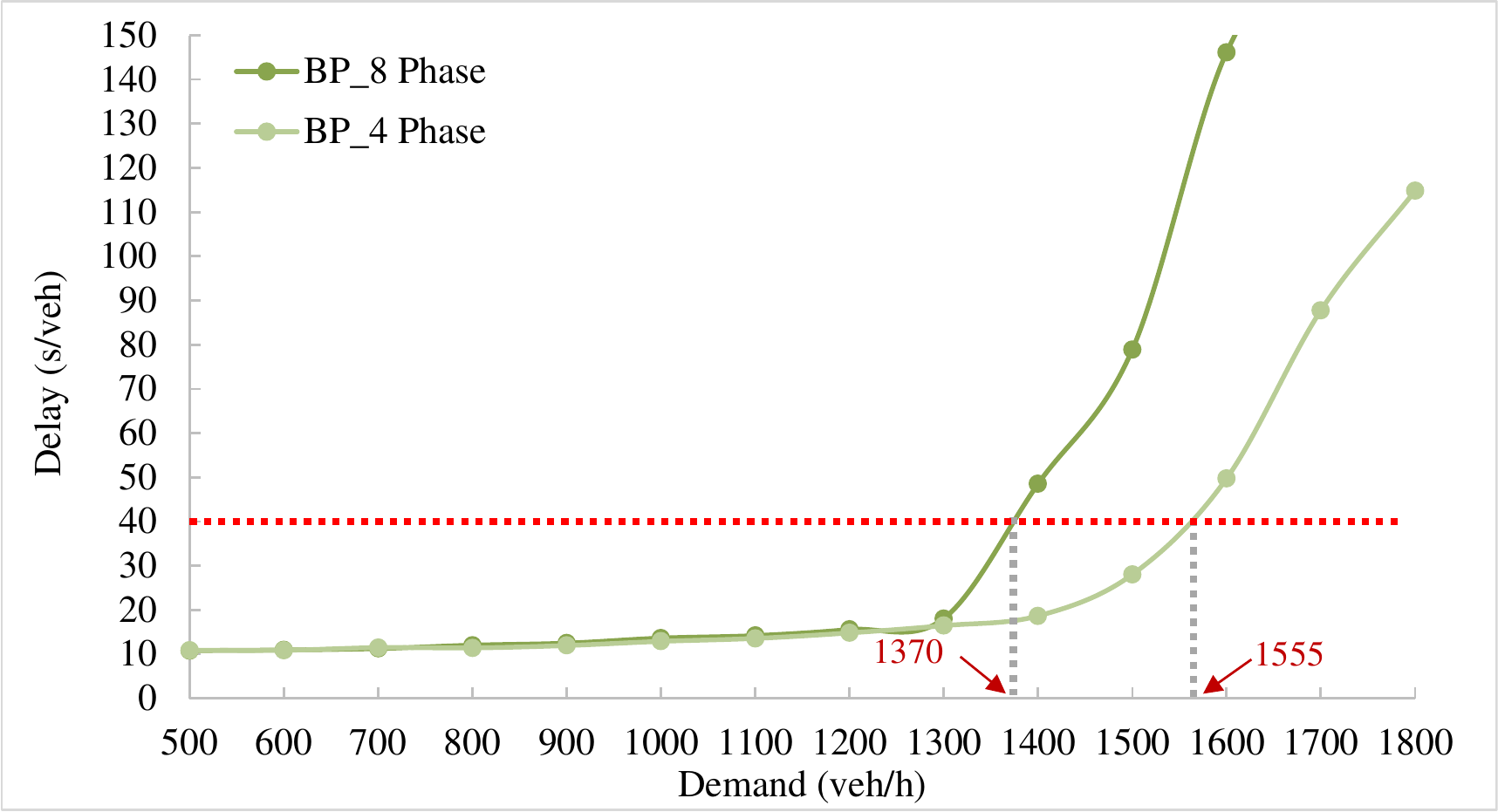}}
	
	(a) \hspace{2.2in} (b)
	
	\resizebox{.9\textwidth}{!}{%
		\includegraphics{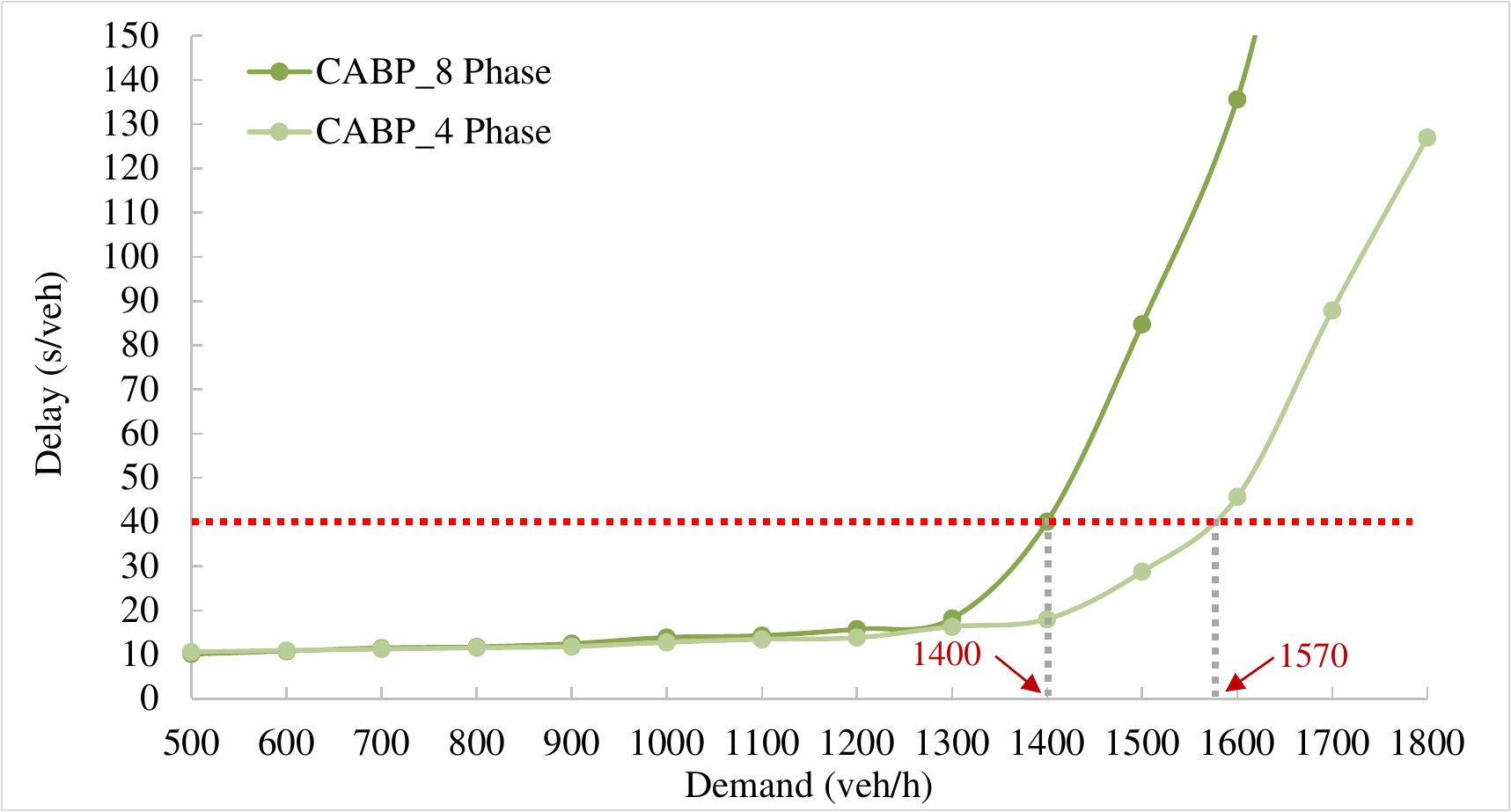} \hspace{0.2in}
		\includegraphics{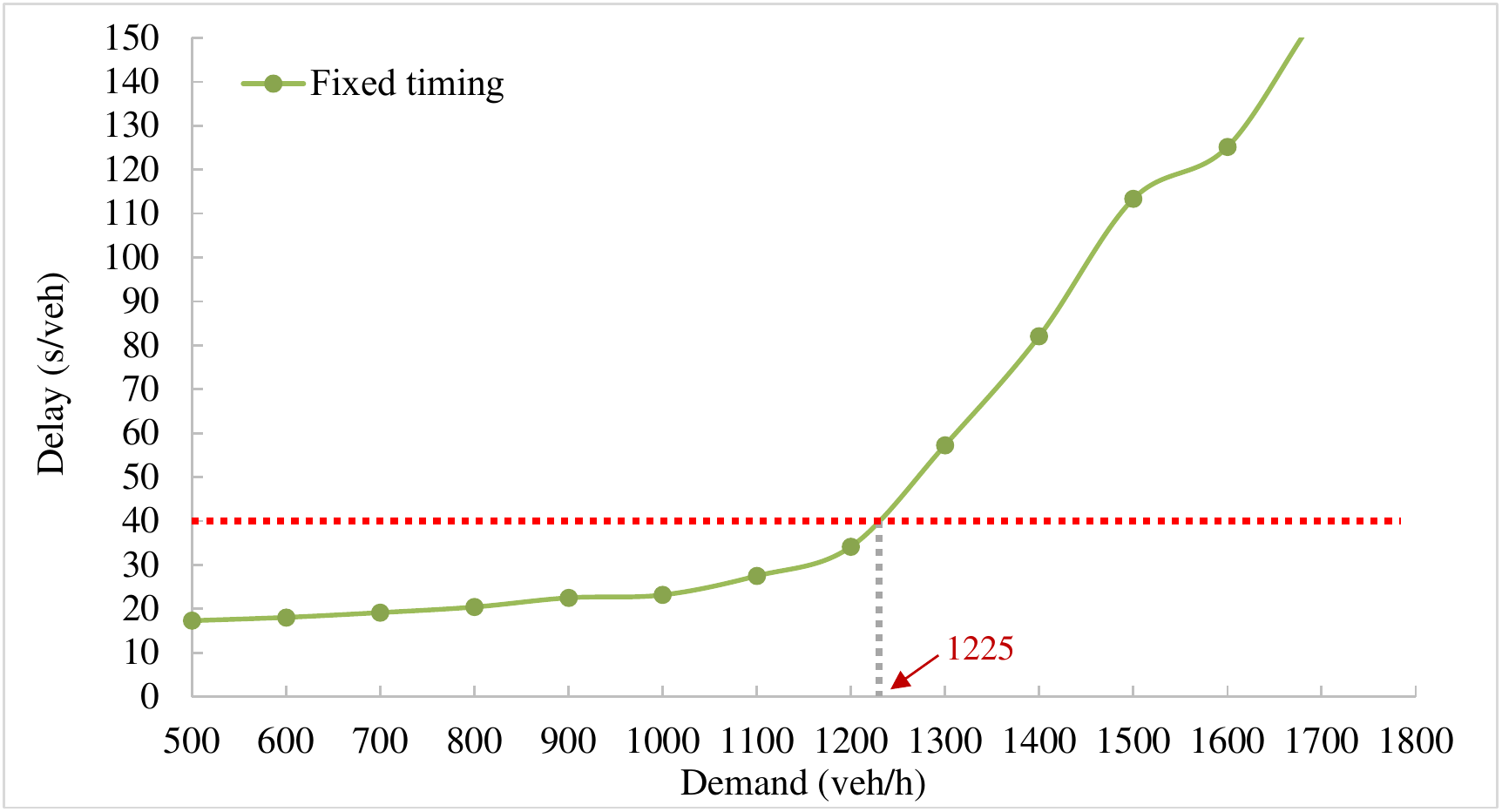}}
	
	(c) \hspace{2.2in} (d) 
	
	\caption{Delay patterns at varying demand levels for different control policies.} 
	\label{F:CR}
\end{figure}
The 4-phase scheme includes phases 1-4 in \autoref{F:phases}, while the 8-phase scheme is all eight phases in \autoref{F:phases}.  We observe that 40 s/veh is a threshold delay, beyond which the delay increases dramatically.  We can hence treat 40 s/veh as indicative of reaching the boundary of the capacity region. From \autoref{F:CR}, with the 8-phase scheme, we see that delays begin to increase rapidly at a higher average demand levels for the PWBP: 1620 veh/h for the 8-phase scheme vs. 1580 veh/h for the 4-phase scheme.  However, this is not the case for BP and CABP control, since they do not distinguish left-turning and through queues, which results in blocking at the points where roads widen (left-turn lane addition). This indicates that BP and CABP have a wider capacity region using a 4-phase scheme compared to the 8-phase scheme. All subsequent experiments use an 8-phase scheme with PWBP and 4-phase schemes with BP and CABP.  The demands at which delays begin to increase quickly for fixed signal timing, BP, CABP, and PWBP are 1225, 1555, 1570, and 1620 veh/h, respectively.  \autoref{F:delay_no_inc} shows a comparison of network delays for the four control policies under varying demands. 
\begin{figure}[h!]
	\centering
	\resizebox{0.6\textwidth}{!}{%
		\includegraphics{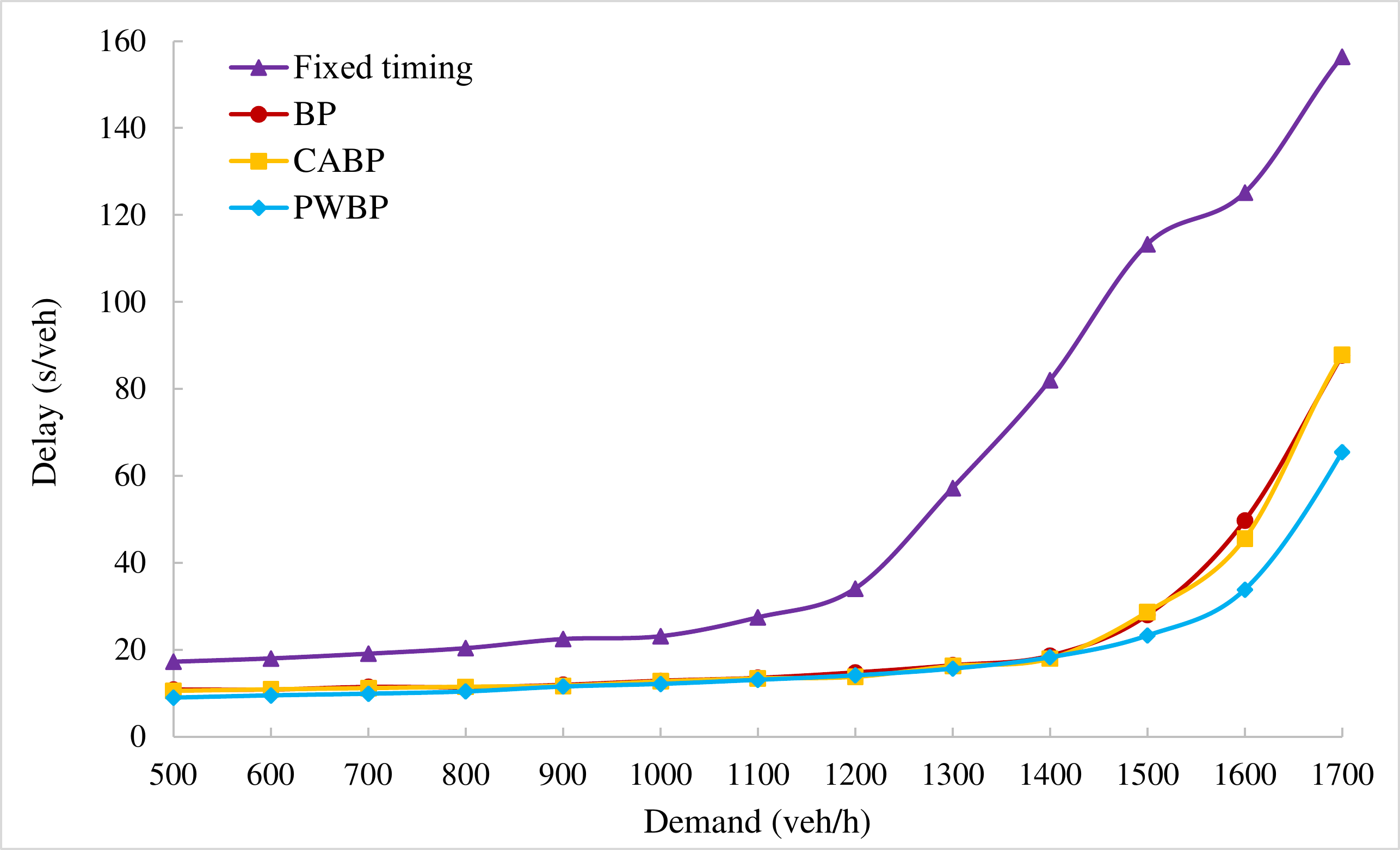}}
	
	\caption{Network delays associated with different control policies.} 
	\label{F:delay_no_inc}
\end{figure}

\medskip

\textbf{\fontfamily{cmss}\selectfont Congestion propagation}.  In the following experiments, we set the demand levels to the deterioration bounds of the control policies and compare how congestion levels propagate over time.  Since the deterioration bounds for BP and CABP are close, we just use CABP's bound (1570 veh/h); we, hence, compare three demand scenarios.   \autoref{F:speed1} -- \ref{F:speed2} show how the speeds of all vehicles within the network are distributed under demand levels 1225, 1570 and 1620 veh/h.   
\begin{figure}[h!]
	\centering
	\resizebox{.9\textwidth}{!}{%
		\includegraphics{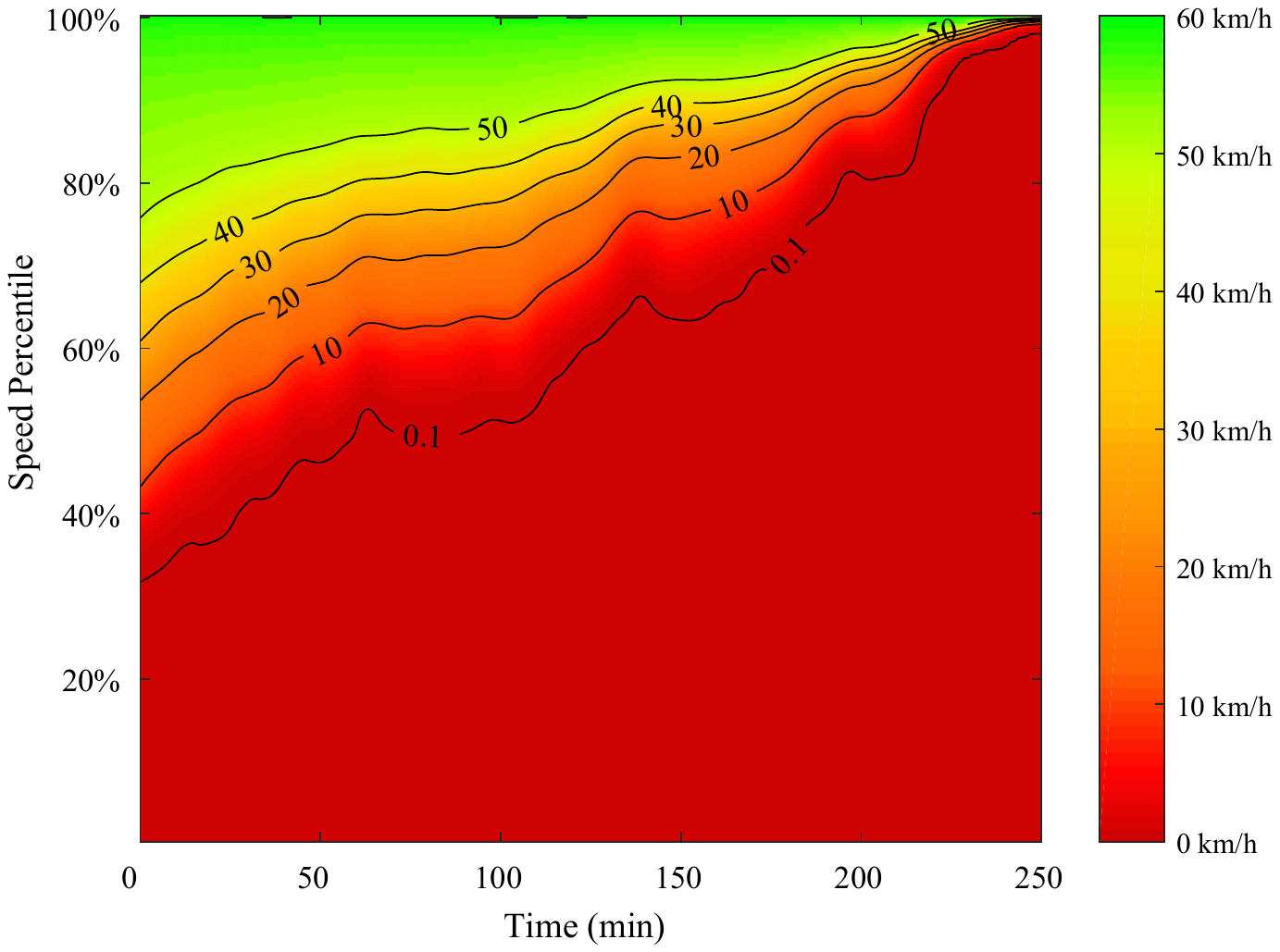} \hspace{0.1in}
		\includegraphics{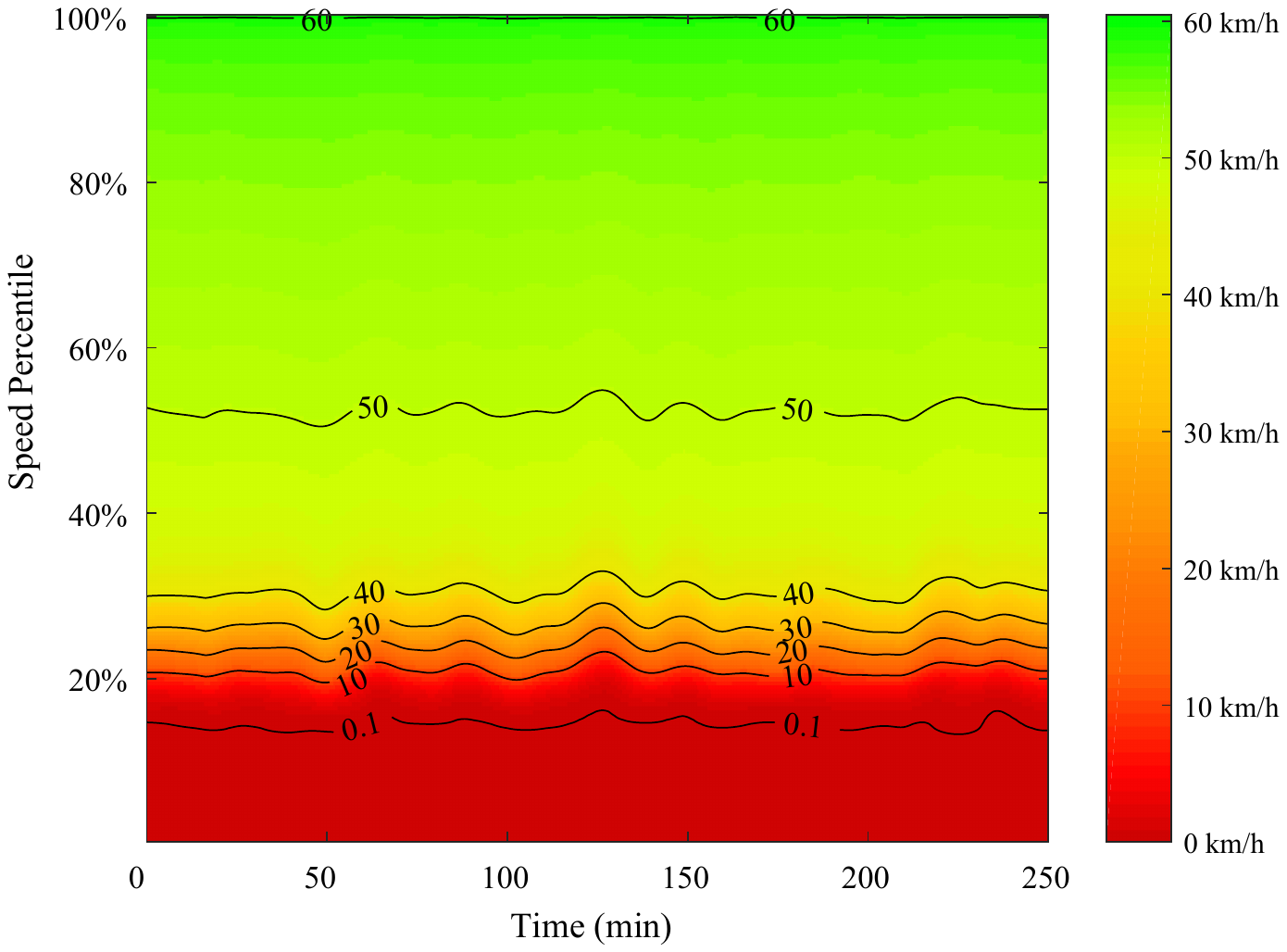}}
	
	{ \small(a) FT@1225vph \hspace{1.7in} (b) BP@1225vph }
	
	\resizebox{.9\textwidth}{!}{%
		\includegraphics{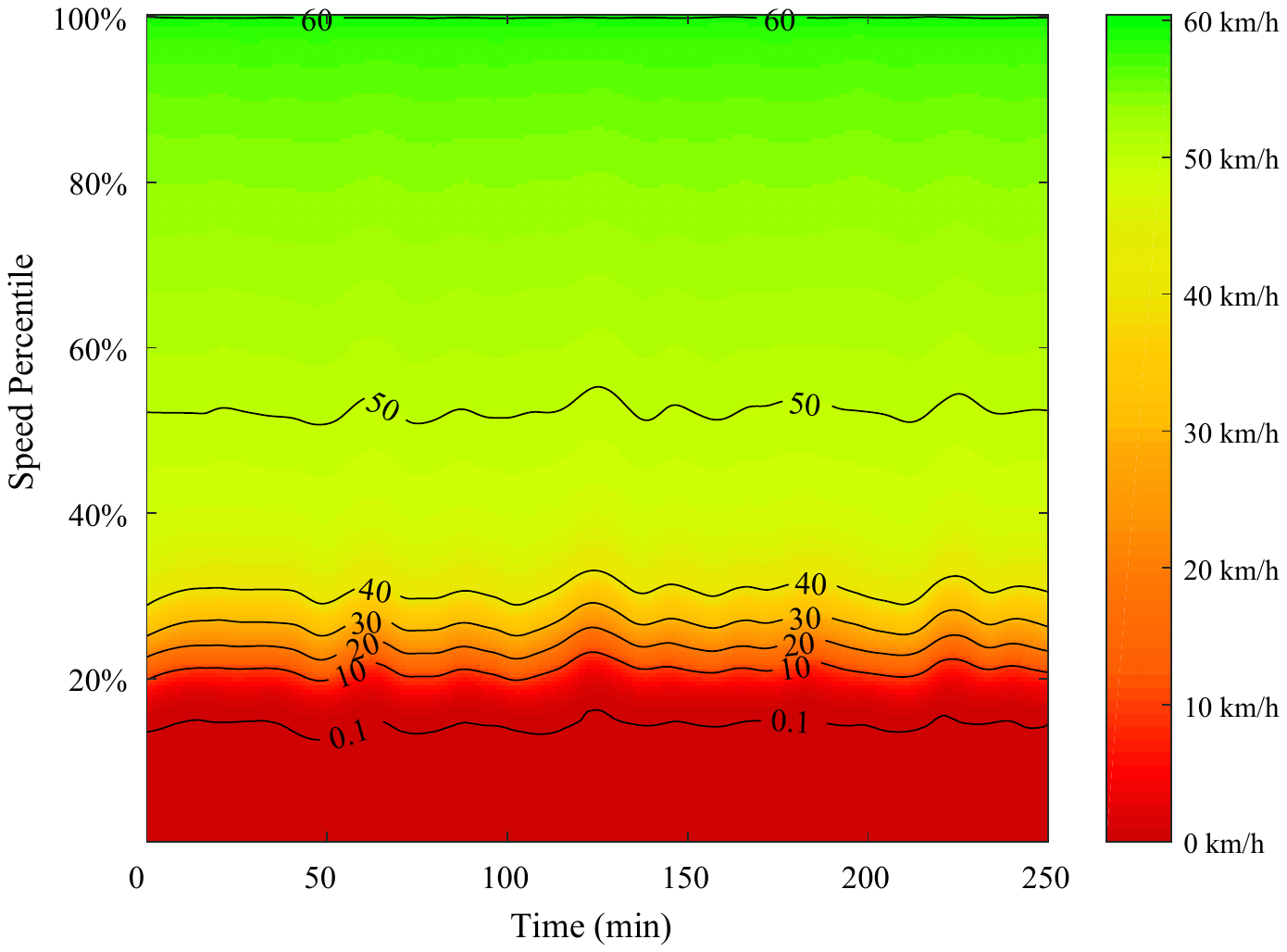} \hspace{0.1in}
		\includegraphics{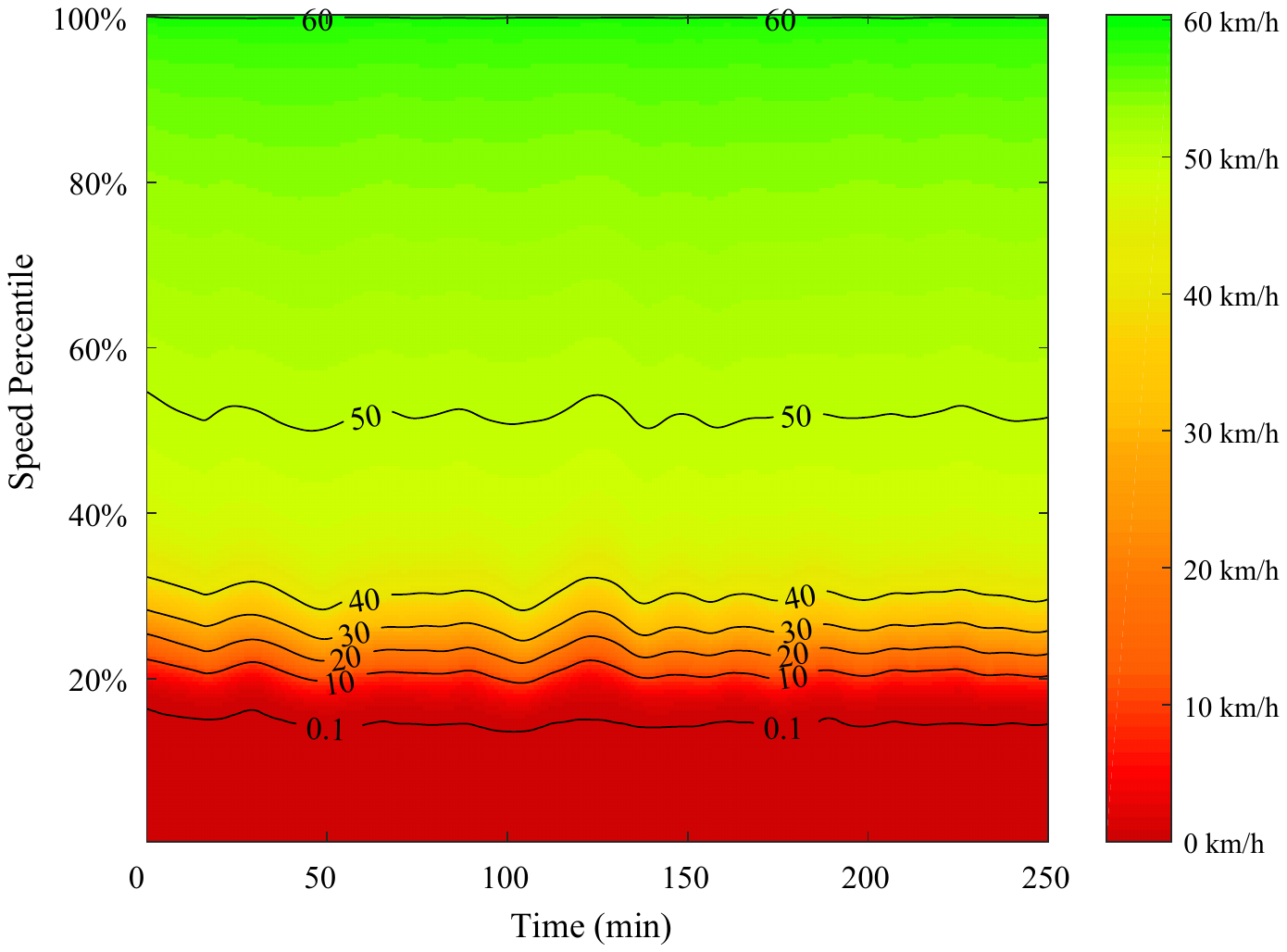}}
	
	{ \small(c) CABP@1225vph \hspace{1.4in} (d) PWBP@1225vph}
	
	\resizebox{.9\textwidth}{!}{%
		\includegraphics{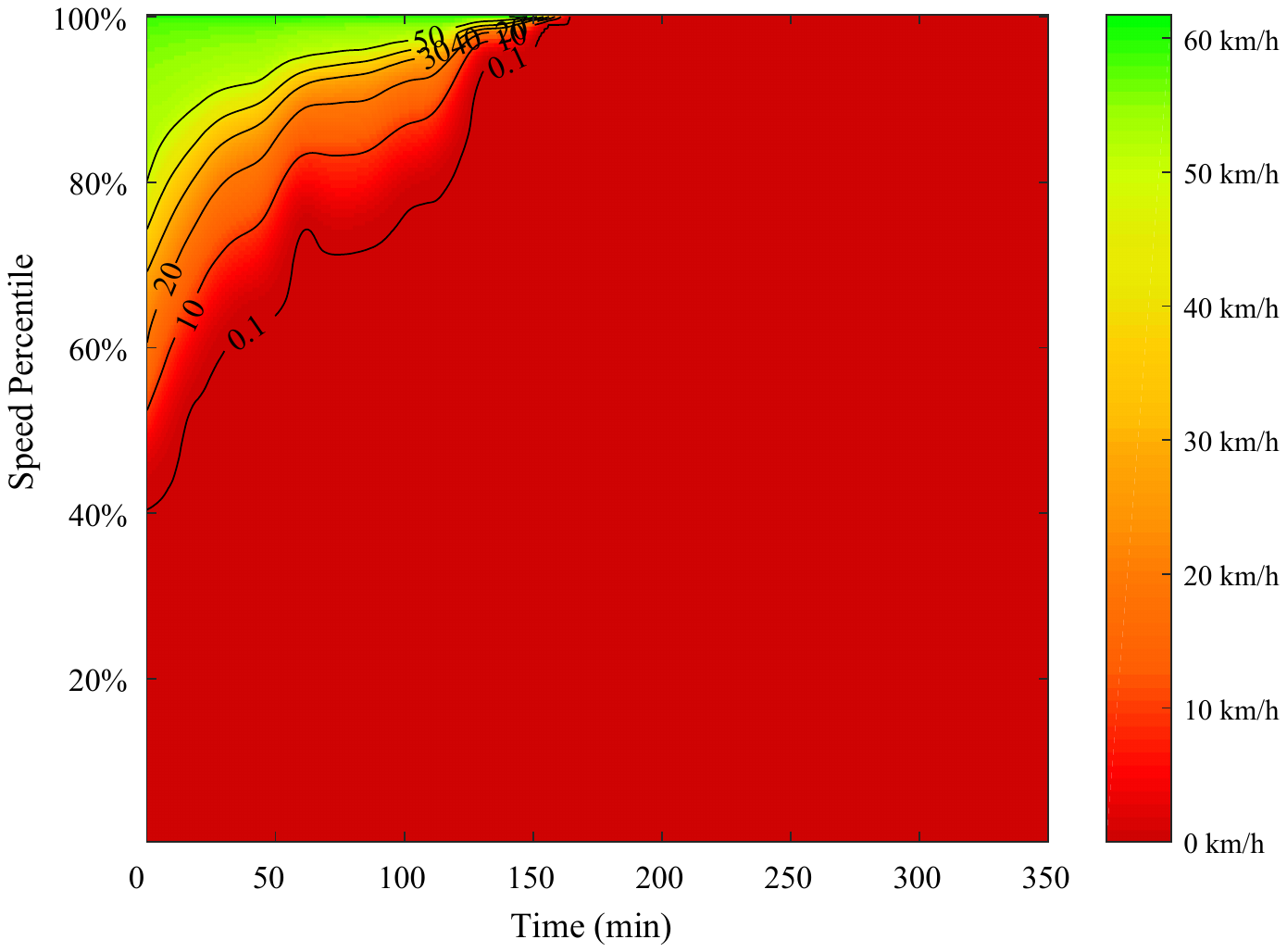} \hspace{0.1in}
		\includegraphics{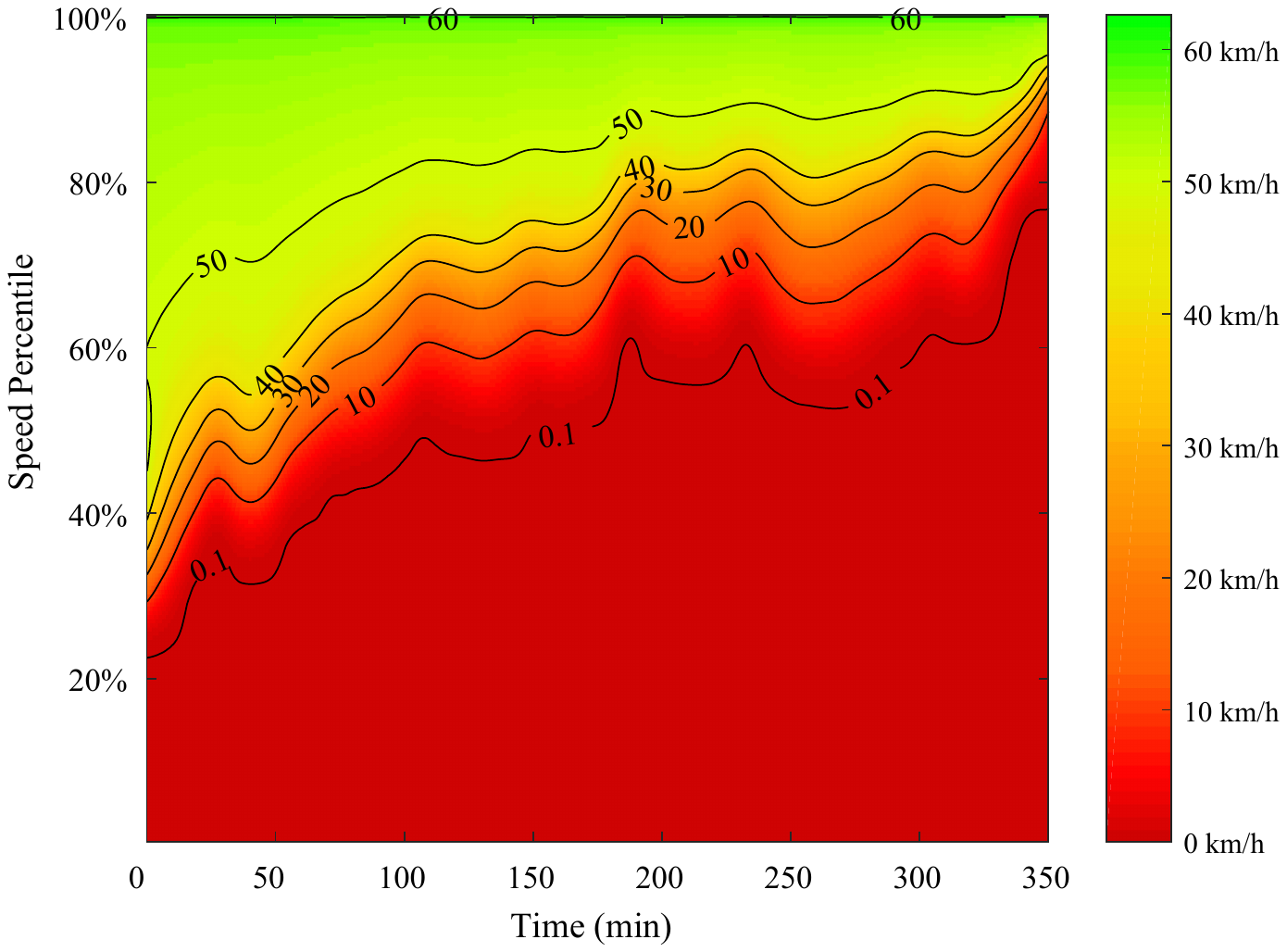}}
	
	{ \small(e) FT@1570vph \hspace{1.7in} (f) BP@1570vph }
	
	\caption{Network speed evolution, (a) fixed timing under a demand level of 1225 veh/h, (b) BP under a demand level of 1225 veh/h, (c) CABP under a demand level of 1225 veh/h, and (d) PWBP under a demand level of 1225 veh/h, (e) fixed timing under a demand level of 1570 veh/h, (f) BP under a demand level of 1570 veh/h.} 
	\label{F:speed1}
\end{figure}
\begin{figure}[h!]
	\centering	
	
	\resizebox{.9\textwidth}{!}{%
		\includegraphics{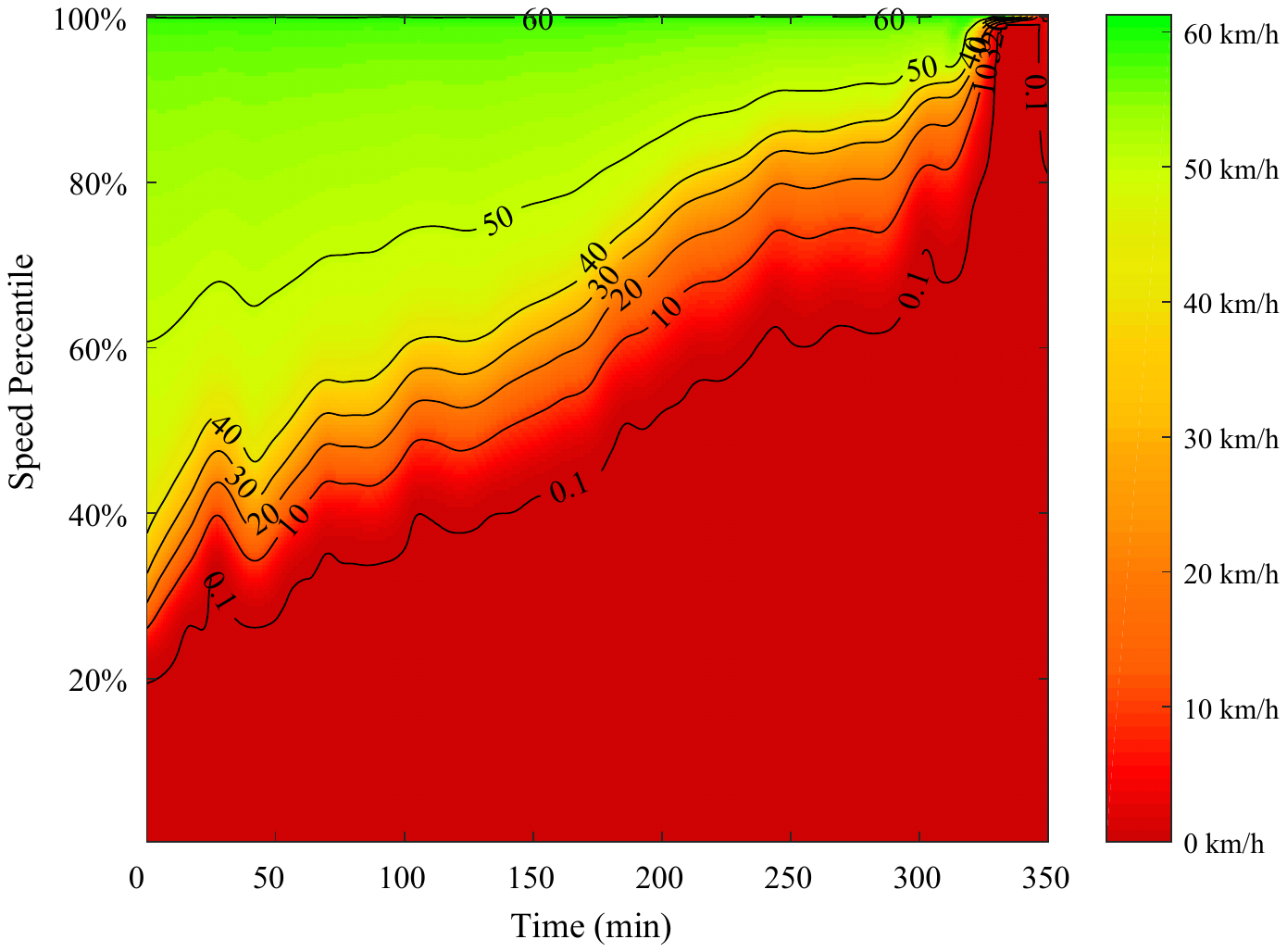} \hspace{0.1in}
		\includegraphics{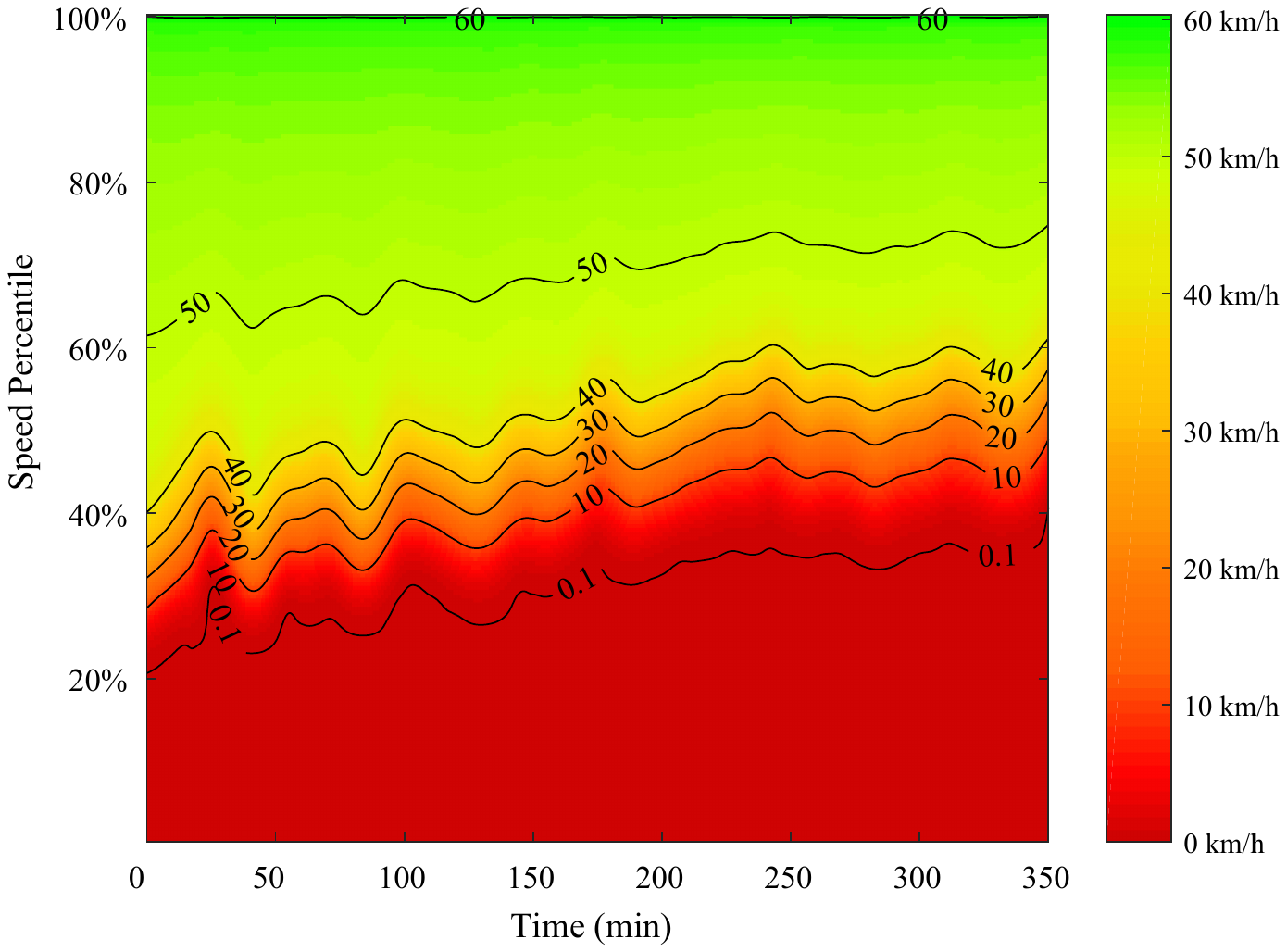}}
	
	{ \small(a) CABP@1570vph \hspace{1.4in} (b) PWBP@1570vph}
	
		\resizebox{.9\textwidth}{!}{%
		\includegraphics{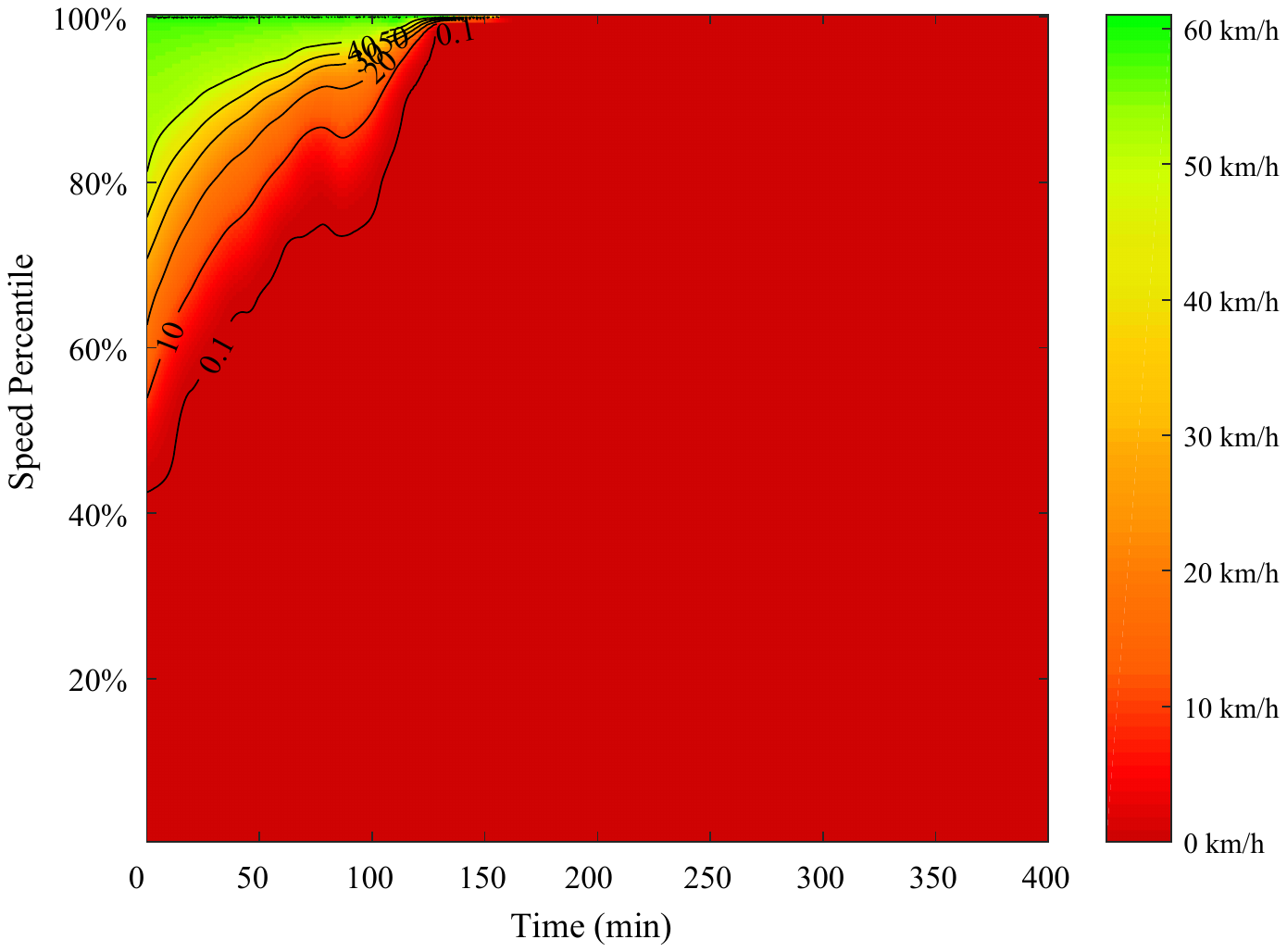} \hspace{0.1in}
		\includegraphics{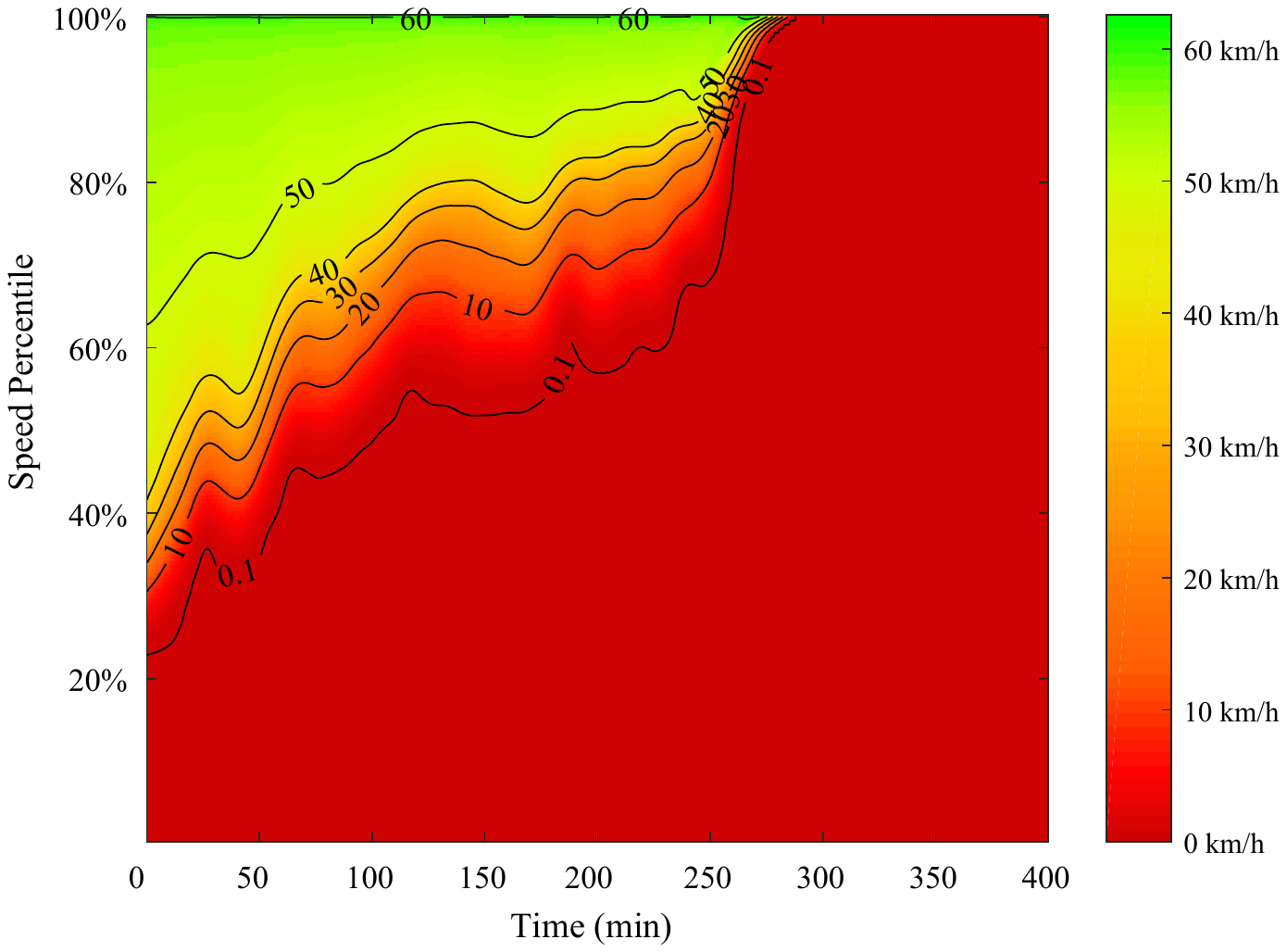}}
	
	{ \small(c) FT@1620vph \hspace{1.7in} (d) BP@1620vph }
	
	\resizebox{.9\textwidth}{!}{%
		\includegraphics{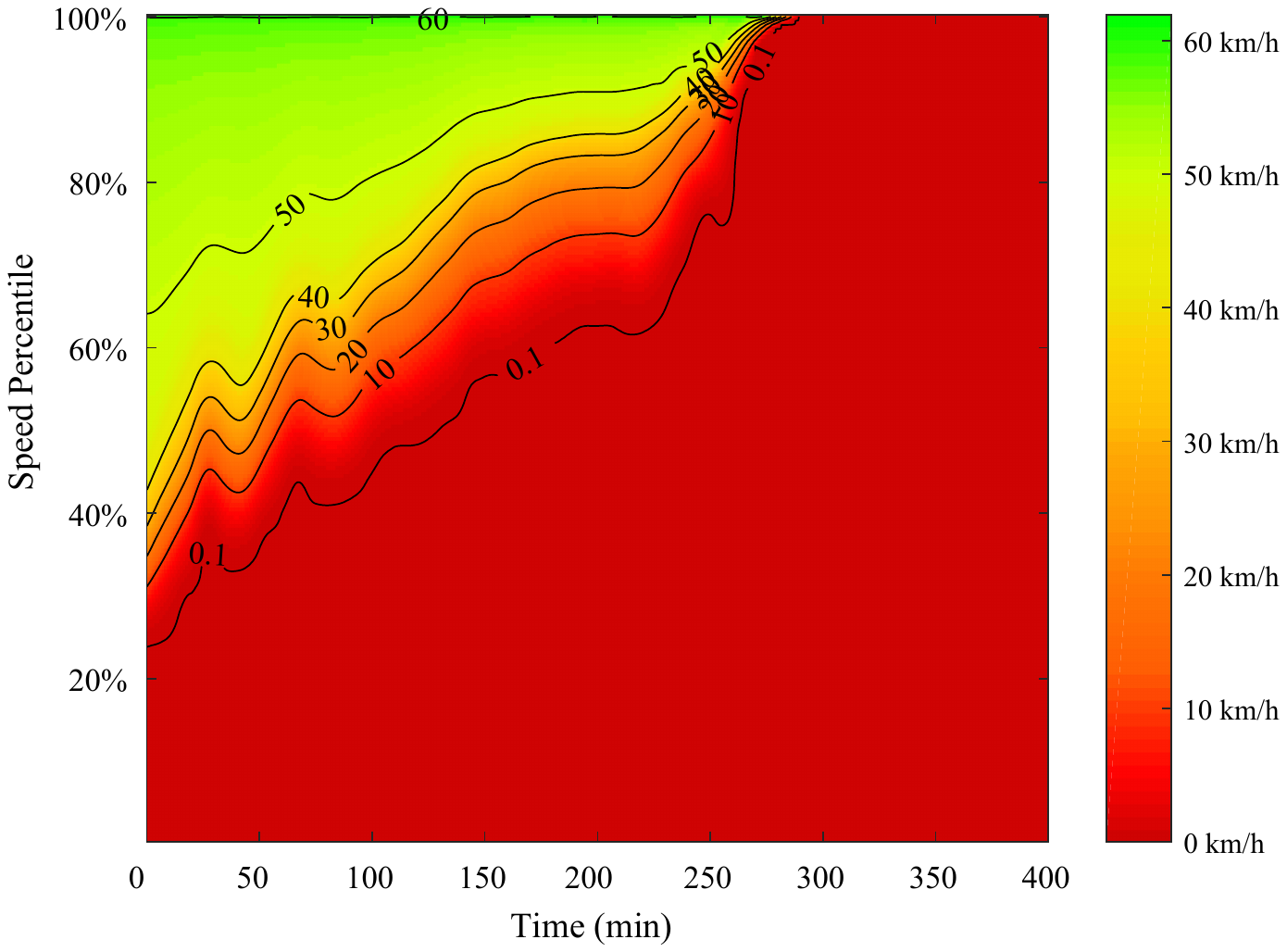} \hspace{0.1in}
		\includegraphics{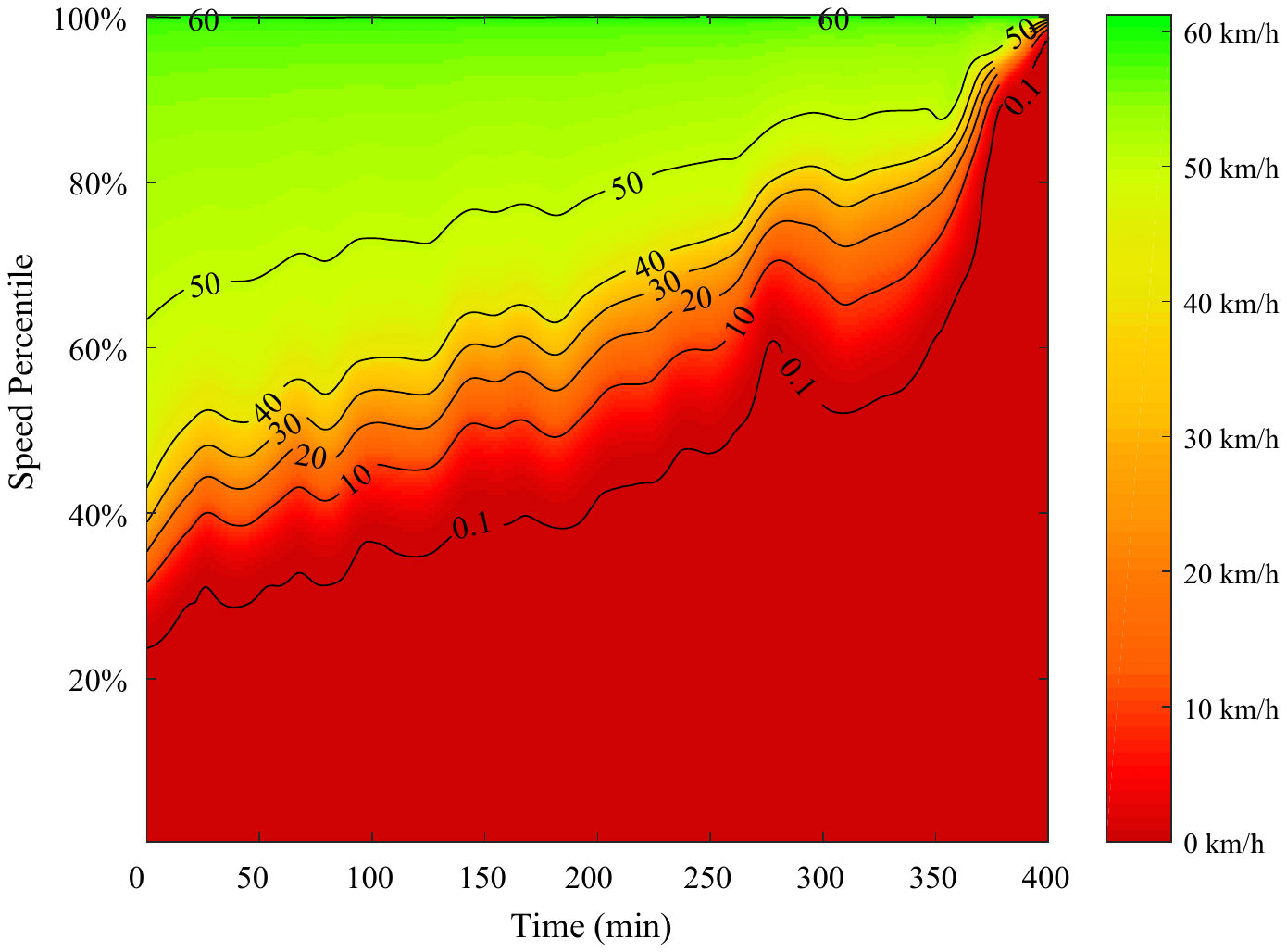}}
	
	{ \small (e) CABP@1620vph \hspace{1.4in} (f) PWBP@1620vph }
	
	\caption{Network speed evolution, (a) CABP under a demand level of 1570 veh/h, (b) PWBP under a demand level of 1570 veh/h, (c) fixed timing under a demand level of 1620 veh/h, and (d) BP under a demand level of 1620 veh/h, (e) CABP under a demand level of 1620 veh/h, (f) PWBP under a demand level of 1620 veh/h.} 
	\label{F:speed2}
\end{figure}
The horizontal axes in these figures are time and the vertical axes are percentage of vehicles traveling at or below the color-coded speeds.  Under the different demand levels, the network eventually becomes grid-locked (at different levels for the different control policies). Specifically, it takes about four hours until total network gridlock under a fixing timing plan when the demand reaches 1225 veh/h, under BP and CABP it takes approximately six hours (at 1570 veh/h) until gridlock, and for PWBP, it takes approximately seven hours.  This indicates that PWBP is more resilient than the other policies. \autoref{F:VehNo} shows how the total number of vehicles (stuck) in the network evolves with time.

\begin{figure}[h!]
	\centering
	\resizebox{0.65\textwidth}{!}{%
		\includegraphics{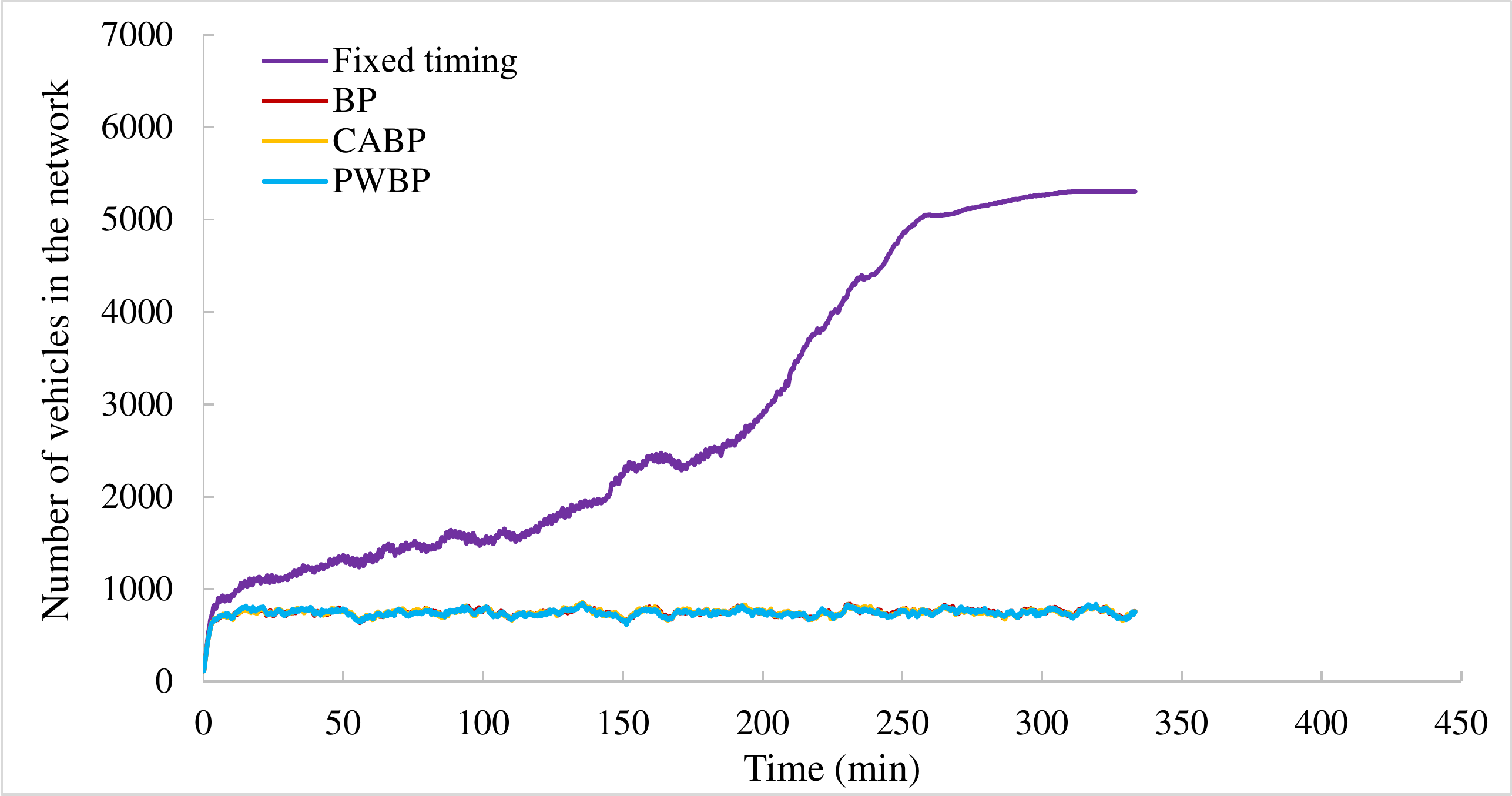}}
		
{ \small	(a) }

	\resizebox{0.65\textwidth}{!}{%
		\includegraphics{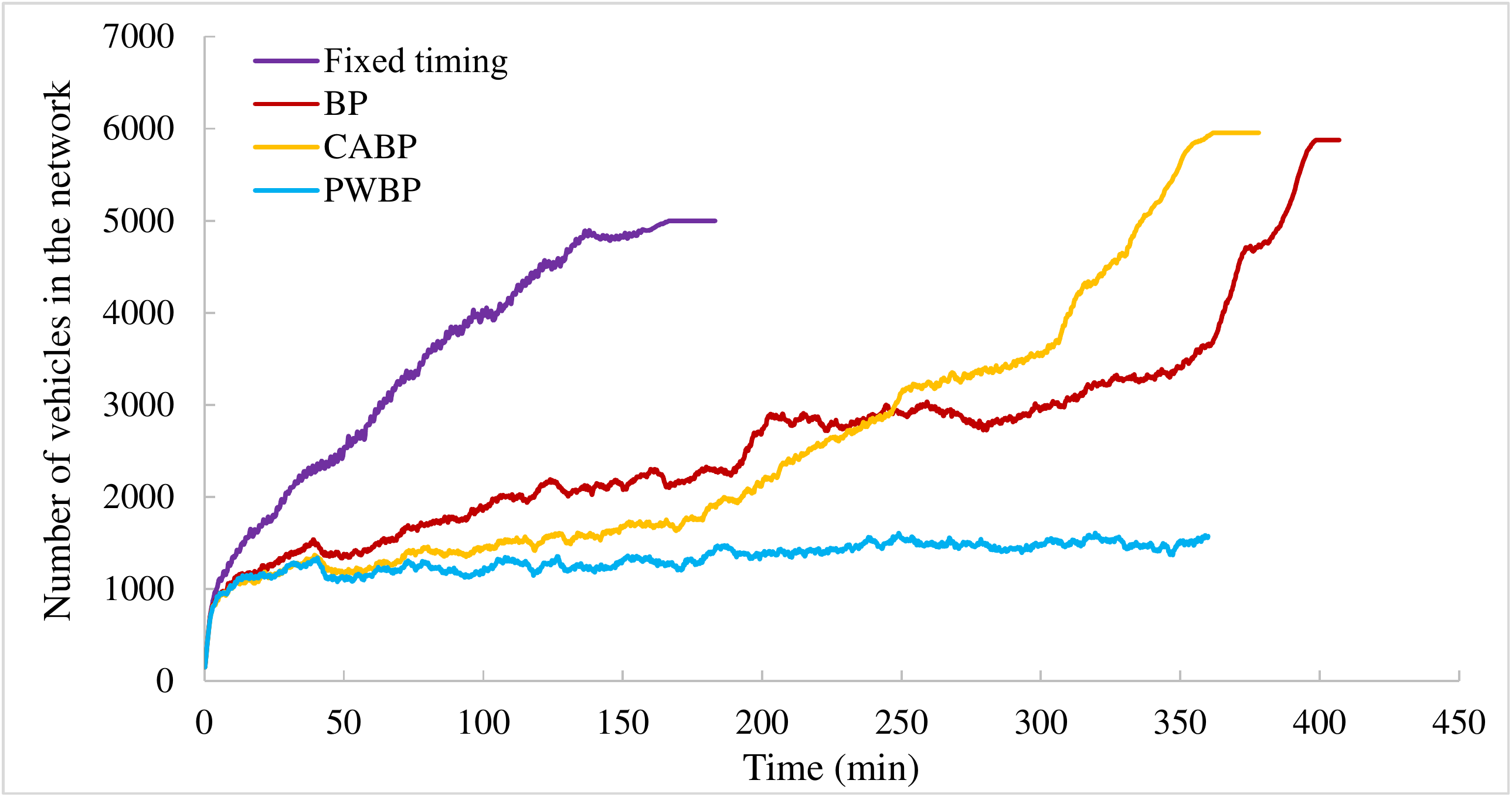}}
		
{ \small	(b) }
	
	\resizebox{0.65\textwidth}{!}{%
		\includegraphics{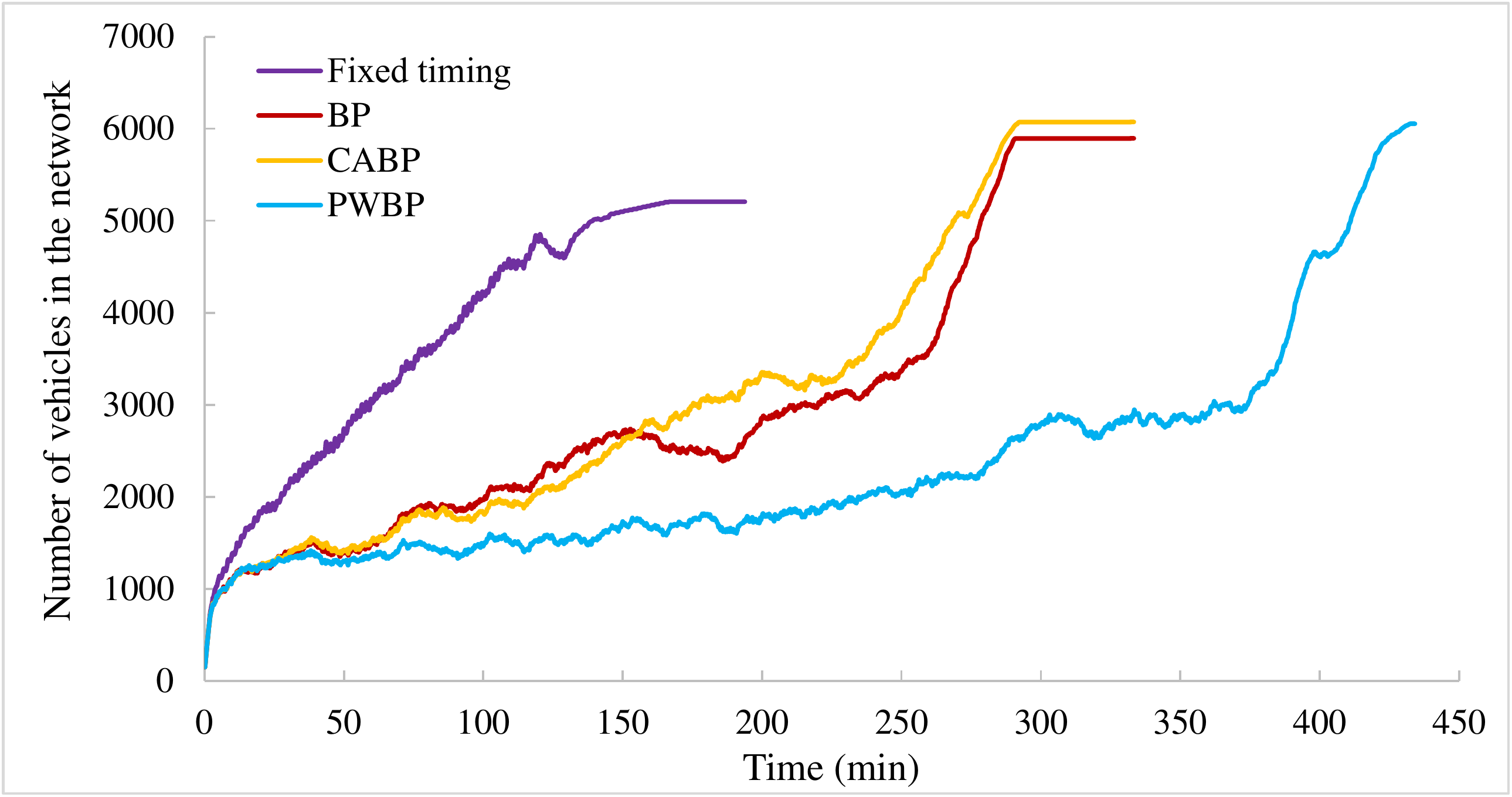}}
		
{ \small	(c) }
		
	\caption{Evolution of total numbers vehicles in the network under different control policies and demand levels of (a) 1225 veh/h, (b) 1570 veh/h, and (c) 1620 veh/h.} 
	\label{F:VehNo}
\end{figure}

\medskip

\textbf{\fontfamily{cmss}\selectfont Recoverability from congestion}. \autoref{F:recover} shows how different control policies recover from congestion. 
\begin{figure}[h!]
	\centering
	\resizebox{0.7\textwidth}{!}{%
		\includegraphics{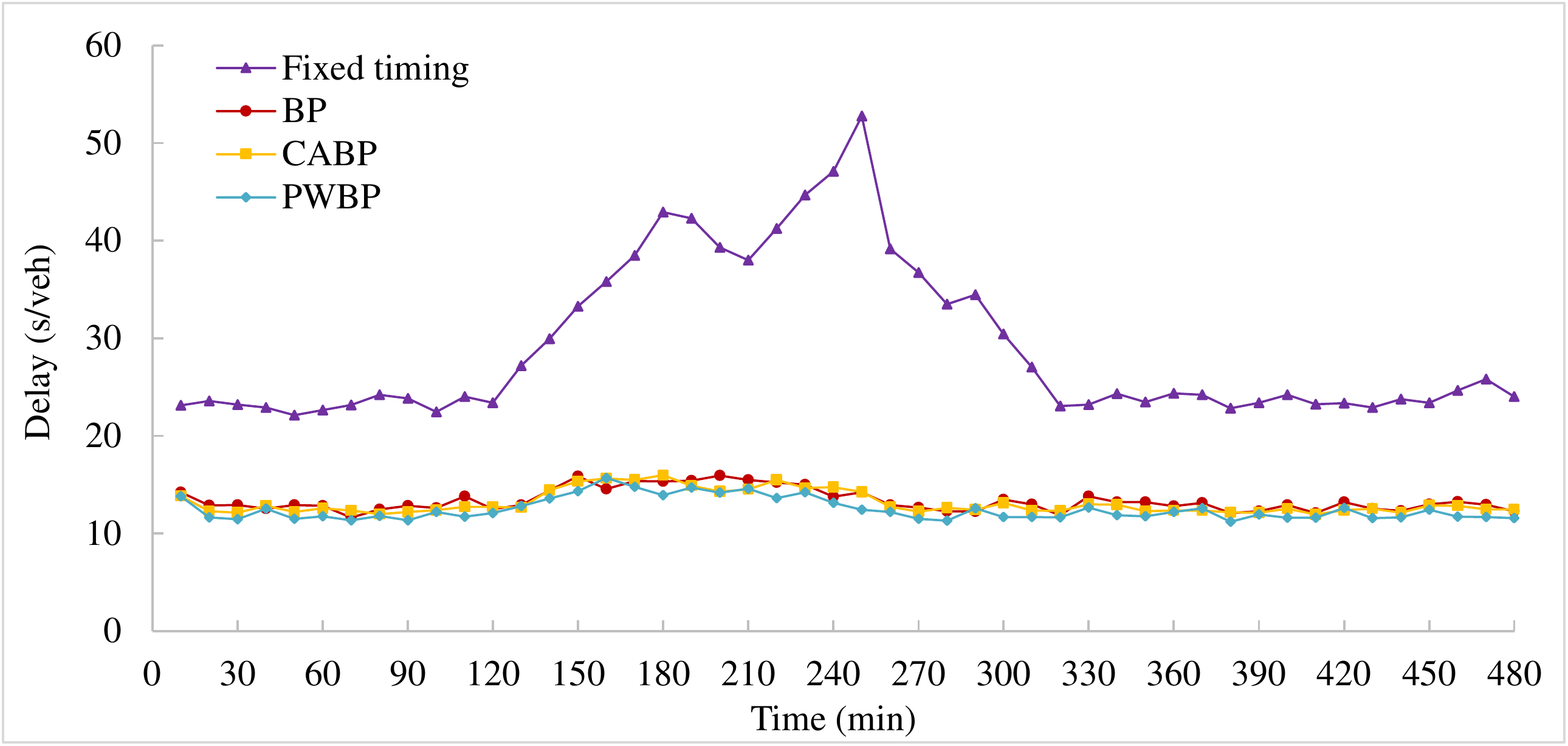}}
	
	{ \small	(a)	demand@1225vph	}
	
	\resizebox{.7\textwidth}{!}{%
		\includegraphics{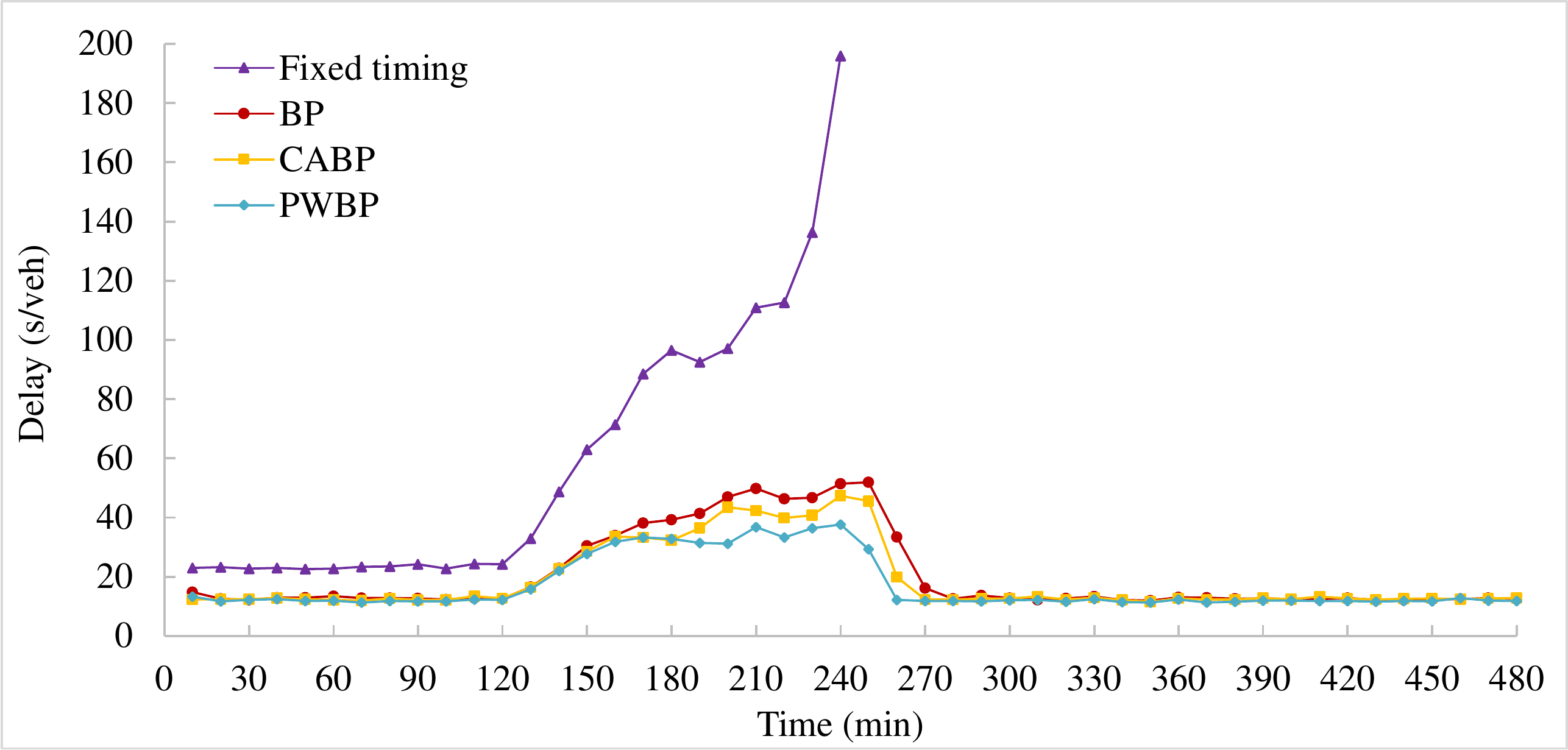}}
	
	{ \small	(b)	demand@1570vph	}
	
	\resizebox{.7\textwidth}{!}{%
		\includegraphics{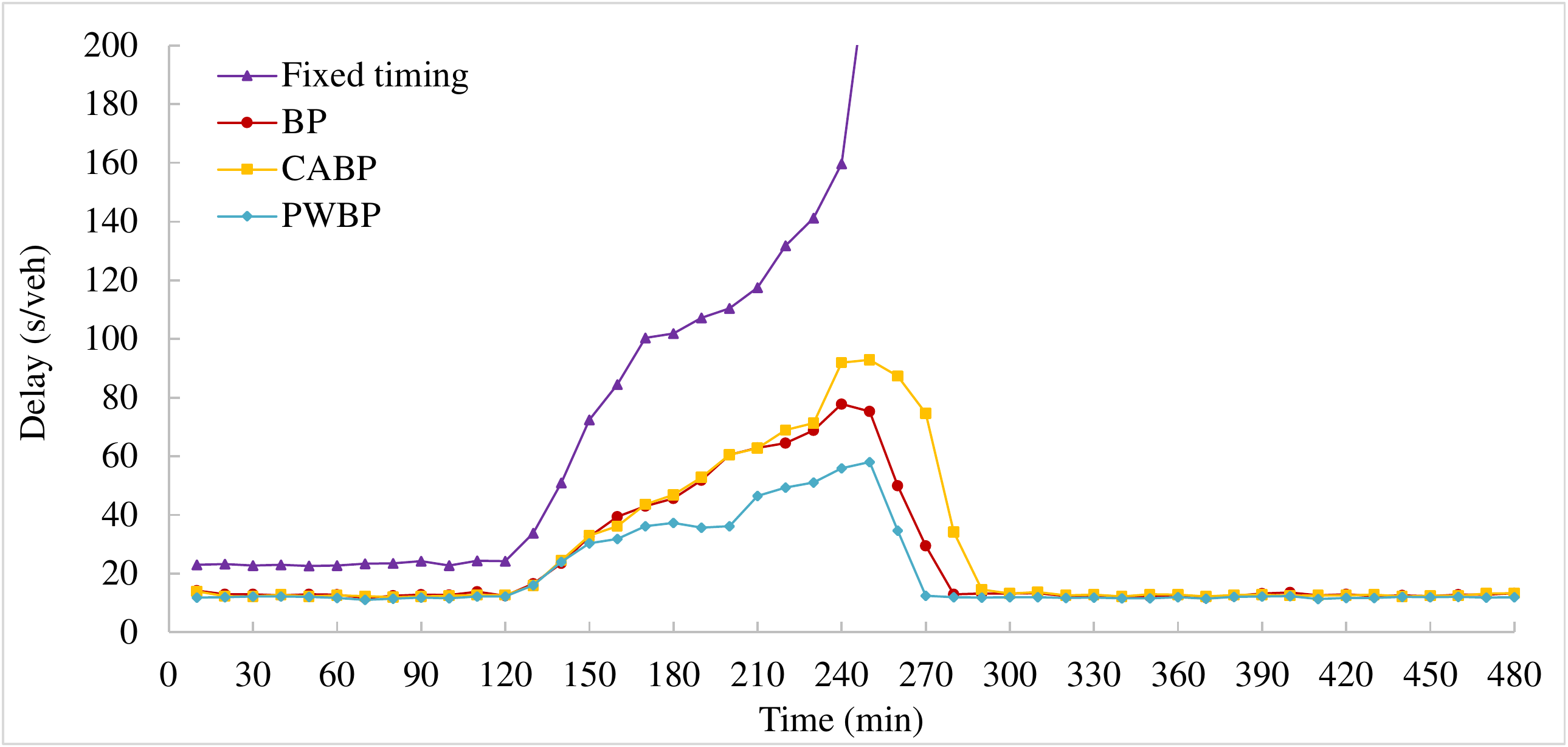}}
	
	{ \small	(c)	demand@1620vph	}
	
	\caption{Average network delay under varying peak period demands.} 
	\label{F:recover}
\end{figure}
The total simulation time is eight hours, the time interval from $t=$ 120 min to $t=$ 240 min is set as a congested period, during which demand levels are set to the deterioration bounds. We set a demand of 1000 veh/h for the remainder of the eight-hour simulation time.  \autoref{F:recover}a, b and c only differ in the demand levels during the congested period. The congested period demand levels are 1225, 1570 and 1620 veh/h in Fig.\ref{F:recover}a, b and c, respectively. 
According to \autoref{F:recover}, for all tested scenarios, PWBP outperforms the other three control policies in terms of both delay and recovery time.  Even when the peak demand reaches 1620 veh/h, PWBP only needs 30 min to recover from the congestion, while fixed timing needs about 90 min to recover with a peak demand of 1225 veh/h.  Note that when the peak demand reaches 1570 and 1620 veh/h, the delay levels under fixed timing becomes too high and hence cannot be shown in \autoref{F:recover}b and c. We also see that using fixed timing, the network does \textit{not} eventually recover from congestion.

\medskip

\textbf{\fontfamily{cmss}\selectfont Response to an incident}. We investigate the performance of PWBP in the presence of an incident located at the yellow spot in \autoref{F:network}.  The incident is located half-way between intersections A and B, along a 3-lanes arc. We test scenarios where one lane and two lanes are blocked for a duration of one and two hours, and under different demand levels. \autoref{F:incident1lane} shows the results for one-lane blocked cases when demand is 1500 veh/h. 
\begin{figure}[h!]
	\centering
	\resizebox{0.65\textwidth}{!}{%
		\includegraphics{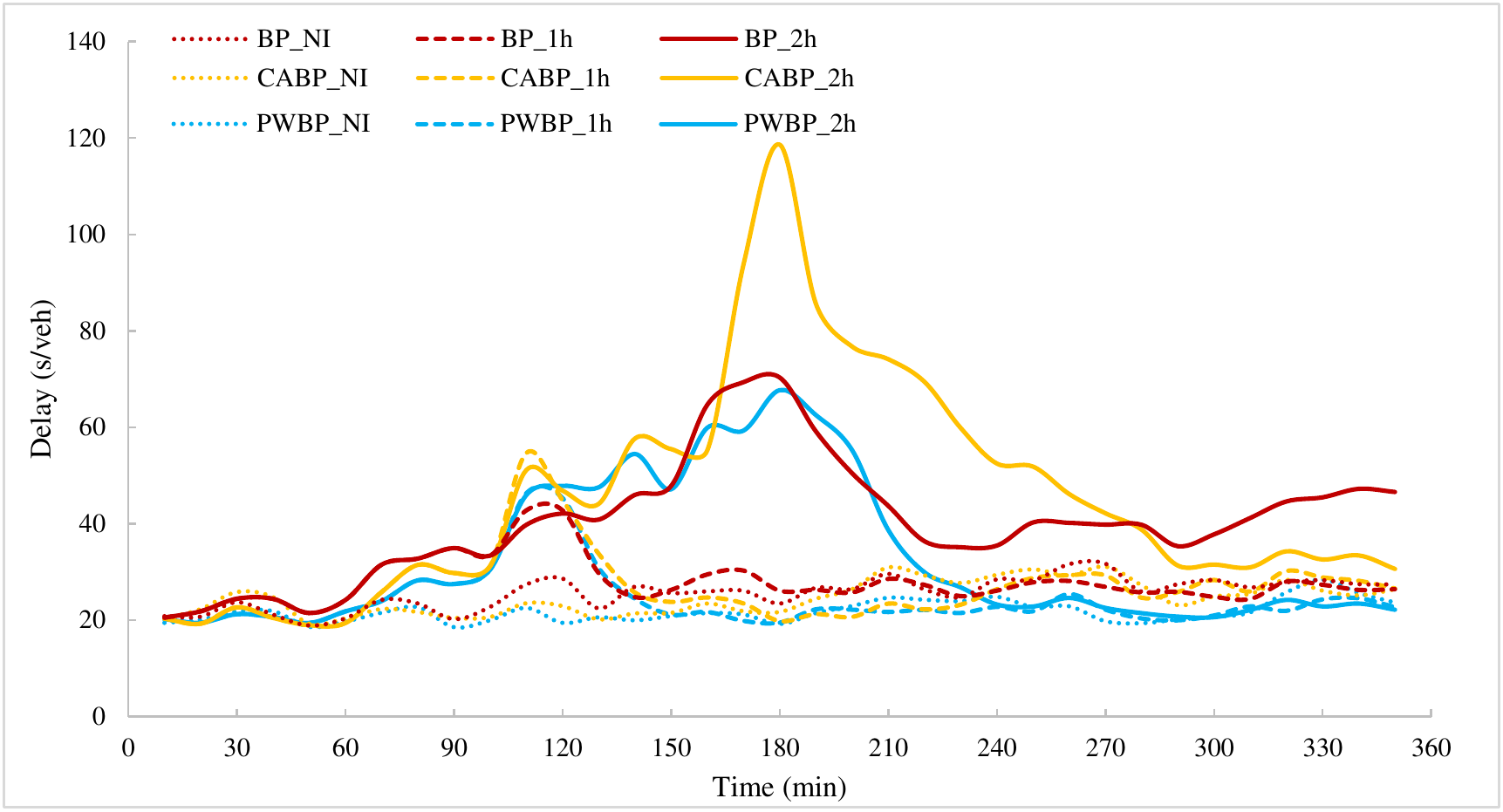}}
	
	\caption{Delays associated with different policies with one lane blocked by the incident under a demand level of 1500 veh/h.} 
	\label{F:incident1lane}
\end{figure}
Fixed timing is not included here since 1500 veh/h is beyond its capacity region and the delays will only increase without bound.  Dotted lines represent the non-incident cases, while dashed and solid lines represent the incident cases with one and two hour durations, respectively. The incident starts at the 60th min in both cases.  When the incident duration is one hour, we see that the network recovers within 30 minutes after the incident is cleared under BP, CABP and PWBP.  However, when the incident duration is two hours, PWBP only needs one hour to completely recover, while congestion in the network persists for significantly longer under BP and CABP: the effects of the incident are still felt in the network three hours after the incident is cleared (compared to the no-incident scenarios).

\autoref{F:incident2lane} shows the two-lanes-blocked cases when demand is 1200 veh/h. 
\begin{figure}[h!]
	\centering
	\resizebox{0.65\textwidth}{!}{%
		\includegraphics{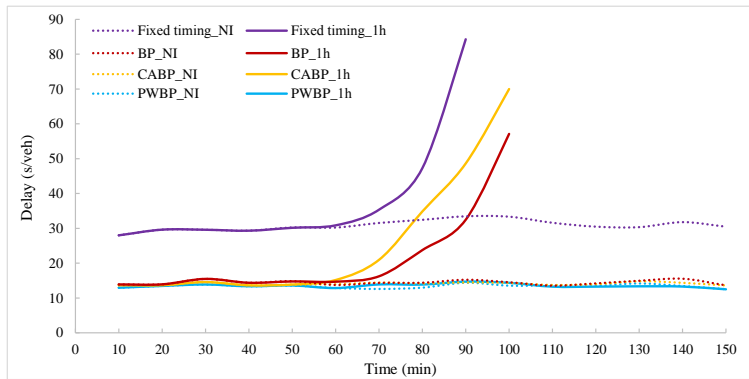}}
	
	\caption{Delays associated with different policies with two lanes blocked by the incident under a demand level of 1200 veh/h.} 
	\label{F:incident2lane}
\end{figure}
The network fails to recover under fixed timing, BP and CABP control when the incident blocks two of the three lanes.  The delays increase sharply and the whole network becomes gridlocked.  In contrast, using PWBP control the incident only has limited impact on network delay and the network recovers quickly from congestion. To further illustrate this, \autoref{F:FT} -- \autoref{F:PWBP} provide snapshots of the spatial distribution of speeds at four time instants after an incident that blocks two lanes (which lasts for 1 hour), one figure for each of the four control schemes.\footnote{Readers can find videos covering the entire simulation at \href{https://youtu.be/KKmwFBbdbT0}{FT}, \href{https://youtu.be/iYyOxJLHITk}{BP}, \href{https://youtu.be/07VAjKcdMD4}{CABP}, and \href{https://youtu.be/QPPn89ICwzw}{PWBP}.}
\begin{figure}[h!]
	\centering	

	\resizebox{1.0\textwidth}{!}{%
		\includegraphics{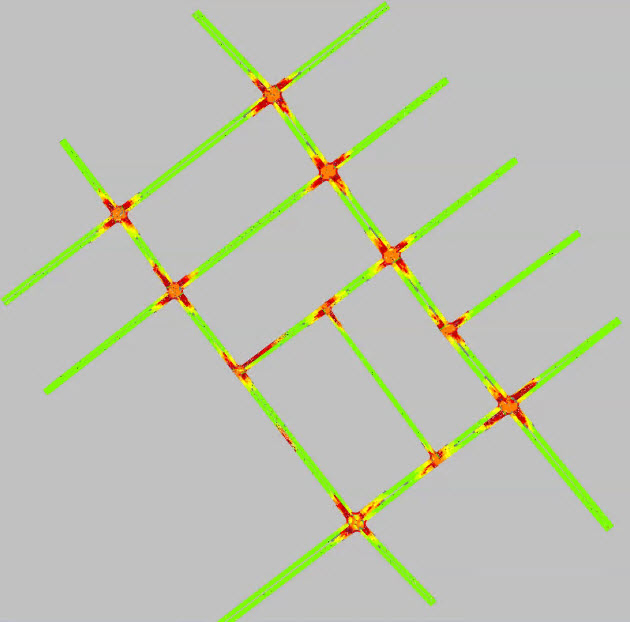} \hspace{0.1in}
		\includegraphics{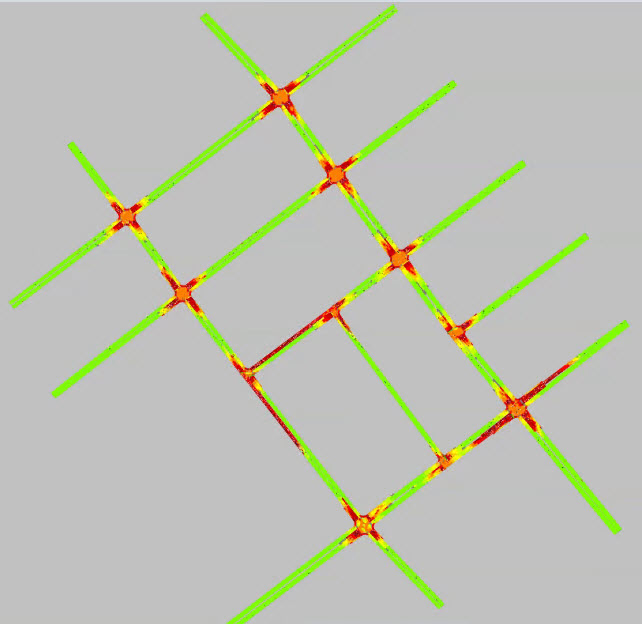}}
	
	{ \small(a) FT@10min \hspace{1.5in} (b) FT@40min}
	
		\resizebox{1.0\textwidth}{!}{%
		\includegraphics{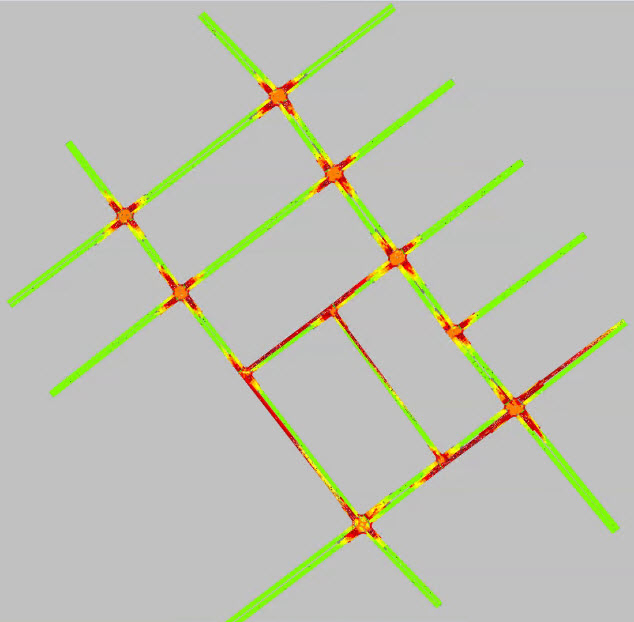} \hspace{0.1in}
		\includegraphics{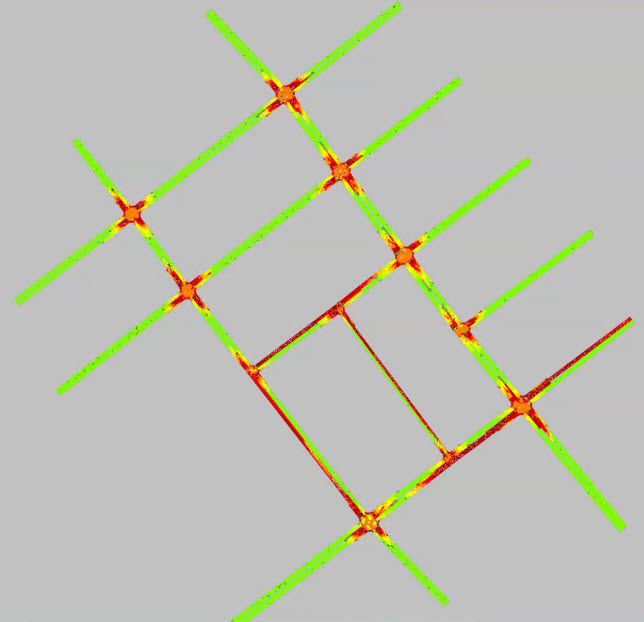}}
	
	{ \small(c) FT@70min \hspace{1.5in} (d) FT@90min}
	
	\caption{Network speed spatial distribution under fixed timing control, (a) 10 minutes after the incident takes place (b) 40 minutes after the incident takes place, (c) 70 minutes after the incident takes place, and (d) 90 minutes after the incident takes place.} 
	\label{F:FT}
\end{figure}
\begin{figure}[h!]
	\centering	

	\resizebox{1.0\textwidth}{!}{%
		\includegraphics{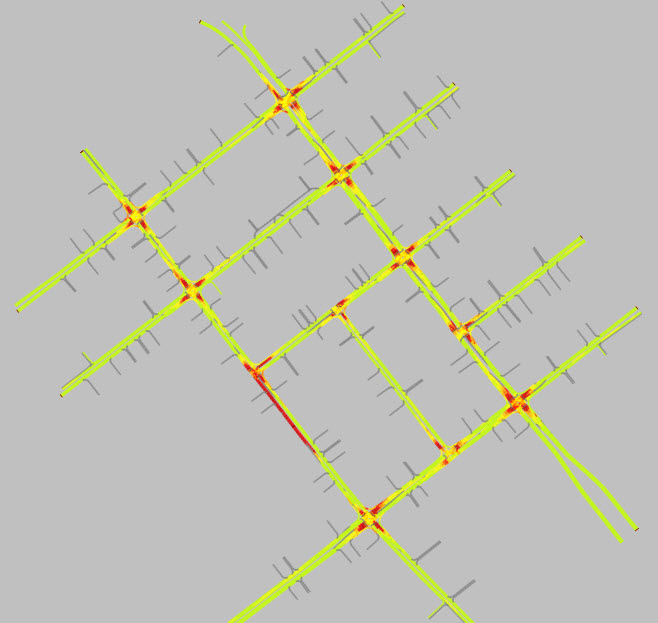} \hspace{0.1in}
		\includegraphics{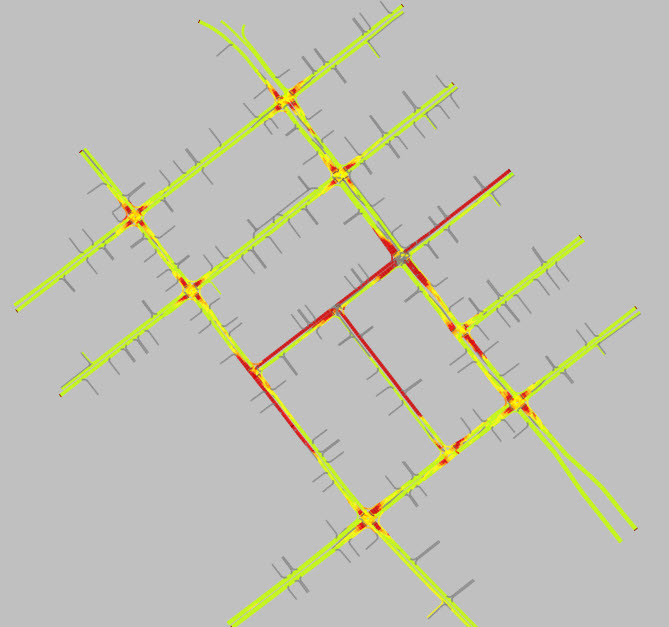}}
	
	{ \small(a) BP@10min \hspace{1.5in} (b) BP@40min}
	
		\resizebox{1.0\textwidth}{!}{%
		\includegraphics{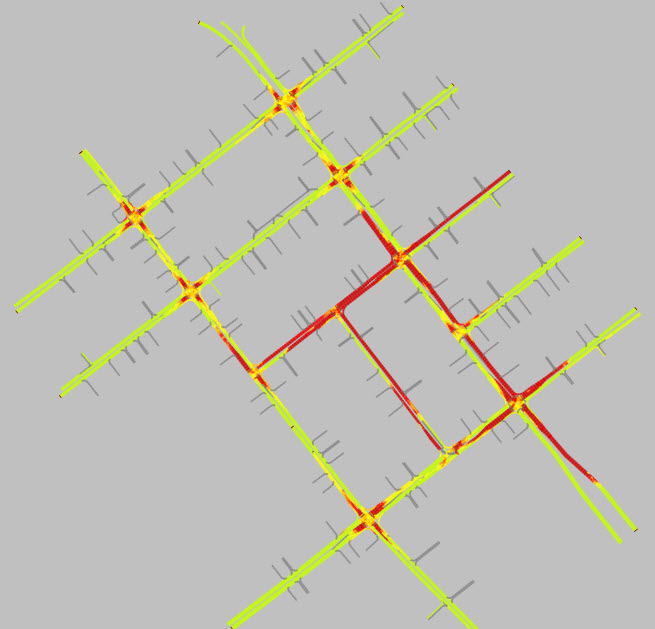} \hspace{0.1in}
		\includegraphics{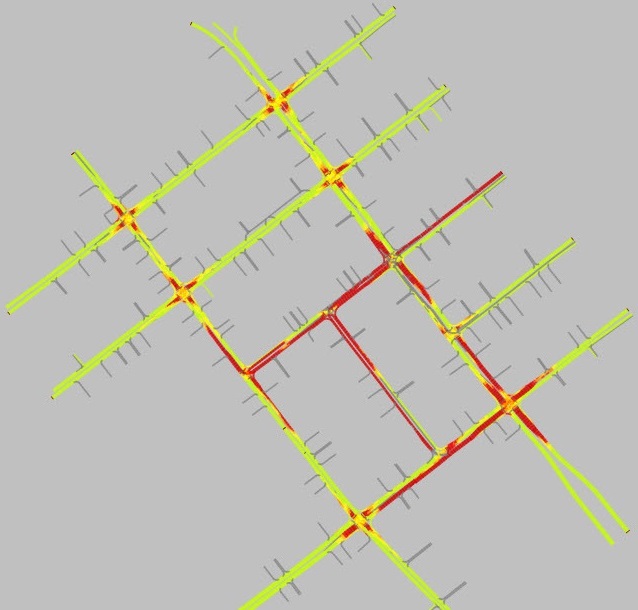}}
	
	{ \small(c) BP@70min \hspace{1.5in} (d) BP@90min}	
	
	\caption{Network speed spatial distribution under BP control, (a) 10 minutes after the incident takes place, (b) 40 minutes after the incident takes place, (c) 70 minutes after the incident takes place, and (d) 90 minutes after the incident takes place.} 
	\label{F:BP}
\end{figure}
\begin{figure}[h!]
	\centering	

	\resizebox{1.0\textwidth}{!}{%
		\includegraphics{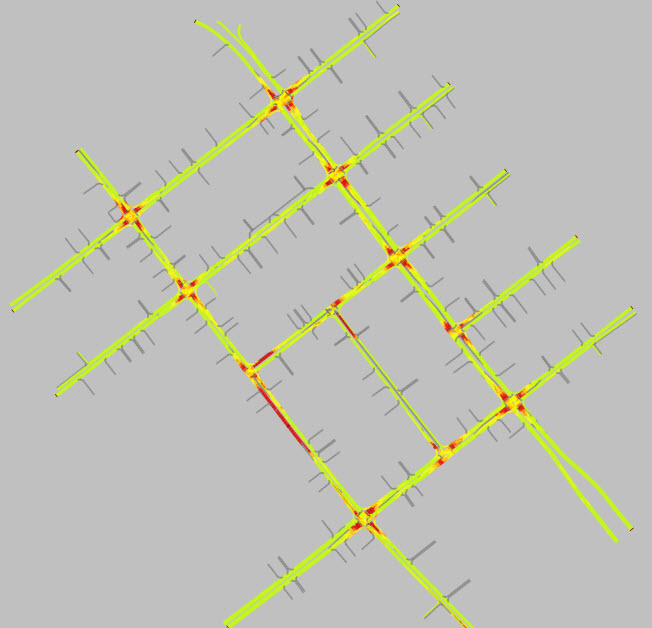} \hspace{0.1in}
		\includegraphics{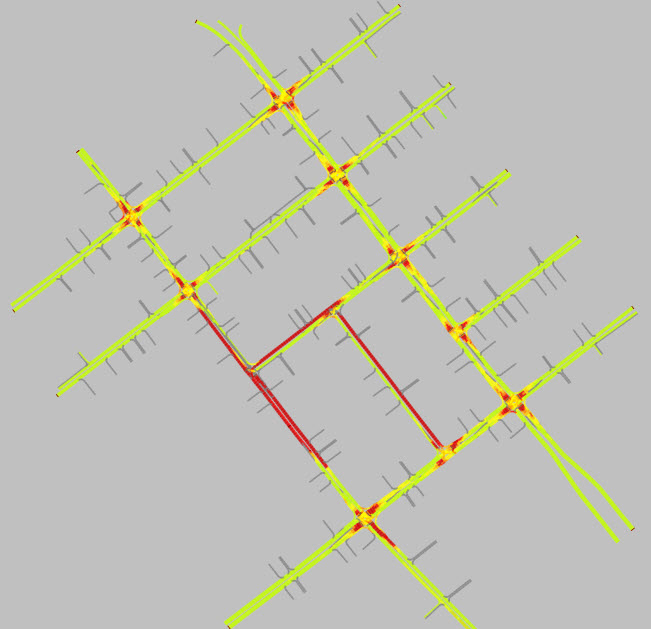}}
	
	{ \small(a) CABP@10min \hspace{1.5in} (b) CABP@40min}
	
		\resizebox{1.0\textwidth}{!}{%
		\includegraphics{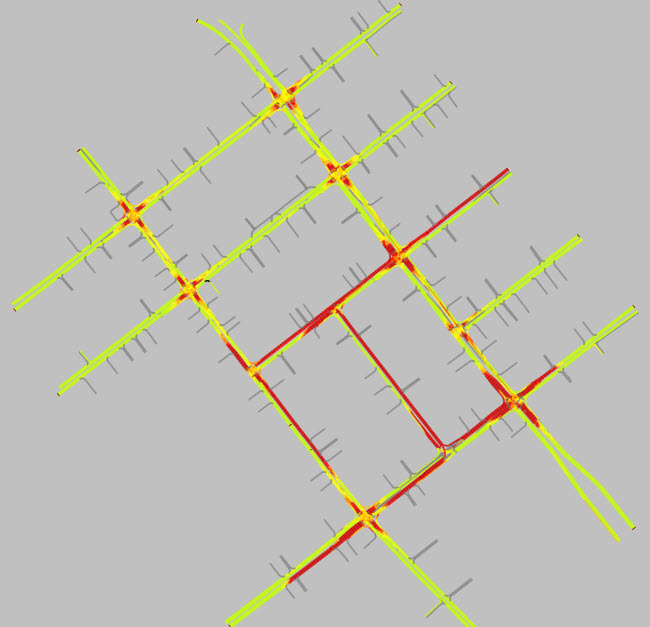} \hspace{0.1in}
		\includegraphics{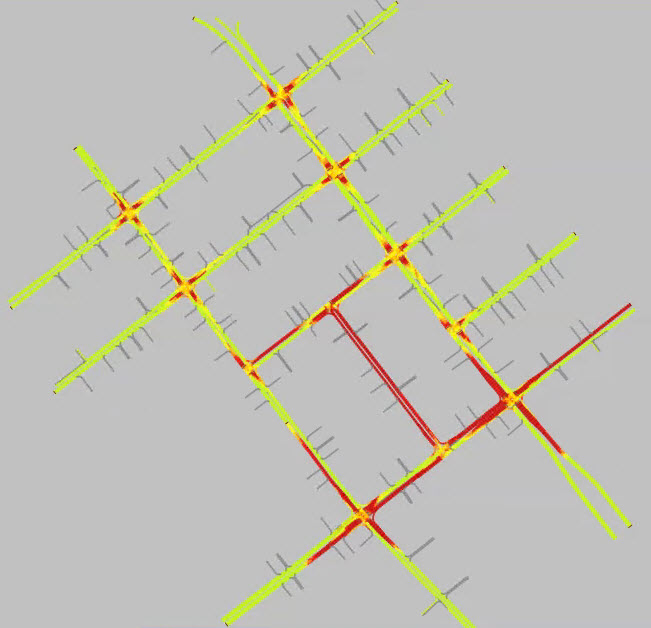}}
	
	{ \small(c) CABP@70min \hspace{1.5in} (d) CABP@90min}	
	
	\caption{Network speed spatial distribution under CABP control, (a) 10 minutes after the incident takes place, (b) 40 minutes after the incident takes place, (c) 70 minutes after the incident takes place, and (d) 90 minutes after the incident takes place.} 
	\label{F:CABP}
\end{figure}
\begin{figure}[h!]
	\centering	

	\resizebox{1.0\textwidth}{!}{%
		\includegraphics{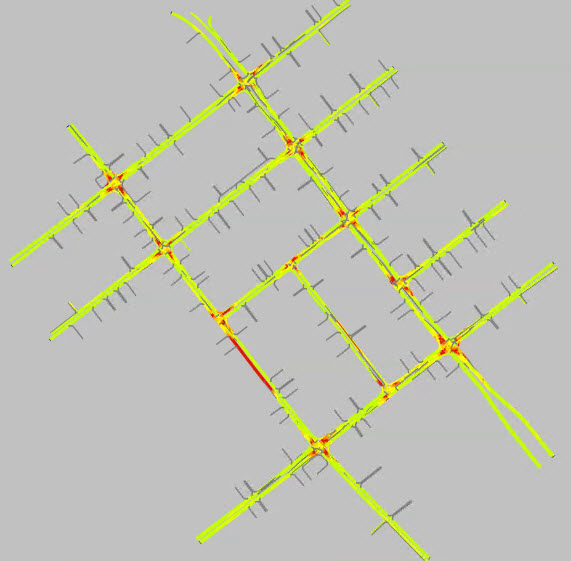} \hspace{0.1in}
		\includegraphics{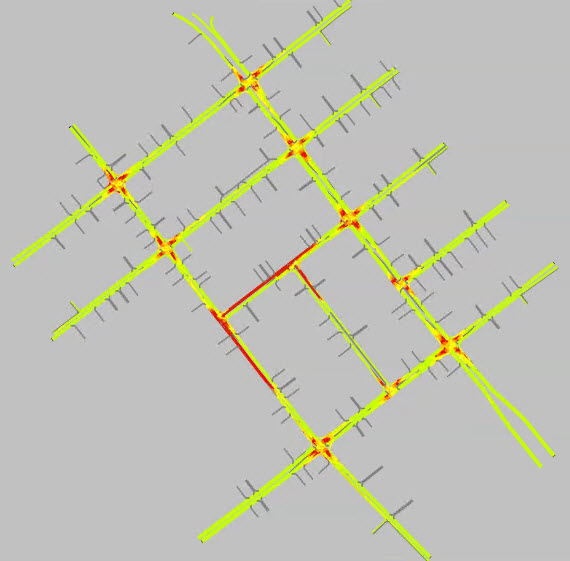}}
	
	{ \small(a) PWBP@10min \hspace{1.5in} (b) PWBP@40min}
	
		\resizebox{1.0\textwidth}{!}{%
		\includegraphics{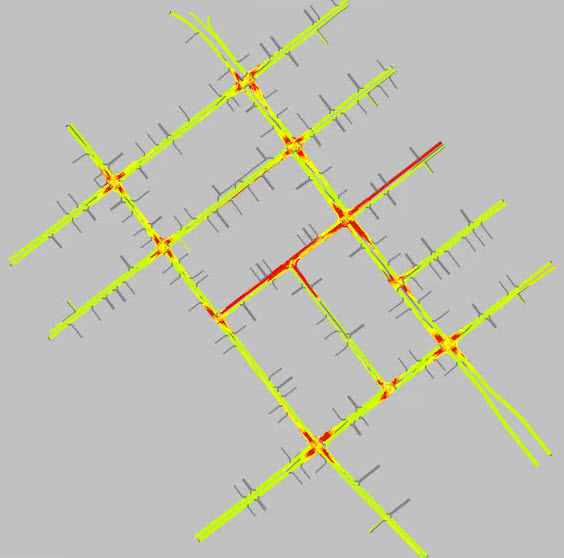} \hspace{0.1in}
		\includegraphics{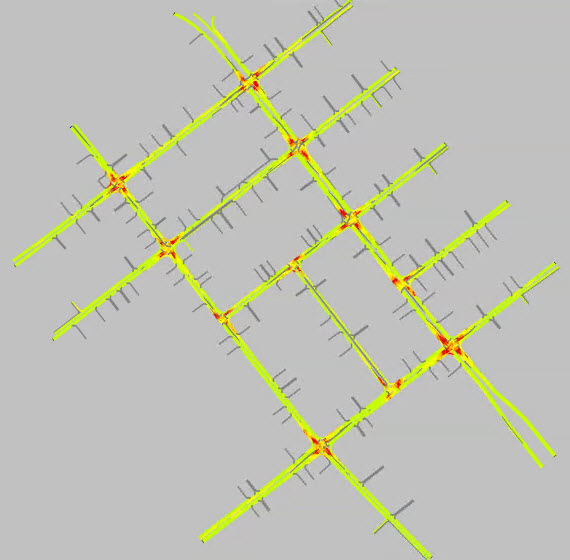}}
	
	{ \small(c) PWBP@70min \hspace{1.5in} (d) PWBP@90min}	
	
	\caption{Network speed spatial distribution under PWBP control, (a) 10 minutes after the incident takes place, (b) 40 minutes after the incident takes place, (c) 70 minutes after the incident takes place, and (d) 90 minutes after the incident takes place.} 
	\label{F:PWBP}
\end{figure}

The reason of the performance difference between BP, CABP and PWBP originates from how the model deals with scenarios in \autoref{F:NWC}b and \autoref{F:NWC}c. With an incident located half-way between intersections A to B, the incident results in congested conditions (queueing) between the incident location and intersection A and low volume traffic between incident location and intersection B. When the queue spills back to intersection A (similar to \autoref{F:NWC}b), PWBP will forbid the movements from A to B, while BP and CABP fail to capture the spillback dynamics.  In addition, PWBP does not allocate green time at intersection B to the movement from A when there are actually no vehicle near the stop line (similar to \autoref{F:NWC}c), while BP and CABP may still allocate green time to this movement.

\section{Conclusion and outlook}
\label{S:Conc}

Backpressure (BP) based intersection control is a control policy that was originally developed for communications networks. Many of the assumptions made in the original theory were adopted in the BP applications to traffic networks despite them not being applicable to vehicular traffic.  Specifically, infinite arc capacities, point queues, independence of commodities (turning movements), and there being no analogue for start-up lost times in communications networks.  These are critical features in intersection control.  To accommodate these features, we develop a backpressure control technique that is based on macroscopic traffic flow, which we refer to as position-weighted backpressure (PWBP).  PWBP considers the spatial distribution of vehicles when calculating the backpressure weights.

The proposed PWBP control policy is tested using a microscopic traffic simulation model of an eleven-intersection network in Abu Dhabi.  Comparisons against coordinated and optimized fixed signal timing, standard BP, and a capacity-aware variant of BP (CABP) were carried out.  The results indicate that PWBP can accommodate higher demand levels than the other three control policies and outperforms them in terms of total network delay, congestion propagation speed,  recoverability from heavy congestion, and response to an incident.

This paper has focused on prioritization of movements at network intersections.  As a possible future research direction, this can be extended to include real-time route guidance. Another possible avenue for future research is a combined perimeter/interior control policy.  Perimeter control \citep{yang2017multi,chiabaut2018perimeter,ortigosa2014study,ambuhl2018case,he2018providing,yang2018perimeter} is emerging as a useful tool for network control at a macroscopic level.  A study of the trade-offs between the capacity region of an intersection control policy and perimeter control could serve as a powerful network-wide control tool. 

\section*{Acknowledgments}
\label{Ack}
This work was funded in part by the C2SMART Center, a Tier 1 USDOT University Transportation Center, and in part by the New York University Abu Dhabi Research Enhancement Fund.  The authors also wish to acknowledge the Abu Dhabi Department of Transportation for their support.	
	
	
	
	
	\bibliographystyle{plain}
	\bibliography{refs_R1}
	
	
	
	
	
	

\end{document}